\newtheorem{theorem}{\textbf{Theorem}}
\newtheorem{corollary}{\textbf{Corollary}}
\newtheorem{proposition}{\textbf{Proposition}}
\newtheorem{definition}{\textbf{Definition}}
\newtheorem{lemma}{Lemma}
\newtheorem{remark}{Remark}
\newcommand{\dv}{\mathbf} 
\newcommand{\mc}{\mathcal} 
\newcommand{\qed}{\hfill \ensuremath{\Box}}
\DeclareMathAlphabet{\eurm}{U}{eur}{m}{n}
\DeclareMathAlphabet{\mathbsf}{OT1}{cmss}{bx}{n}
\DeclareMathAlphabet{\mathssf}{OT1}{cmss}{m}{sl}
\DeclareMathAlphabet{\mathcsf}{OT1}{cmss}{sbc}{n}
\DeclareSymbolFont{bsfletters}{OT1}{cmss}{bx}{n}  
\DeclareSymbolFont{ssfletters}{OT1}{cmss}{m}{n}
\DeclareMathSymbol{\bsfGamma}{0}{bsfletters}{'000}
\DeclareMathSymbol{\ssfGamma}{0}{ssfletters}{'000}
\DeclareMathSymbol{\bsfDelta}{0}{bsfletters}{'001}
\DeclareMathSymbol{\ssfDelta}{0}{ssfletters}{'001}
\DeclareMathSymbol{\bsfTheta}{0}{bsfletters}{'002}
\DeclareMathSymbol{\ssfTheta}{0}{ssfletters}{'002}
\DeclareMathSymbol{\bsfLambda}{0}{bsfletters}{'003}
\DeclareMathSymbol{\ssfLambda}{0}{ssfletters}{'003}
\DeclareMathSymbol{\bsfXi}{0}{bsfletters}{'004}
\DeclareMathSymbol{\ssfXi}{0}{ssfletters}{'004}
\DeclareMathSymbol{\bsfPi}{0}{bsfletters}{'005}
\DeclareMathSymbol{\ssfPi}{0}{ssfletters}{'005}
\DeclareMathSymbol{\bsfSigma}{0}{bsfletters}{'006}
\DeclareMathSymbol{\ssfSigma}{0}{ssfletters}{'006}
\DeclareMathSymbol{\bsfUpsilon}{0}{bsfletters}{'007}
\DeclareMathSymbol{\ssfUpsilon}{0}{ssfletters}{'007}
\DeclareMathSymbol{\bsfPhi}{0}{bsfletters}{'010}
\DeclareMathSymbol{\ssfPhi}{0}{ssfletters}{'010}
\DeclareMathSymbol{\bsfPsi}{0}{bsfletters}{'011}
\DeclareMathSymbol{\ssfPsi}{0}{ssfletters}{'011}
\DeclareMathSymbol{\bsfOmega}{0}{bsfletters}{'012}
\DeclareMathSymbol{\ssfOmega}{0}{ssfletters}{'012}
\begin{document}

\fontencoding{OT1}\fontsize{9.4}{11.25pt}\selectfont
\title{Capacity Region of Multiple Access Channel with States Known Noncausally at One Encoder and Only Strictly Causally at the Other Encoder}
\vspace{1cm}

\author{\vspace{0cm}
\authorblockN{ \small Abdellatif Zaidi \qquad Pablo Piantanida \qquad Shlomo Shamai (Shitz)\thanks{The material in this paper was presented in part at the IEEE International Symposium on Information Theory, Saint-Petersburg, Russia, August 2011. This work has been supported by the European Commission in the framework of the FP7 Network of Excellence in Wireless Communications (NEWCOM++). The work of S. Shamai has also been supported by the CORNET consortium.}
\thanks{Abdellatif Zaidi is with Universit\'e Paris-Est Marne La Vall\'ee, 77454 Marne la Vall\'ee Cedex 2, France. Email: abdellatif.zaidi@univ-mlv.fr}
\thanks{Pablo Piantanida is with the Department of Telecommunications, SUPELEC, 91190 Gif-sur-Yvette, France. Email: pablo.piantanida@supelec.fr}
\thanks{Shlomo Shamai is with the Department of Electrical Engineering, Technion Institute of Technology, Technion City, Haifa 32000, Israel. Email: sshlomo@ee.technion.ac.il}}}

\vspace{1cm}

\maketitle

\begin{abstract}
We consider a two-user state-dependent multiaccess channel in which the states of the channel are known non-causally to one of the encoders and only strictly causally to the other encoder. Both encoders transmit a common message and, in addition, the encoder that knows the states non-causally transmits an individual message. We find explicit characterizations of the capacity region of this communication model in both discrete memoryless and memoryless Gaussian cases. In particular the capacity region analysis demonstrates the utility of the knowledge of the states only strictly causally at the encoder that sends only the common message in general. More specifically, in the discrete memoryless setting we show that such a knowledge is beneficial and increases the capacity region in general. In the Gaussian setting, we show that such a knowledge does not help, and the capacity is same as if the states were completely unknown at the encoder that sends only the common message. Furthermore, we also study the special case in which the two encoders transmit only the common message and show that the knowledge of the states only strictly causally at the encoder that sends only the common message is not beneficial in this case, in both discrete memoryless and memoryless Gaussian settings. The analysis also reveals optimal ways of exploiting the knowledge of the state only strictly causally at the encoder that sends only the common message when such a knowledge is beneficial. The encoders collaborate to convey to the decoder a lossy version of the state, in addition to transmitting the information messages through a generalized Gel'fand-Pinsker binning. Particularly important in this problem are the questions of 1) optimal ways of performing the state compression and 2) whether or not the compression indices should be decoded uniquely. By developing two optimal coding schemes that perform this state compression differently, we show that when used as parts of appropriately tuned encoding and decoding processes, both compression \`a-la noisy network coding, i.e., with no binning, and compression using Wyner-Ziv binning are optimal. The scheme that uses Wyner-Ziv binning shares elements with Cover and El Gamal original compress-and-forward, but differs from it mainly in that backward decoding is employed instead of forward decoding and the compression indices are not decoded uniquely. Finally, by exploring the properties of our outer bound, we show that, although not required in general, the compression indices can in fact be decoded uniquely essentially without altering the capacity region, but at the expense of larger alphabets sizes for the auxiliary random variables.


\end{abstract}

\section{Introduction}\label{secI}

The study of channels that are controlled by random states has spurred much interest, due to its importance from both information-theoretic and communications aspects. For example, state-dependent channels may model communication in random fading environments \cite{BPS98} or in the presence of interference imposed by adjacent users. The channel states may be known in a strictly-causal, causal or noncausal manner, to all or only a subset of the encoders. For a transmission of length $n$, let $S^n=(S_1,S_2,\hdots,S_n)$ denote the state sequence, with $S_i$ representing the channel state affecting the channel at time or block $i$. For the transmission in block $i$, the state sequence is known non-causally if it is known entirely before the beginning of the transmission. It is known causally if it is known up to and including time $i$; and it is known strictly causally if it is known only up to time $i-1$. The way the channel state information is utilized and influences capacity depends also on which of the encoders(s) and decoder(s) are aware of it. In single user channels, the concept of channel state available at only the transmitter dates back to Shannon \cite{Sh58} for the causal channel state case, and to  Gel'fand and Pinsker \cite{GP80} for the non-causal channel state case. In multiuser environments, a growing body of work studies multi-user state-dependent models. Recent advances in this regard can be found in \cite{KSM08,SBSV07a,KL07a,ZKLV09a,KELW07,PKEZ07,ZKLV08a,ZKLV10,ZV09b,CS05,S05,LS10a,LS10b,PSS11,LSY10,LSY11,AMA09,KS09a,CY11a,SCYA11a,K-FM11a,BL10,J06,SK05}, and many other works. For a comprehensive review of state-dependent channels and related work, the reader may refer to \cite{KSM08}.

There is a connection between the role of states known strictly causally at an encoder and that of output feedback given to that encoder. In single-user channels, it is now well known that strictly causal feedback does not increase the capacity \cite{Sh56}. In multiuser channels or networks, however, the situation changes drastically, and output feedback can be beneficial --- but its role is still highly missunderstood. One has a similar picture with strictly causal states at the encoder. In single-user channels, independent and identically distributed states available only in a strictly causal manner at the encoder have no effect on the capacity. In multiuser channels or networks, however, like feedback, strictly causal states in general increase the capacity. 

Advances in the study of the effect of strictly causal states in multiuser channels are rather very recent and concern mainly multiple access scenarios. In \cite{LS10a}, Lapidoth and Steinberg study a two-encoder multiple access channel with independent messages and states known causally at the encoders. They show that the strictly causal state sequence can be beneficial, in the sense that it increases the capacity for this model. This result is reminiscent of Dueck's proof \cite{D80} that feedback can increase the capacity region of some broadcast channels. In accordance with \cite{D80}, the main idea of the achievability result in \cite{LS10a} is a block Markov coding scheme in which the two users collaborate to describe the state to the decoder by sending cooperatively a compressed version of it. As noticed in \cite{LS10a}, although some non-zero rate that otherwise could be used to transmit pure information is spent in describing the state to the decoder, the net effect can be an increase in the capacity. In \cite{LS10b}, they show that strictly causal state information is beneficial even if the channel is controlled by two independent states each known to one encoder strictly causally. In this case, each encoder can help the other encoder transmit at a higher rate by sending a compressed version of its state to the decoder. In \cite{LSY10}, Li, Simeone and Yener improve the results of \cite{LS10a,LS10b} and extend them to the case of multiple encoders.  The achievability results in \cite{LSY10} are inspired by the noisy network coding scheme of \cite{H-LKGC11} and, unlike \cite{LS10a,LS10b}, do not use Wyner-Ziv binning \cite{WZ76} for the compression of the state. In a very recent contribution \cite{LS11a}, Lapidoth and Steinberg derive a new inner bound on the capacity region for the case of a single state governing the multiaccess channel. They also prove that the inner bound of \cite{LSY10} for the case of two independent states each known strictly causally to one encoder can indeed be strictly better than previous bounds in \cite{LS10a,LS10b} -- a result which is conjectured previously by Li, Simeone and Yener in \cite{LSY10}.



\subsection{Studied Model}

In this paper, which generalizes a former conference version \cite{ZPS11a}, we study a two-user state-dependent multiple access channel with the channel states known non-causally at one encoder and only strictly causally at the other encoder. The decoder is not aware of the channel states. As shown in Figure~\ref{ModelForMACwithAsymmetricCSI}, both encoders transmit a common message and, in addition, the encoder that knows the states non-causally transmits an individual message. This model generalizes one whose capacity region is established in \cite{SBSV07a} and in which the encoder that sends only the common message does not know the states at all. More precisely, let $W_c$ and $W_1$ denote the common message and the individual message to be transmitted in, say, $n$ uses of the channel; and $S^n=(S_1,\hdots,S_n)$ denote the state sequence affecting the channel during this time. At time $i$, Encoder 1 knows the complete sequence $S^n=(S_1,\hdots,S_{i-1},S_i,\hdots,S_n)$ and sends $X_{1i}=\phi_1(W_c,W_1,S^n)$, and Encoder 2 knows \textit{only} $S^{i-1}=(S_1,\hdots,S_{i-1})$ and sends $X_{2i}=\phi_{2,i}(W_c,S^{i-1})$ -- the functions $\phi_1$ and $\phi_{2,i}$ are some encoding functions. In this paper, we study the capacity region of this state-dependent MAC model. As our analysis will show, this requires, among others, understanding the role of the strictly causal part of the state that is revealed to Encoder 2.

\begin{figure}[htpb]
\centering
\includegraphics[width=0.7\linewidth]{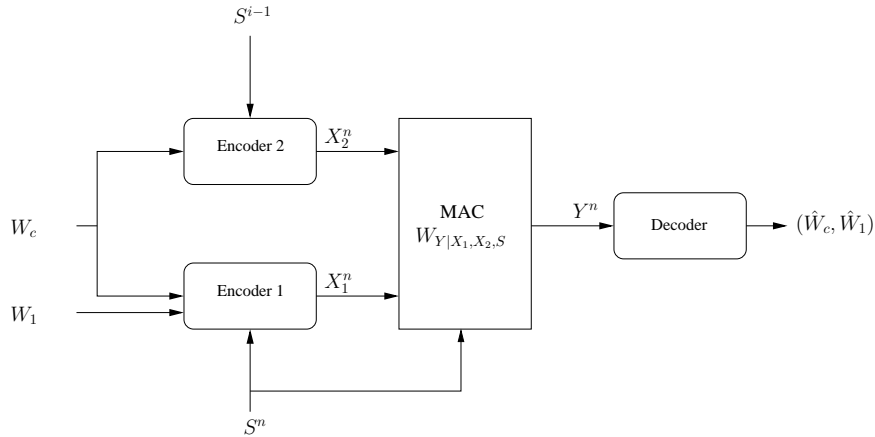}
\caption{State-dependent MAC with degraded message sets and states known noncausally at the encoder that sends both messages and only strictly causally at the other encoder.}
\label{ModelForMACwithAsymmetricCSI}
\end{figure}

\subsection{Main Contributions}

In the discrete memoryless case, we characterize the capacity region for the general finite-alphabet case with a single-letter expression. The proof of the achievability part is based on a block-Markov coding scheme in which the two encoders collaborate to convey a lossy version of the state to the decoder, in the spirit of \cite{LS10a,LS10b,LS11a}, in addition to a generalized Gel'fand-Pinsker binning for the transmission of the information messages \cite{GP80}. From the angle of the state compression, coding schemes that perform the state compression for our model tie with very recent works on compressions in compress-and-forward type relaying networks \cite{H-LKGC11,WX11a,ZHV11a,KH11a}. We first develop a coding scheme in which the state compression is performed \`a-la Kim \textit{et al.} noisy network coding scheme and show that it is optimal, i.e., achieves an outer bound that we establish for the studied model. In this coding scheme, unlike \cite{LS10a,LS10b,LS11a} where every information message is divided into blocks and different submessages are sent over these blocks and then decoded one at a time using the same codebook as in the original compress-and-forward scheme by Cover and El Gamal \cite{CG79}, here the \textit{entire} common message and the \textit{entire} individual message are transmitted over \textit{all} blocks using codebooks that are generated independently, one for each block, and the decoding is performed simultaneously using all blocks as in the noisy network coding scheme of \cite{H-LKGC11}. Also, like \cite{H-LKGC11}, at each block the compression index of the state of the previous block is sent using standard rate distortion, not Wyner-Ziv binning. At the end of the transmission, the receiver uses the outputs of all blocks to perform simultaneous decoding of the information common and individual messages, without uniquely decoding the compression indices. From this angle, our coding scheme connects more with \cite{LSY10}, than with \cite{LS10a,LS10b} and \cite{LS11a}. 

Two of the most important features of our coding scheme that is based on noisy network coding are i) standard compression without Wyner-Ziv binning and ii) non-explicit decoding of the compression indices. Investigating whether these features are pivotal for optimality in our problem, as argued in \cite{H-LKGC11} for some related models, we also explore binning-based compressions. We show that the capacity region of our model can also be achieved using an alternate coding scheme in which the state compression is realized using Wyner-Ziv binning. The employed optimal alternate coding scheme shares elements with Cover and El Gamal compress-and-forward \cite{CG79}, but differs from it in two aspects: 1) backward decoding is utilized instead of the forward decoding of \cite{CG79}, and 2) unlike \cite{CG79}, the compression indices are not decoded uniquely. Decoding backwardly instead of forwardly seems essential for the optimality of this alternate coding scheme here. At this level, we note that the finding in this paper that backward decoding with non-unique decoding of the compression indices is beneficial, may hold more generally in other scenarios that involve Wyner-Ziv binning. In the fading setting, this is also observed in \cite{KS09b}. Next, by exploring our outer bound further, we show that, although not required, one can modify this coding scheme in a manner to get the compression indices decoded at the receiver essentially without altering the capacity region but at the expense of larger alphabets sizes of the involved auxiliary random variables. The decoding of the compression indices introduces an additional rate constraint; but we show that this constraint is satisfied by the auxiliary random variables of the outer bound. Finally, we note that the finding in this paper that in the context of Wyner-Ziv binning backward decoding with non-unique decoding of the compression indices improves the transmission rate may be beneficial in other scenarios. In the fading setting, this was also observed in \cite{KS09b}.

The single-letter characterization of the capacity region of our model remains intact if one allows feedback to the encoder that sends both messages. Also, the capacity region of our model contains that of the model of \cite{SBSV07a} in which the encoder that sends only the common message is unaware of the channel states; and this shows that revealing the states even only strictly causally to this encoder potentially increases the capacity region. Next, by investigating a discrete memoryless example, we show that this inclusion can be strict, thus demonstrating the utility of conveying a compressed version of the state to the decoder cooperatively by the encoders.

We also specialize our results to the case in which the two encoders send only the common message. We refer to the capacity in this case as \textit{common-message} capacity. We show that, when one of the two encoders is informed noncausally, the knowledge of the states only strictly causally at the other encoder does not increase the common-message capacity. It should be noted that this result is not a direct consequence of that feedback does not increase the capacity in a multiaccess channel in which the encoders send only a common message; and our converse proof is needed here.

Next, we consider the memoryless Gaussian setting in which the channel state and the noise are additive and Gaussian. We establish an operative outer bound on the achievable rate pairs. Then, we show that this outer bound is achievable, yielding a closed-form expression of the capacity region. The resulting capacity region coincides with that of the model of \cite{SBSV07a} in which the encoder that sends only the common message is completely unaware of the states, thus demonstrating that, by opposition to the discrete memoryless case, revealing the states strictly causally to this encoder is not beneficial in the Gaussian case, in the sense that it does not increase the capacity region.

Finally, we note that in contrast to the related MAC models in \cite{SBSV07a,ZKLV09a}, our converse proofs in this paper do not follow directly from the converse part proof of the capacity formula for the standard Gel'fand-Pinsker channel \cite{GP80}. This is because, at time $i$, the encoder that transmits only the common message sends inputs which are function of not only that message, but also the observed past state sequence.



\subsection{Outline and Notation}

An outline of the remainder of this paper is as follows. Section \ref{secII} describes in more detail the communication model that we consider in this work. Section \ref{secIII} provides the capacity region of the discrete memoryless model. In this section we also establish an alternative outer bound on the capacity region that will turn to be useful in the Gaussian case, provide an example demonstrating the utility of revealing the states only strictly causally to the encoder that sends only the common message, and derive the common-message capacity. Section \ref{secIV} characterizes the capacity region as well as the common-message capacity of the Gaussian model. Finally, Section \ref{secV} concludes the paper.

We use the following notations throughout the paper. Upper case letters are used to denote random variables, e.g., $X$; lower case letters are used to denote realizations of random variables, e.g., $x$; and calligraphic letters designate alphabets, i.e., $\mc X$. The probability distribution of a random variable $X$ is denoted by $P_X(x)$. Sometimes, for convenience, we write it as $P_X$.  We use the notation $\mathbb{E}_{X}[\cdot]$ to denote the expectation of random variable $X$. A probability distribution of a random variable $Y$ given $X$ is denoted by $P_{Y|X}$. The set of probability distributions defined on an alphabet $\mc X$ is denoted by $\mc P(\mc X)$. The cardinality of a set $\mc X$ is denoted by $|\mc X|$. For convenience, the length $n$ vector $x^n$ will occasionally be denoted in boldface notation $\dv x$. The Gaussian distribution with mean $\mu$ and variance $\sigma^2$ is denoted by $\mathcal{N}(\mu,\sigma^2)$. For integers $i \leq j$, we define $[i:j]:=\{i,i+1,\hdots,j\}$. Finally, throughout the paper, logarithms are taken to base $2$, and the complement to unity of a scalar $u \in [0,1]$ is denoted by $\bar{u}$, i.e., $\bar{u}=1-u$.

\section{System Model and Definitions}\label{secII}

We consider a stationary memoryless state-dependent MAC $W_{Y|X_1,X_2,S}$  whose output $Y \in \mc Y$ is controlled by the channel inputs $X_1 \in \mc X_1$ and $X_2 \in \mc X_2$ from the encoders and the channel state $S \in \mc S$ which is drawn according to a memoryless probability law $Q_S$. We assume that the channel state $S^n$ is known non-causally at Encoder 1, i.e., beforehand, at the beginning of the transmission block. Encoder 2 knows the channel states only strictly-causally; that is, at time $i$, it knows the states only up to time $i-1$, $S^{i-1}=(S_1,\hdots,S_{i-1})$.

Encoder 2 wants to send a common message $W_c$ and Encoder 1 wants to send an independent individual message $W_1$ along with the common message $W_c$. We assume that the common message $W_c$ and the individual message $W_1$ are independent random variables drawn uniformly from the sets $\mc W_c=\{1,\cdots,M_c\}$ and  $\mc W_1=\{1,\cdots,M_1\}$, respectively. The sequences $X_{1}^n$ and $X_{2}^n$ from the encoders are sent across a state-dependent multiple access channel modeled as a memoryless conditional probability distribution $W_{Y|X_1,X_2,S}$. The joint probability mass function on ${\mc W_c}{\times}{\mc W_1}{\times}{\mc S^n}{\times}{\mc X^n_1}{\times}{\mc X^n_2}{\times}{\mc Y^n}$ is given by
\begin{align}
P(w_c,w_1,s^n,x^n_1,x^n_2,y^n) &= P(w_c)P(w_1)\prod_{i=1}^{n}Q_S(s_i)P(x_{1,i}|w_c,w_1,s^n)P(x_{2,i}|w_c,s^{i-1})\nonumber\\
&\hspace{1cm}{\cdot}W_{Y|X_1,X_2,S}(y_i|x_{1,i},x_{2,i},s_i).
\end{align}
The receiver guesses the pair $(\hat{W}_c,\hat{W}_1)$ from the channel output $Y^n$.

\begin{definition}
For positive integers $n$, $M_c$ and $M_1$, an $(M_c,M_1,n,\epsilon)$ code for the multiple access channel with states known noncausally at one encoder and only strictly causally at the other encoder consists of a mapping
\begin{align}
\phi_1: \mc W_c{\times}\mc W_1{\times}\mc S^n \longrightarrow \mc X^n_1
\label{EncodingFunction__Encoder1}
\end{align}
at Encoder 1, a sequence of mappings
\begin{align}
\phi_{2,i}: \mc W_c{\times}\mc S^{i-1} \longrightarrow \mc X_2, \quad i=1,\hdots,n
\label{EncodingFunction__Encoder2}
\end{align}
at Encoder 2, and a decoder map
\begin{align}
\psi : \mc Y^n \longrightarrow \mc W_c{\times}\mc W_1
\label{DecodingFunction}
\end{align}
such that the average probability of error is bounded by $\epsilon$,
\begin{equation}
P_e^n = \mathbb{E}_{S}\big[\mathrm{Pr}\big(\psi(Y^n)\neq (W_c,W_1)|S^n=s^n\big)\big] \leq \epsilon.
\end{equation}
The rate of the common message and the rate of the individual message are defined as
\begin{align}
&R_c = \frac{1}{n}\log M_c \qquad \text{and} \qquad R_1 = \frac{1}{n}\log M_1,
\end{align}
respectively.
\end{definition}

A rate pair $(R_c,R_1)$ is said to be achievable if for every $\epsilon > 0$ there exists an $(2^{nR_c},2^{nR_1},n,\epsilon)$ code for the channel $W_{Y|X_1,X_2,S}$.  The capacity region of the considered state-dependent MAC is defined as the closure of the set of achievable rate pairs.

\section{Discrete Memoryless Case}\label{secIII}

In this section, it is assumed that the alphabets $\mc S, \mc X_1, \mc X_2$ are finite.

\subsection{Capacity Region}\label{secIII_subsecA}
Let $\mc P$ stand for the collection of all random variables $(S,U,V,X_1,X_2,Y)$ such that $U$, $V$, $X_1$ and $X_2$ take values in finite alphabets $\mc U$, $\mc V$, $\mc X_1$ and $\mc X_2$, respectively, and
\begin{subequations}
\begin{align}
P_{S,U,V,X_1,X_2,Y}(s,u,v,x_1,x_2,y) &= P_{S,U,V,X_1X_2}(s,u,v,x_1,x_2)W_{Y|X_1,X_2,S}(y|x_1,x_2,s)\\
P_{S,U,V,X_1,X_2}(s,u,v,x_1,x_2) &= Q_S(s)P_{X_2}(x_2)P_{V|S,X_2}(v|s,x_2)P_{U,X_1|S,V,X_2}(u,x_1|s,v,x_2)\\
\sum_{u,v,x_1,x_2}P_{S,U,V,X_1,X_2}(s,u,v,x_1,x_2) &= Q_S(s).
\end{align}
\label{MeasureForCapacityRegionDiscreteMemorylessChannel}
\end{subequations}

The relations in \eqref{MeasureForCapacityRegionDiscreteMemorylessChannel} imply that $(U,V) \leftrightarrow (S,X_1,X_2) \leftrightarrow Y$ is a Markov chain, and $X_2$ is independent of $S$.

Define $\mc C$ to be the set of all rate pairs $(R_c,R_1)$ such that 
\begin{align}
R_1 \: &\leq \: I(U;Y|V,X_2)-I(U;S|V,X_2) \nonumber\\
R_c+ R_1 \: &\leq \: I(U,V,X_2;Y)-I(U,V,X_2;S)\nonumber\\
&\hspace{2cm} \text{for some}\:\: (S,U,V,X_1,X_2,Y) \in \mc P.
\label{CapacityRegionDiscreteMemorylessChannel}
\end{align}

The following proposition states some properties of $\mc C$.

\begin{proposition}\label{Proposition__Properties__of__CapacityRegion}

{\color{white} (properties of capacity region)}

\begin{itemize}
\item[1.] The set $\mc C$ is convex.
\item[2.] To exhaust $\mc C$, it is enough to restrict $\mc V$ and $\mc U$ to satisfy
\begin{subequations}
\begin{align}
\label{BoundsOnCardinalityOfAuxiliaryRandonVariableV__CapacityRegion__DiscreteMemorylessChannel}
&|\mc V| \leq |\mc S||\mc X_1||\mc X_2|+1\\
&|\mc U| \leq \Big(|\mc S||\mc X_1||\mc X_2|+1\Big)|\mc S||\mc X_1||\mc X_2|.
\label{BoundsOnCardinalityOfAuxiliaryRandonVariableU__CapacityRegion__DiscreteMemorylessChannel}
\end{align}
\label{BoundsOnCardinalityOfAuxiliaryRandonVariables__CapacityRegion__DiscreteMemorylessChannel}
\end{subequations}
\end{itemize}
\end{proposition}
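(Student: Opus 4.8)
The plan is to handle the two parts by standard techniques: Part~1 by a time-sharing (convexification) argument, and Part~2 by the Fenchel--Eggleston--Carath\'eodory strengthening of Carath\'eodory's theorem (the support lemma), noting that all the entropy and mutual-information functionals that appear are continuous on the relevant compact probability simplices.

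For Part~1, let $(R_c^{(1)},R_1^{(1)})$ and $(R_c^{(2)},R_1^{(2)})$ be in $\mc C$, attained by tuples $(S,U^{(j)},V^{(j)},X_1^{(j)},X_2^{(j)},Y^{(j)})\in\mc P$ for $j=1,2$, and fix $\alpha\in[0,1]$. I would introduce a time-sharing variable $Q$, independent of $S$, with $\mathrm{Pr}(Q=1)=\alpha$, and build a new tuple by drawing, conditionally on $\{Q=j\}$ and $S$, the variables $(V^{(j)},U^{(j)},X_1^{(j)},X_2^{(j)})$ from the $j$-th conditional law, and then setting $V=(Q,V^{(Q)})$, $U=U^{(Q)}$, $X_1=X_1^{(Q)}$, $X_2=X_2^{(Q)}$, with $Y$ produced from $(X_1,X_2,S)$ through $W_{Y|X_1,X_2,S}$. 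Since $Q$ is independent of $S$ and each $X_2^{(j)}$ is independent of $S$, the new $X_2$ is independent of $S$, so the new tuple lies in $\mc P$. Because $Q$ is a component of $V$, one gets $I(U;Y|V,X_2)-I(U;S|V,X_2)=\sum_j\mathrm{Pr}(Q=j)\,[I(U^{(j)};Y^{(j)}|V^{(j)},X_2^{(j)})-I(U^{(j)};S|V^{(j)},X_2^{(j)})]$ and, using $I(Q;S)=0$ and $I(Q;Y)\ge 0$, $I(U,V,X_2;Y)-I(U,V,X_2;S)\ge\sum_j\mathrm{Pr}(Q=j)\,[I(U^{(j)},V^{(j)},X_2^{(j)};Y^{(j)})-I(U^{(j)},V^{(j)},X_2^{(j)};S)]$. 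Since each summand on the right is at least $R_1^{(j)}$, respectively $R_c^{(j)}+R_1^{(j)}$, these two displays show that $\alpha(R_c^{(1)},R_1^{(1)})+\bar{\alpha}(R_c^{(2)},R_1^{(2)})\in\mc C$.

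For Part~2, I would reduce $\mc V$ first and $\mc U$ afterwards. The algebraic step on which everything rests is the decomposition
\[
I(U,V,X_2;Y)-I(U,V,X_2;S)=I(X_2;Y)+[I(V;Y|X_2)-I(V;S|X_2)]+[I(U;Y|V,X_2)-I(U;S|V,X_2)],
\]
which holds because $X_2$ is independent of $S$. Fixing a tuple in $\mc P$ and regarding $V$ as the ``outer'' variable, it then suffices, in order to leave both inequalities defining $\mc C$ unchanged, to preserve: (i) the whole joint law $P_{S,X_1,X_2}$ --- which pins down $Q_S$, the independence $X_2\perp S$, and the quantities $H(S)$, $H(Y)$, $I(X_2;Y)$, $H(Y|X_2)$, $H(S|X_2)$, at a cost of $|\mc S||\mc X_1||\mc X_2|-1$ scalar constraints; (ii) the scalar $\sum_v P_V(v)\,[H(S|V=v,X_2)-H(Y|V=v,X_2)]$, which with (i) fixes $I(V;Y|X_2)-I(V;S|X_2)$; and (iii) the scalar $\sum_v P_V(v)\,[I(U;Y|V=v,X_2)-I(U;S|V=v,X_2)]$. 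This is $|\mc S||\mc X_1||\mc X_2|+1$ constraints, so the support lemma produces a new $V$ with $|\mc V|\le|\mc S||\mc X_1||\mc X_2|+1$, and preserving $P_{S,X_1,X_2}$ keeps the new tuple in $\mc P$.

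Next, applying the support lemma to this small-$\mc V$ tuple, I would fix each value $v$ of $V$ and reduce the support of $P_{U|V=v}$. By the same decomposition, within the slice $\{V=v\}$ the only $U$-dependent term occurring in either bound is $H(S|U,V=v,X_2)-H(Y|U,V=v,X_2)=\sum_u P_{U|V=v}(u)\,[H(S|U=u,V=v,X_2)-H(Y|U=u,V=v,X_2)]$, all remaining terms being functions of $P_{S,X_1,X_2|V=v}$ and of quantities already held fixed. Hence preserving $P_{S,X_1,X_2|V=v}$ ($|\mc S||\mc X_1||\mc X_2|-1$ constraints) together with this single scalar suffices, giving at most $|\mc S||\mc X_1||\mc X_2|$ values of $u$ per value of $v$; carrying this out for every $v$ preserves $P_{S,X_1,X_2}$, hence membership in $\mc P$, and the union over the at most $|\mc S||\mc X_1||\mc X_2|+1$ values of $v$ gives $|\mc U|\le(|\mc S||\mc X_1||\mc X_2|+1)|\mc S||\mc X_1||\mc X_2|$. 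I expect the only delicate point to be getting the ``$+1$'' (rather than ``$+2$'') in the bound on $|\mc V|$: this is precisely what the displayed decomposition delivers, since it turns the sum-rate bound into $I(X_2;Y)$ --- a function of the already-preserved input law --- plus the one $V$-slice functional in (ii) and the $R_1$-type functional in (iii), rather than two separate entropy functionals of the $V$-slices.
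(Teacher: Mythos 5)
Your proof is correct and follows essentially the same route as the paper: a time-sharing variable absorbed into $V$ for convexity (using $I(T;S)=0$ and $I(T;Y)\ge 0$ exactly as the paper does), and the support lemma with $|\mc S||\mc X_1||\mc X_2|+1$ preserved functionals for $V$ followed by $|\mc S||\mc X_1||\mc X_2|$ functionals for $U$. The only immaterial differences are that you preserve the chain-rule pieces $I(V;Y|X_2)-I(V;S|X_2)$ and $I(U;Y|V,X_2)-I(U;S|V,X_2)$ where the paper preserves equivalent entropy combinations, and that you reduce $U$ slice-by-slice over the values of $V$ rather than globally while preserving $P_{S,V,X_1,X_2}$; the counting and the resulting cardinality bounds are identical.
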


\textbf{Proof:} The proof of Proposition~\ref{Proposition__Properties__of__CapacityRegion} appears in Appendix~\ref{appendixProposition__Properties__of__CapacityRegion}. 

\noindent As stated in the following theorem, the set $\mc C$ characterizes the capacity region of the state-dependent discrete memoryless MAC model that we study.

\vspace{0.3cm}

\begin{theorem}\label{Theorem__CapacityRegionDiscreteMemorylessChannel}
The capacity region of the multiple access channel with states known only strictly causally at the encoder that sends the common message and non-causally at the encoder that sends both messages is given by $\mc C$.
\end{theorem}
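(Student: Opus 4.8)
The plan is to prove the two directions separately: achievability (every pair in $\mc C$ is achievable) and the converse (every achievable pair lies in $\mc C$). For \emph{achievability}, I would use a block-Markov scheme over $B$ blocks in which the state sequence $S^n$ observed by Encoder~2 during block $b$ is described to the decoder during block $b+1$. The two encoders cooperate on this description: at the end of block $b$, Encoder~2 (who knows $S^n[b]$ strictly causally, i.e., fully at the start of block $b+1$) and Encoder~1 (who knows $S^n[b]$ noncausally) both learn the compression index, so they can transmit it cooperatively via the common codeword $X_2$ and a component of $X_1$ in block $b+1$. The variable $V$ in $\mc P$ plays the role of the compressed state together with the common-message/cooperation codeword; the variable $U$ carries the individual message via Gel'fand--Pinsker binning against $S$ given $(V,X_2)$. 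I would generate codebooks independently per block, transmit the \emph{entire} $W_c$ and $W_1$ in every block, and have the receiver perform joint simultaneous decoding across all $B$ blocks \`a la noisy network coding, \emph{without} requiring the compression indices to be decoded uniquely. A standard covering/packing error analysis then yields the rate constraints: the covering step for the state description, the Gel'fand--Pinsker binning step contributing the $-I(U;S|V,X_2)$ and $-I(U,V,X_2;S)$ penalties, and the packing step giving $I(U;Y|V,X_2)$ and $I(U,V,X_2;Y)$. One checks that after letting $B\to\infty$ the auxiliary per-block state-description rate cancels out, leaving exactly the region $\mc C$; the paper states it later develops both this scheme and a Wyner-Ziv-binning variant, so I would present one in detail and remark on the other.

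For the \emph{converse}, fix an $(2^{nR_c},2^{nR_1},n,\epsilon)$ code and apply Fano's inequality to get $H(W_c,W_1\mid Y^n)\le n\epsilon_n$. The individual-rate bound: $nR_1 \le I(W_1;Y^n\mid W_c) + n\epsilon_n \le I(W_1;Y^n\mid W_c,S^n,X_2^n) + n\epsilon_n$ using that $X_2^n$ is a function of $(W_c,S^{n})$ and independence of $W_1$ from $(W_c,S^n)$; then I would telescope $I(W_1;Y^n\mid W_c,X_2^n,S^n)$ over $i$ and introduce the auxiliary $U_i$. The sum-rate bound starts from $n(R_c+R_1)\le I(W_c,W_1;Y^n)+n\epsilon_n$. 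The standard Gel'fand--Pinsker identification $U_i=(W_1,Y^{i-1},S_{i+1}^n)$ (or a variant) must be chosen so that the single-letter Markov and independence structure in \eqref{MeasureForCapacityRegionDiscreteMemorylessChannel} holds; critically, because $X_{2,i}=\phi_{2,i}(W_c,S^{i-1})$ depends on the \emph{past} state, one cannot simply invoke the classical Gel'fand--Pinsker converse. The right identifications should be something like $V_i=(W_c,S^{i-1},Y^{i-1})$ and $X_{2,i}$ as given, together with a time-sharing variable that can be absorbed into $V$; one then verifies $X_{2}$ is independent of $S$ (true since $S^{i-1}$ is independent of $S_i$ and $X_{2,i}$ is a function of $(W_c,S^{i-1})$), and that $(U_i,V_i)\leftrightarrow(S_i,X_{1,i},X_{2,i})\leftrightarrow Y_i$. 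Csisz\'ar--K\"orner sum identities are used to convert the $Y^{i-1}$ and $S_{i+1}^n$ terms into the mutual-information differences $I(U;Y|V,X_2)-I(U;S|V,X_2)$ and $I(U,V,X_2;Y)-I(U,V,X_2;S)$. The cardinality bounds on $\mc U,\mc V$ then follow from Proposition~\ref{Proposition__Properties__of__CapacityRegion}.

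The main obstacle I anticipate is the converse, specifically finding auxiliary-variable identifications that simultaneously (i) make both single-letter bounds come out correctly, (ii) respect the factorization \eqref{MeasureForCapacityRegionDiscreteMemorylessChannel} — in particular the Markov chain $(U,V)\leftrightarrow(S,X_1,X_2)\leftrightarrow Y$ and the independence of $X_2$ from $S$ — and (iii) correctly handle the fact that $X_{2,i}$ is a deterministic function of $(W_c,S^{i-1})$ rather than of the message alone, which breaks the direct analogy with the Gel'fand--Pinsker and the \cite{SBSV07a} converses, as the paper itself emphasizes. Getting the $S_{i+1}^n$ versus $Y^{i-1}$ terms to telescope cleanly via the Csisz\'ar--K\"orner identity while keeping $V_i$ (which must contain $S^{i-1}$) consistent with "$X_2$ independent of $S$" is the delicate point; I would expect to need to argue that the relevant past-state information is already captured inside $V_i$ without inducing spurious dependence between the single-letter $X_2$ and $S$. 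On the achievability side, the analogous but milder obstacle is the bookkeeping that shows the rate spent describing the state does not appear in the final region after the block-length normalization $B\to\infty$, and verifying that non-unique decoding of the compression indices suffices.
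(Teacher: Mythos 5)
Your achievability plan is essentially the paper's: a block-Markov scheme with per-block independent codebooks, the entire $(W_c,W_1)$ repeated in every block, cooperative description of the previous block's state through $(X_2,V)$, Gel'fand--Pinsker binning of $U$ against $S$ given $(V,X_2)$, and simultaneous joint decoding over all blocks without unique decoding of the compression indices; this is exactly the noisy-network-coding-style proof given for Theorem~\ref{Theorem__CapacityRegionDiscreteMemorylessChannel}. The converse, however, has two concrete problems as sketched. First, for the individual rate you bound $I(W_1;Y^n|W_c)$ by $I(W_1;Y^n|W_c,S^n,X_2^n)$ and then single-letterize. That chain of inequalities is valid, but it discards the Gel'fand--Pinsker penalty: it yields $R_1\leq I(X_1;Y|S,X_2)$, which is the (in general strictly looser) bound of Theorem~\ref{Theorem__AlternativeOuterBoundDiscreteMemorylessChannel}, not the bound $I(U;Y|V,X_2)-I(U;S|V,X_2)$ required by \eqref{CapacityRegionDiscreteMemorylessChannel}; no subsequent introduction of $U_i$ can recover the $-I(U;S|V,X_2)$ term once you have conditioned on all of $S^n$. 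The correct starting point is to use the independence of $W_1$ and $S^n$ to write $I(W_1;Y^n|W_c)=I(W_1;Y^n|W_c)-I(W_1;S^n|W_c)$ and manipulate this \emph{difference}.

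Second, your identification $V_i=(W_c,S^{i-1},Y^{i-1})$ pairs the past states with the \emph{past} outputs. The Csisz\'ar--K\"orner sum identity you invoke has the form $\sum_i I(A^n_{i+1};B_i|B^{i-1},C)=\sum_i I(B^{i-1};A_i|A^n_{i+1},C)$, i.e., it exchanges a past of one sequence against a future of the other; with $S^{i-1}$ and $Y^{i-1}$ both taken as pasts, the cross terms $\sum_i I(S^{i-1};Y_i|\,\cdot\,,Y^{i-1})$ have no matching counterpart to cancel against, and the telescoping you describe does not close. The identification that works is $\bar{V}_i=(W_c,S^{i-1},Y^n_{i+1})$ and $\bar{U}_i=(W_1,\bar{V}_i)$: including $S^{i-1}$ makes $X_{2,i}$ a deterministic function of $\bar{V}_i$ (so conditioning on $X_{2,i}$ is free), $X_{2,i}$ remains independent of $S_i$ as you correctly note, and the \emph{future} outputs $Y^n_{i+1}$ are exactly what the sum identity needs to cancel against the past states. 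With these two corrections the rest of your outline (Fano, time-sharing absorbed into $V$ and $U$, cardinality bounds from Proposition~\ref{Proposition__Properties__of__CapacityRegion}) goes through as in the paper.
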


\vspace{0.3cm}

\textbf{Proof:} An outline proof of the coding scheme that we use for the direct part will follow. The associated error analysis and the proof of the converse appear in Appendix~\ref{appendixTheorem__CapacityRegionDiscreteMemorylessChannel}.

Theorem~\ref{Theorem__CapacityRegionDiscreteMemorylessChannel} continues to hold if in \eqref{MeasureForCapacityRegionDiscreteMemorylessChannel} we replace $P_{U|S,V,X_2}$ by $P_{U|S,V}$. Also, it should be noted that setting $V=\emptyset$ in \eqref{CapacityRegionDiscreteMemorylessChannel}, the capacity region $\mc C$ reduces to the union of all rate-pairs $(R_c,R_1)$ satisfying 
\begin{align}
R_1 \: &\leq \: I(U;Y|X_2)-I(U;S|X_2)\nonumber\\
R_c+ R_1 \: &\leq \: I(U,X_2;Y)-I(U,X_2;S)
\label{CapacityRegionDiscreteMemorylessChannel__NoStatesatEncoder2}
\end{align}
for some measure on $\mc S{\times}\mc U{\times}\mc X_1{\times}\mc X_2{\times}\mc Y$ of the form
\begin{align}
P_{S,U,X_1,X_2,Y} &= Q_SP_{X_2}P_{U,X_1|S,X_2}W_{Y|X_1,X_2,S}.
\label{MeasureCapacityRegionDiscreteMemorylessChannel__NoStatesatEncoder2}
\end{align}
Let $\mc C'$ denote the region defined by \eqref{CapacityRegionDiscreteMemorylessChannel__NoStatesatEncoder2} and \eqref{MeasureCapacityRegionDiscreteMemorylessChannel__NoStatesatEncoder2} in the remaining of this paper. It has been shown in \cite{SBSV07a} that the region $\mc C'$ is the capacity region of the MAC model of Figure~\ref{ModelForMACwithAsymmetricCSI} but with the states completely unknown at Encoder 2, i.e., while the encoding at Encoder 1 is given by \eqref{EncodingFunction__Encoder2}, the encoding at Encoder 2 is defined by the mapping 
\begin{align}
\phi_{2}: \mc W_c \longrightarrow \mc X^n_2.
\label{EncodingFunction__Encoder2__NoStatesatEncoder2}
\end{align}
Observing that $\mc C' \subseteq \mc C$ shows that the knowledge of the states only strictly causally at Encoder 2 in our model in general increases the capacity region. In Section~\ref{secIII_subsecB} we will show that the inclusion can be \textit{strict}, i.e., $\mc C' \subsetneq \mc C$.

Furthermore, one can easily check that in the case of a channel that does not depend on the states, i.e., $W_{Y|X_1,X_2,S}=W_{Y|X_1,X_2}$, the capacity region $\mc C$ reduces to the closure of the union of all rate-pairs $(R_c,R_1)$ satisfying 
\begin{align}
R_1 \: &\leq \: I(X_1;Y|Z,X_2)\nonumber\\
R_c+ R_1 \: &\leq \: I(X_1,X_2;Y)
\end{align}
for some 
\begin{align}
P_{Z,X_1,X_2,Y} &= P_{Z}P_{X_1|Z}P_{X_2|Z}W_{Y|X_1,X_2}.
\end{align}
Also, it is noted that Theorem~\ref{Theorem__CapacityRegionDiscreteMemorylessChannel} remains intact if we allow feedback to Encoder 1, i.e., before producing the $i$th channel input symbol, Encoder 1 also observes the past channel output sequence $Y^{i-1}$. That is, the encoding at Encoder 2 is still given by \eqref{EncodingFunction__Encoder2} and  that at Encoder 1 is replaced by a sequence of mappings $\{\phi_{1,i}\}^n_{i=1}$, with
\begin{align}
\phi_{1,i}: \mc W_c{\times}\mc W_1{\times}\mc S^n {\times}\mc Y^{i-1} \longrightarrow \mc X_1.
\label{EncodingFunctionWithFeedback__Encoder1}
\end{align}

We now turn to the proof of achievability of Theorem~\ref{Theorem__CapacityRegionDiscreteMemorylessChannel}. The following remark is useful for a better understanding of the coding scheme that we use to establish the achievability of Theorem \ref{Theorem__CapacityRegionDiscreteMemorylessChannel}.

\begin{remark}\label{remark1}
The proof of achievability of Theorem~\ref{Theorem__CapacityRegionDiscreteMemorylessChannel} is based on a block-Markov coding scheme in which a lossy version of the state is conveyed to the decoder, in the spirit of \cite{LS10a,LS10b,LS11a}, in addition to a generalized Gel'fand-Pinsker binning for the transmission of the information messages \cite{GP80}. However, unlike \cite{LS10a,LS10b} and \cite{LS11a} where Wyner-Ziv compression \cite{WZ76} is utilized for the transmission of the lossy version of the state, here, inspired by the noisy network coding scheme of \cite{H-LKGC11}, at each block the compression index of the state of the previous block is sent using standard rate distortion, not Wyner-Ziv binning. Also, unlike \cite{LS10a,LS10b} and \cite{LS11a} where every information message is divided into blocks and different submessages are sent over these blocks and then decoded one at a time using the same codebook as in the original compress-and-forward scheme by Cover and El Gamal \cite{CG79}, here the \textit{entire} common message and the \textit{entire} individual message are transmitted over \textit{all} blocks using codebooks that are generated independently, one for each block, and the decoding is performed simultaneously using all blocks as in \cite{H-LKGC11}. At the end of the transmission, the receiver uses the outputs of all blocks to perform simultaneous decoding of the information common and individual messages, without uniquely decoding the compression indices. \qed 
\end{remark}

\textbf{Proof of Achievability:} 

The transmission takes place in $B$ blocks. The common message $W_c$ and the individual message $W_1$ are sent over \textit{all} blocks. We thus have $B_{W_c}=nB{R_c}$, $B_{W_1}=nB{R_1}$, $N=nB$, $R_{W_c}=B_{W_c}/N=R_c$ and $R_{W_1}=B_{W_1}/N=R_1$, where $B_{W_c}$ is the number of common message bits, $B_{W_1}$ is the number of individual message bits, $N$ is the number of channel uses and $R_{W_c}$ and $R_{W_1}$ are the overall rates of the common and individual messages, respectively.

\noindent \textbf{Codebook Generation:} Fix a measure $P_{S,U,V,X_1,X_2,Y} \in \mc P$. Fix $\epsilon > 0$, $\eta_c > 0$, $\eta_1 > 0$, $\hat{\eta} > 0$, $\delta > 1$ and denote $M_c = 2^{nB[R_c-\eta_c\epsilon]}$, $M_1 = 2^{nB[R_1-\eta_1\epsilon]}$, $\hat{M} = 2^{n[\hat{R}+\hat{\eta}\epsilon]}$ and $J=2^{n[I(U;S|V,X_2)+\delta\epsilon]}$.

\noindent We randomly and independently generate a codebook for each block.

\begin{itemize}
\item[1)] For each block $i$, $i=1,\hdots,B$, we generate $M_c\hat{M}$ independent and identically distributed (i.i.d.) codewords $\dv x_{2,i}(w_c,t'_i)$ indexed by $w_c=1,\hdots,R_c$, $t'_i=1,\hdots,\hat{M}$, each with i.i.d. components drawn according to $P_{X_2}$.
\item[2)] For each block $i$, for each codeword $\dv x_{2,i}(w_c,t'_i)$,  we generate $\hat{M}$ i.i.d. codewords $\dv v_i(w_c,t'_i,t_i)$ indexed by $t_i=1,\hdots,\hat{M}$, each with i.i.d. components drawn according to $P_{V|X_2}$.
\item[3)]  For each block $i$, for each codeword $\dv x_{2,i}(w_c,t'_i)$, for each codeword $\dv v_i(w_c,t'_i,t_i)$, we generate a collection of $JM_1$ i.i.d. codewords $\{\dv u_i(w_c,t'_i,t_i,w_1,j_i)\}$ indexed by $w_1=1,\hdots,M_1$, $j_i=1,\hdots,J$, each with i.i.d. components draw according to $P_{U|V,X_2}$.
\end{itemize}

\textbf{Encoding:} Suppose that a common message $W_c=w_c$ and an individual message $W_1=w_1$ are to be transmitted. As we mentioned previously, $w_c$ and $w_1$ will be sent over \textit{all} blocks. We denote by $\dv s[i]$ the state affecting the channel in block $i$, $i=1,\hdots,B$. For convenience, we let $\dv s[0]=\emptyset$ and $t_{-1}=t_0=1$ (a default value). The encoding at the beginning of block $i$, $i=1,\hdots,B$, is as follows.

\noindent Encoder $2$, which has learned the state sequence $\dv s[i-1]$, knows $t_{i-2}$ and looks for a compression index $t_{i-1} \in [1:\hat{M}]$ such that $\dv v_{i-1}(w_c,t_{i-2},t_{i-1})$ is strongly jointly typical with $\dv s[i-1]$ and $\dv x_{2,i-1}(w_c,t_{i-2})$. If there is no such index or the observed state $\dv s[i-1]$ is not typical, $t_{i-1}$ is set to $1$ and an error is declared. If there is more than one such index $t_{i-1}$, choose the smallest. Encoder 2 then transmits the vector $\dv x_{2,i}(w_c,t_{i-1})$.

\noindent Encoder 1 obtains $\dv x_{2,i}(w_c,t_{i-1})$ similarly. It then finds the smallest compression index $t_i \in [1:\hat{M}]$ such that $\dv v_i(w_c,t_{i-i},t_i)$ is strongly jointly typical with $\dv s[i]$ and $\dv x_{2,i}(w_c,t_{i-1})$. Again, if there is no such index or the observed state $\dv s[i]$ is not typical, $t_i$ is set to $1$ and an error is declared. Next, Encoder 1 looks for the smallest $j_{i}$ such that $\dv u_i(w_c,t_{i-1},t_i,w_1,j_{i})$ is jointly typical with $\dv s[i]$  given $(\dv x_{2,i}(w_c,t_{i-1}),\dv v_i(w_c,t_{i-1},t_i))$. Denote this $j_{i}$ by $j^{\star}_{i}=j(\dv s[i],w_c,t_{i-1},t_i,w_1)$. If such $j^{\star}_{i}$ is not found, an error is declared and $j(\dv s[i],w_c,t_{i-1},t_i,w_1)$ is set to $j_{i}=J$. Encoder 1 then transmits a vector $\dv x_1[i]$ which is drawn i.i.d. conditionally given $\dv u_i(w_c,t_{i-1},t_i,w_1,j^{\star}_{i})$, $\dv s[i]$, $\dv v_i(w_c,t_{i-1},t_i)$ and $\dv x_{2,i}(w_c,t_{i-1})$ (using the conditional measure $P_{X_1|U,S,V,X_2}$ induced by \eqref{MeasureForCapacityRegionDiscreteMemorylessChannel}).

\textbf{Decoding:} At the end of the transmission, the decoder has collected all the blocks of channel outputs $\dv y[1],\hdots,\dv y[B]$.

\noindent \underline{\textit{Step (a):}} The decoder estimates message $w_c$ using \text{all} blocks $i=1,\hdots,B$, i.e., simultaneous decoding. It declares that $\hat{w}_c$ is sent if there exist $t^B=(t_1,\hdots,t_B) \in [1:\hat{M}]^{B}$, $w_1 \in [1:M_1]$ and $j^B=(j_{1},\hdots,j_{B}) \in [1:J]^B$ such that $\dv x_{2,i}(\hat{w}_c,t_{i-1})$, $\dv u_i(\hat{w}_c,t_{i-1},t_i,w_1,j_{i})$, $\dv v_i(\hat{w}_c,t_{i-1},t_i)$ and $\dv y[i]$ are jointly typical for all $i=1,\hdots,B$. One can show that the decoder obtains the correct $w_c$ as long as $n$ and $B$ are large and
\begin{align}
R_c + R_1 &\leq I(U,V,X_2;Y)-I(U,V,X_2;S).
\label{Constraint__On__SumRate}
\end{align}

\noindent \underline{\textit{Step (b):}} Next, the decoder  estimates message $w_1$ using again \text{all} blocks $i=1,\hdots,B$, i.e., simultaneous decoding. It declares that $\hat{w}_1$ is sent if there exist $t^B=(t_1,\hdots,t_B) \in [1:\hat{M}]^{B}$, $j^B=(j_{1},\hdots,j_{B}) \in [1:J]^B$ such that $\dv x_{2,i}(\hat{w}_c,t_{i-1})$, $\dv u_i(\hat{w}_c,t_{i-1},t_i,\hat{w}_1,j_{i})$, $\dv v_i(\hat{w}_c,t_{i-1},t_i)$ and $\dv y[i]$ are jointly typical for all $i=1,\hdots,B$. One can show that the decoder obtains the correct $w_1$ as long as $n$ and $B$ are large and
\begin{subequations}
\begin{align}
\label{Constraint__On__IndividualRate}
R_1 &\leq I(U;Y|V,X_2)-I(U;S|V,X_2)\\
R_1 &\leq I(U,V,X_2;Y)-I(U,V,X_2;S).
\end{align}
\end{subequations}
\hspace{14cm} \qed

In the coding scheme of Theorem~\ref{Theorem__CapacityRegionDiscreteMemorylessChannel}, the state compression is standard, i.e., uses no Wyner-Ziv binning, the same message is sent in every block, and the decoding of the sent message is performed jointly using all blocks. Although of no benefit in the case of one relay, the combination of these three features was shown to be essential in achieving rates that are strictly larger than those offered by schemes based on Cover and El Gamal classic compress-and-forward scheme \cite{CG79} for certain networks with multiple relays in \cite{H-LKGC11}. That is, the coding scheme of \cite{H-LKGC11} outperforms Cover and El Gamal classic compress-and-forward for some multi-relay networks in \cite{H-LKGC11}. One can wonder whether the same holds for our model, i.e., whether schemes based on Cover and El Gamal classic compress-and-forward, i.e., block Markov encoding combined with Wyner-Ziv binning, fall short of achieving optimality for our model. In this paper, we show that the capacity region $\mc C$ as given by \eqref{CapacityRegionDiscreteMemorylessChannel} can be achieved alternatively with a coding scheme that we obtain by building upon and modifying Cover and El Gamal original compress-and-forward scheme. The modification consists essentially in 1) decoding block-by-block backwardly instead of block-by-block forwardly and 2) non-unique decoding of the compression indices. (In fact, by investigating more closely the converse proof of Theorem~\ref{Theorem__CapacityRegionDiscreteMemorylessChannel}, we will show later that 2) can be relaxed essentially without altering the capacity region). The following theorem states the result.

\vspace{0.3cm}

\begin{theorem}\label{Theorem__WynerZivBinningOptimality}
For the state-dependent multiaccess channel model that we study, there exists an optimal coding scheme that uses Wyner-Ziv binning for the state compression. That is, the capacity region $\mc C$ given by \eqref{CapacityRegionDiscreteMemorylessChannel} can also be achieved using a coding scheme in which the state compression is performed using Wyner-Ziv binning.
\end{theorem}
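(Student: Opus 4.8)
The converse is already contained in the converse part of Theorem~\ref{Theorem__CapacityRegionDiscreteMemorylessChannel}, so it is enough to exhibit an achievability scheme, based on Wyner-Ziv binning \cite{WZ76}, whose rate region is exactly $\mc C$ as given by \eqref{CapacityRegionDiscreteMemorylessChannel}. The plan is to retain the block-Markov structure and the generalized Gel'fand-Pinsker binning of the proof of Theorem~\ref{Theorem__CapacityRegionDiscreteMemorylessChannel}, but to replace its noisy-network-coding-style state description by a compress-and-forward-style one in the spirit of \cite{CG79}: Encoder~2 compresses the previous block's state with a Wyner-Ziv code and forwards through $X_2$ only the index of the \emph{bin} containing the compression index; the decoder resolves the compression index using its own channel output of that block as side information; and decoding is carried out block-by-block \emph{backwards}, with the compression indices never required to be decoded uniquely. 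As in \cite{CG79}, and unlike the scheme of Theorem~\ref{Theorem__CapacityRegionDiscreteMemorylessChannel}, a fresh pair of submessages $(w_{c,i},w_{1,i})$ of rates $(R_c,R_1)$ is sent in block $i$, with codebooks drawn independently across the $B$ blocks.

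\textbf{Codebook and encoding.} For each block $i$ I would generate $2^{n(R_c+\hat R_0)}$ i.i.d. codewords $\dv x_{2,i}(w_{c,i},l_{i-1})\sim\prod P_{X_2}$, where $l_{i-1}\in[1:2^{n\hat R_0}]$ will carry the bin index of the previous block's compression index; for each of them $2^{n\hat R}$ i.i.d. codewords $\dv v_i(w_{c,i},l_{i-1},t_i)\sim\prod P_{V|X_2}$ with $\hat R>I(V;S|X_2)$ and with the index set $\{t_i\}$ partitioned uniformly into $2^{n\hat R_0}$ bins; and for each $\dv v_i$ a collection of $2^{n(R_1+J)}$ i.i.d. codewords $\dv u_i(w_{c,i},l_{i-1},t_i,w_{1,i},j_i)\sim\prod P_{U|V,X_2}$ with $J>I(U;S|V,X_2)$. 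At the start of block $i$ both encoders know $t_{i-1}$ --- Encoder~2 from the covering step on $\dv s[i-1]$ with a ``smallest index'' tie-break, Encoder~1 by reconstructing it from its noncausal knowledge of the whole state --- hence they know its bin index $l_{i-1}$; Encoder~2 sends $\dv x_{2,i}(w_{c,i},l_{i-1})$, while Encoder~1 additionally performs the covering step on $\dv s[i]$ to obtain $t_i$ and the generalized Gel'fand-Pinsker step to obtain $j_i^{\star}$, then transmits $\dv x_1[i]\sim P_{X_1|U,S,V,X_2}$. The binning rate $\hat R_0$ is set to the appropriate Wyner-Ziv value; it, together with $\hat R$ and $J$, will be eliminated at the end.

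\textbf{Decoding and error analysis.} Decoding runs from block $B$ down to block $1$. When block $i$ is treated, the index $l_i$ is already known (it was extracted, together with $\hat w_{c,i+1}$, while decoding block $i+1$, because $\dv x_{2,i+1}$ is indexed by $(w_{c,i+1},l_i)$); the decoder then looks for $\hat w_{c,i},\hat w_{1,i},\hat l_{i-1}$ for which there exist \emph{some} $t_i$ in bin $l_i$ and \emph{some} $j_i$ such that $\dv x_{2,i}(\hat w_{c,i},\hat l_{i-1})$, $\dv v_i(\hat w_{c,i},\hat l_{i-1},t_i)$, $\dv u_i(\hat w_{c,i},\hat l_{i-1},t_i,\hat w_{1,i},j_i)$ and $\dv y[i]$ are jointly typical, and requires uniqueness only of the triple $(\hat w_{c,i},\hat w_{1,i},\hat l_{i-1})$. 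Bounding each such error event by a standard joint-typicality/packing-lemma argument, then applying Fourier--Motzkin elimination to remove $\hat R$, $\hat R_0$ and $J$, and finally letting $n,B\to\infty$, should collapse the achievable region to exactly the two inequalities of \eqref{CapacityRegionDiscreteMemorylessChannel}.

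\textbf{Main obstacle.} The delicate part will be the backward error analysis with non-uniquely decoded compression indices. First, one must check that the extra degrees of freedom available to a competing submessage pair (a choice of $t_i$ in bin $l_i$ and of $j_i$) only inflate the relevant codeword count by the controlled factor $2^{n(\hat R-\hat R_0+J)}$, which is absorbed by the mutual-information terms after Fourier--Motzkin elimination, so that no constraint survives beyond those defining $\mc C$. Second, and conceptually the heart of the matter, one must see why the \emph{backward} schedule is what makes this work: it is precisely the availability of $l_i$ \emph{before} block $i$ is decoded that lets $\dv y[i]$ play the role of Wyner-Ziv side information for the compression index of block $i$; with the forward schedule of \cite{CG79} the bin index of block $i$'s compression is not yet available when block $i$ is decoded, and --- in contrast to the relay channel --- the resulting constraints do not appear to reduce to $\mc C$. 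Relatedly, one must confirm that requiring the compression indices to be decoded \emph{uniquely} would force an additional Wyner-Ziv-type constraint of the form $\hat R-\hat R_0\le I(V;Y|X_2)+\cdots$, whose satisfaction would need larger alphabets for $\mc U$ and $\mc V$, whereas the non-unique variant avoids it altogether; this is what makes the region come out equal to, and not strictly smaller than, $\mc C$.
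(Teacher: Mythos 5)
Your proposal follows essentially the same route as the paper's proof in Appendix~C: per-block fresh submessages with independently drawn codebooks, Wyner-Ziv binning of the compression index with only the bin (cell) index carried by $X_2$, Willems-style backward decoding in which the bin index for block $i$ is recovered while processing block $i+1$, non-unique decoding of the compression indices, and Fourier--Motzkin elimination of the auxiliary rates to collapse the constraints to exactly \eqref{CapacityRegionDiscreteMemorylessChannel}. The only cosmetic difference is that you decode the triple $(\hat w_{c,i},\hat w_{1,i},\hat l_{i-1})$ jointly whereas the paper splits the per-block decoding into sequential steps (cell index, then $w_c$, then previous cell index, then $w_1$, with the two cases $z'=z$ and $z'\neq z$ handled separately); both yield the same five inequalities before elimination, and your remarks on why unique decoding of the compression index would add the extra constraint handled in Corollary~\ref{Corollary__EquivalentCharacterizationCapacityRegionDiscreteMemorylessChannel} are consistent with the paper.
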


\vspace{0.3cm}

\textbf{Proof:} The achievability proof of Theorem~\ref{Theorem__WynerZivBinningOptimality} is based on a block-Markovian coding scheme that combines carefully Gel'fand-Pinsker binning and Wyner-Ziv binning, and utilizes backward decoding with non-unique decoding of the compression indices. The complete proof of Theorem~\ref{Theorem__WynerZivBinningOptimality} is given in Appendix~\ref{appendixTheorem__WynerZivBinningOptimality}.

As we mentioned previously, the coding scheme  of Theorem~\ref{Theorem__WynerZivBinningOptimality} shares elements with Cover and El Gamal original compress-and-forward \cite[Theorem 7]{CG79}; but differs from it mainly in two aspects. First, it uses backward decoding instead of the forward decoding of \cite{CG79}; and, second,  unlike \cite{CG79} it does not require unique decoding of the compression indices. The second aspect is essential for getting the \textit{same} rate expression as in \eqref{CapacityRegionDiscreteMemorylessChannel}, with no additional constraints. However, as we will see shortly in the corollary that will follow, one can modify the coding scheme of Theorem~\ref{Theorem__WynerZivBinningOptimality} in a way to get the compression indices decoded uniquely and \textit{still} get the capacity region, at the expense of slightly larger $|\mc V|$ and larger  $|\mc U|$. The key element is the observation that the constraint introduced by getting the compression index decoded, i.e., (see Appendix~\ref{appendixCorollary__EquivalentCharacterizationCapacityRegionDiscreteMemorylessChannel})
\begin{align}
I(V;S|X_2)-I(V;Y|X_2) &\leq I(X_2;Y),
\label{ConstraintOuterBound__Form2}
\end{align}
or, equivalently,
\begin{align}
I(V,X_2;Y)-I(V,X_2;S) &\geq 0,
\label{ConstraintOuterBound__Form1}
\end{align}
is also \textit{implicit} in the converse proof of Theorem~\ref{Theorem__CapacityRegionDiscreteMemorylessChannel}. That is, the auxiliary random variables  $U$ and $V$ of the converse proof of Theorem~\ref{Theorem__CapacityRegionDiscreteMemorylessChannel} in Appendix~\ref{appendixTheorem__CapacityRegionDiscreteMemorylessChannel} satisfy \eqref{ConstraintOuterBound__Form1}. 

\vspace{0.3cm}

\begin{corollary}\label{Corollary__EquivalentCharacterizationCapacityRegionDiscreteMemorylessChannel}
The coding scheme of Theorem~\ref{Theorem__WynerZivBinningOptimality} can be modified in a way to get the compression index decoded. The resulting coding scheme is optimal and achieves an equivalent characterization of the capacity region of the model that we study given by the set of all rate pairs $(R_c,R_1)$ such that
\begin{align}
R_1 \: &\leq \: I(U;Y|V,X_2)-I(U;S|V,X_2) \nonumber\\
R_c+ R_1 \: &\leq \: I(U,V,X_2;Y)-I(U,V,X_2;S)
\label{EquivalentCharacterizationCapacityRegionDiscreteMemorylessChannel}
\end{align}
for some measure $(S,U,V,X_1,X_2,Y) \in \mc P$ and satisfying
\begin{align}
I(V,X_2;Y)-I(V,X_2;S) &\geq 0,
\label{ConstraintOuterBound}
\end{align}
where the auxiliary random variables $V$ and $U$ have their alphabets bounded as
\begin{subequations}
\begin{align}
\label{BoundsOnCardinalityOfAuxiliaryRandonVariableV__EquivalentCharacterizationCapacityRegion__DiscreteMemorylessChannel}
&|\mc V| \leq |\mc S||\mc X_1||\mc X_2|+2\\
&|\mc U| \leq \Big(|\mc S||\mc X_1||\mc X_2|+2\Big)|\mc S||\mc X_1||\mc X_2|.
\label{BoundsOnCardinalityOfAuxiliaryRandonVariableU__EquivalentCharacterizationCapacityRegion__DiscreteMemorylessChannel}
\end{align}
\label{BoundsOnCardinalityOfAuxiliaryRandonVariables__EquivalentCharacterizationCapacityRegion__DiscreteMemorylessChannel}
\end{subequations}

\end{corollary}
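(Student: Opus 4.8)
The plan is to establish the corollary in three parts. (i) A direct part: modify the scheme of Theorem~\ref{Theorem__WynerZivBinningOptimality} so that the receiver is additionally required to decode the compression (bin) indices; its error analysis then yields exactly the region defined by \eqref{EquivalentCharacterizationCapacityRegionDiscreteMemorylessChannel} together with the single extra constraint \eqref{ConstraintOuterBound}. (ii) A tightness part: show that \eqref{ConstraintOuterBound} does not shrink the region, because the auxiliary random variables produced by the converse proof of Theorem~\ref{Theorem__CapacityRegionDiscreteMemorylessChannel} already satisfy it; combined with (i), this identifies the region of \eqref{EquivalentCharacterizationCapacityRegionDiscreteMemorylessChannel} and \eqref{ConstraintOuterBound} with $\mc C$, i.e. with the capacity region, so the modified scheme is optimal. (iii) A cardinality part: rerun the support-lemma argument of Proposition~\ref{Proposition__Properties__of__CapacityRegion}, additionally keeping the functional $I(V,X_2;Y)-I(V,X_2;S)$ fixed so that \eqref{ConstraintOuterBound} survives, which yields \eqref{BoundsOnCardinalityOfAuxiliaryRandonVariables__EquivalentCharacterizationCapacityRegion__DiscreteMemorylessChannel}.

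For part (i) I would start from the block-Markov scheme underlying Theorem~\ref{Theorem__WynerZivBinningOptimality}: in each block the current state $\dv s$ is covered by a $\dv v$-codeword at rate $I(V;S|X_2)$ given $\dv x_2$, the $\dv v$-codebook is partitioned into bins of rate $I(V;Y|X_2)$, the resulting bin index is forwarded through the $\dv x_2$-codeword of the next block, and the information messages are superimposed via a Gel'fand--Pinsker layer carried by $\dv u$; the decoder processes the blocks backwards. The only change is that the decoder now recovers each bin index uniquely, using the attached pair $(\dv x_2,\dv y)$ as side information to single out the correct $\dv v$-codeword inside the identified bin. The residual bin-index rate $I(V;S|X_2)-I(V;Y|X_2)$ must then be accommodated by the spare rate $I(X_2;Y)$ of the $\dv x_2$-codebook, which is precisely the constraint $I(V;S|X_2)-I(V;Y|X_2)\leq I(X_2;Y)$, i.e. \eqref{ConstraintOuterBound__Form2}, equivalently \eqref{ConstraintOuterBound} since $X_2\perp S$. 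A routine check of the remaining joint-typicality events shows that no constraint other than this one and those already present in \eqref{CapacityRegionDiscreteMemorylessChannel} is introduced.

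Part (ii) is the crux. Since the region defined by \eqref{EquivalentCharacterizationCapacityRegionDiscreteMemorylessChannel} and \eqref{ConstraintOuterBound} is obtained from $\mc C$ by imposing an extra constraint, it is contained in $\mc C$; for the reverse inclusion I would revisit the single-letterization in the converse of Theorem~\ref{Theorem__CapacityRegionDiscreteMemorylessChannel} and verify that its auxiliaries also obey \eqref{ConstraintOuterBound}. With the identification used there, in which $V_i$ carries the common message $W_c$, the strictly-causal states $S^{i-1}$ observed by Encoder~2, and a future output block $Y_{i+1}^n$ (so that $(V_i,X_{2,i})$ is a function of $(W_c,S^{i-1},Y_{i+1}^n)$, while $X_{2,i}=\phi_{2,i}(W_c,S^{i-1})$), the independence of $S_i$ from $(W_c,S^{i-1})$ together with the Csisz\'ar sum identity gives
\begin{align}
\sum_{i=1}^{n}\Big[I(V_i,X_{2,i};Y_i)-I(V_i,X_{2,i};S_i)\Big]=\sum_{i=1}^{n}I(W_c,Y_{i+1}^n;Y_i)\geq 0 .
\end{align}
Normalizing by $n$ and absorbing the time-sharing variable into $V$ turns this into $I(V,X_2;Y)-I(V,X_2;S)\geq 0$, i.e. \eqref{ConstraintOuterBound}. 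Hence the capacity region is contained in the region defined by \eqref{EquivalentCharacterizationCapacityRegionDiscreteMemorylessChannel} and \eqref{ConstraintOuterBound}, and the two regions coincide.

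Finally, for part (iii) I would repeat the Fenchel--Eggleston--Carath\'eodory argument used for Proposition~\ref{Proposition__Properties__of__CapacityRegion}, first thinning $\mc U$ conditionally on each value of $V$ and then thinning $\mc V$, but now preserving, in addition to $P_{S,X_1,X_2}$ and the two rate expressions of \eqref{EquivalentCharacterizationCapacityRegionDiscreteMemorylessChannel}, the functional $I(V,X_2;Y)-I(V,X_2;S)$; this one extra functional costs at most one more point of support for $V$, giving $|\mc V|\leq|\mc S||\mc X_1||\mc X_2|+2$, and the bound on $|\mc U|$ scales accordingly as in \eqref{BoundsOnCardinalityOfAuxiliaryRandonVariableU__EquivalentCharacterizationCapacityRegion__DiscreteMemorylessChannel}. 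I expect the main obstacle to be part (ii): one must reopen the converse of Theorem~\ref{Theorem__CapacityRegionDiscreteMemorylessChannel} and recognize that \eqref{ConstraintOuterBound} is already implicit in it, the cancellation produced by the Csisz\'ar sum identity in the displayed computation being the decisive point. By contrast, the scheme modification in part (i) and the cardinality bookkeeping in part (iii) are routine adaptations of arguments already developed in the paper.
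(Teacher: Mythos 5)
Your proposal is correct and follows the same three-part structure as the paper's proof: a modified backward-decoding scheme in which the cell/compression indices are decoded uniquely and whose Fourier--Motzkin elimination produces exactly the single extra constraint \eqref{ConstraintOuterBound}, a verification that the auxiliary variables of the converse of Theorem~\ref{Theorem__CapacityRegionDiscreteMemorylessChannel} already satisfy that constraint (so the region is not shrunk), and a rerun of the support-lemma argument with one additional functional giving \eqref{BoundsOnCardinalityOfAuxiliaryRandonVariables__EquivalentCharacterizationCapacityRegion__DiscreteMemorylessChannel}. The only, immaterial, difference is in the verification step: you establish $\sum_{i=1}^{n}\big[I(\bar{V}_i,X_{2,i};Y_i)-I(\bar{V}_i,X_{2,i};S_i)\big]=\sum_{i=1}^{n}I(W_c,Y^n_{i+1};Y_i)\geq 0$ by a direct Csisz\'ar-sum computation, whereas the paper extracts the same nonnegativity by subtracting the individual-rate identity from the sum-rate bound via $I(W_1;Y^n|W_c)\leq I(W_c,W_1;Y^n)$; both are valid.
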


\vspace{0.3cm}

\textbf{Proof:} The coding scheme that we use for the proof of Corollary~\ref{Corollary__EquivalentCharacterizationCapacityRegionDiscreteMemorylessChannel} is very similar to that of Theorem~\ref{Theorem__WynerZivBinningOptimality}, but with unique decoding of the compression indices. The details of the proof are given in Appendix~\ref{appendixCorollary__EquivalentCharacterizationCapacityRegionDiscreteMemorylessChannel}.

\vspace{0.3cm}

We now establish an alternative outer bound on the capacity region of the DM MAC model that we study. This outer bound will turn out to be useful in the proof of the converse part of the coding theorem for the Gaussian case in Section~\ref{secIV} since, as it will be shown, it is also achievable in that case.

\begin{theorem}\label{Theorem__AlternativeOuterBoundDiscreteMemorylessChannel}

The capacity region of the multiple access channel with states known non-causally at one encoder and strictly causally at the other encoder is contained in the closure of the set of all rate-pairs $(R_c,R_1)$ satisfying
\begin{align}
R_1 \: &\leq \: I(X_1;Y|S,X_2) \nonumber\\
R_c+ R_1 \: &\leq \: I(X_1,X_2;Y|S)-I(X_2;S|Y),
\label{AlternativeOuterBoundDiscreteMemorylessChannel}
\end{align}
for some probability distribution of the form
\begin{align}
&P_{S,X_1,X_2,Y}=Q_SP_{X_2}P_{X_1|X_2,S}W_{Y|X_1,X_2,S}.
\label{MeasureForAlternativeOuterBoundDiscreteMemorylessChannel}
\end{align}
\end{theorem}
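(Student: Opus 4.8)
The plan is to prove the outer bound by standard single-letterization, starting from a code of rate $(R_c,R_1)$ with vanishing error probability and applying Fano's inequality, but with the key twist that we must \emph{identify the right single-letter auxiliary structure} — here there is no auxiliary $U$ or $V$; instead $S$ itself plays the role that a compressed state description plays in the inner bound, and the term $I(X_2;S|Y)$ is the ``price'' of the state cooperation. First I would fix $n$, a good code, and set $Y^n$, $S^n$, $X_1^n=\phi_1(W_c,W_1,S^n)$, $X_{2,i}=\phi_{2,i}(W_c,S^{i-1})$. By Fano, $H(W_c,W_1|Y^n)\le n\epsilon_n$. I would bound $nR_1\le I(W_1;Y^n|W_c,S^n)+n\epsilon_n$, and then, because given $(W_c,S^n)$ the pair $(X_1^n,X_2^n)$ is determined (Encoder~2's inputs are functions of $W_c,S^{i-1}$, and Encoder~1's of $W_c,W_1,S^n$), expand $I(W_1;Y^n|W_c,S^n)\le I(X_1^n;Y^n|X_2^n,S^n)=\sum_i I(X_1^n;Y_i|Y^{i-1},X_2^n,S^n)\le\sum_i I(X_{1,i};Y_i|X_{2,i},S_i)$, using the memoryless property and the fact that conditioning on extra variables that form the right Markov chains does not increase the mutual information beyond the per-letter term; this yields the first inequality after the usual time-sharing/convexification.

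For the sum rate the plan is: $n(R_c+R_1)\le I(W_c,W_1;Y^n)+n\epsilon_n$. I would first add and subtract $S^n$: $I(W_c,W_1;Y^n)\le I(W_c,W_1;Y^n,S^n)=I(W_c,W_1;S^n)+I(W_c,W_1;Y^n|S^n)=I(W_c,W_1;Y^n|S^n)$ since messages are independent of the i.i.d. state. Then $I(W_c,W_1;Y^n|S^n)\le I(X_1^n,X_2^n;Y^n|S^n)$. Now the subtlety: $X_2^n$ is \emph{not} independent of $S^n$ for a general code — wait, actually $X_{2,i}=\phi_{2,i}(W_c,S^{i-1})$ is a function of the past states, so marginally each $X_{2,i}$ is independent of $S_i$ but the block $X_2^n$ correlates with $S^n$. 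The trick, following the Lapidoth--Steinberg style arguments, is to write $I(X_1^n,X_2^n;Y^n|S^n)=\sum_i I(X_1^n,X_2^n;Y_i|Y^{i-1},S^n)$, peel down to $\sum_i I(X_{1,i},X_{2,i},\text{(past)};Y_i|S_i)$, and then \emph{absorb the extra conditioning into a correction term} of the form $\sum_i [I(X_{2,i};S_i|\cdot)-\ldots]$ that telescopes or bounds to $-I(X_2;S|Y)$ after introducing a time-sharing variable. Concretely I expect to need the identity $\sum_i I(S_{i+1}^n;Y_i|Y^{i-1},\ldots)=\sum_i I(Y^{i-1};S_i|\ldots)$ (Csisz\'ar sum lemma) to convert the ``future state'' terms created by $X_1^n$ depending on all of $S^n$ into the $I(X_2;S|Y)$ penalty.

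The main obstacle, and where I would spend the most care, is exactly this sum-rate manipulation: correctly handling the fact that $X_1^n$ depends on the \emph{entire} state sequence (so $Y_i$ depends on future states through $X_{1,i}$'s codebook only via $S_i$, but the auxiliary terms $Y^{i-1}$ carry information about $S_i$) while $X_2^n$ depends only on the \emph{past}, and showing the cross terms collapse precisely to $-I(X_2;S|Y)$ with the single-letter joint law $Q_S P_{X_2}P_{X_1|X_2,S}W_{Y|X_1,X_2,S}$ — in particular verifying $X_2\perp S$ in the single-letter distribution, which comes from $X_{2,i}=\phi_{2,i}(W_c,S^{i-1})$ being independent of $S_i$. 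After the single-letterization I would introduce the standard uniform time-sharing random variable $Q$ and then argue it can be absorbed (it is independent of $S$, and can be folded into $X_2$) so that the region is as stated without an explicit $Q$. The cardinality of $\mc S,\mc X_1,\mc X_2$ being finite is not needed for the outer bound itself (the statement is about the DM model), and I would close by noting that the region is closed and convex by the usual arguments, matching the form required for use in the Gaussian converse of Section~\ref{secIV}.
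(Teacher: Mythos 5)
There is a genuine gap in the sum-rate part of your plan, and it occurs at the very first step. You write $I(W_c,W_1;Y^n)\le I(W_c,W_1;Y^n,S^n)=I(W_c,W_1;Y^n|S^n)\le I(X_1^n,X_2^n;Y^n|S^n)$. But $I(X_1^n,X_2^n;Y^n|S^n)$ single-letterizes directly to $\sum_i I(X_{1,i},X_{2,i};Y_i|S_i)$ (via $H(Y_i|Y^{i-1},S^n)\le H(Y_i|S_i)$ and the memoryless property), which is the sum-rate bound \emph{without} the $-I(X_2;S|Y)$ penalty. Once you have passed to this upper bound, no amount of peeling, telescoping, or Csisz\'ar-sum manipulation can reintroduce a negative correction term: you would be trying to derive a smaller quantity from a larger one. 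The inequality $I(W_c,W_1;Y^n)\le I(W_c,W_1;Y^n,S^n)$ discards exactly the term that becomes the penalty.

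The paper's proof keeps that term by using the exact identity $I(W_c,W_1;Y^n)=I(W_c,W_1,S^n;Y^n)-I(S^n;Y^n|W_c,W_1)$ and then \emph{lower}-bounding the subtracted term: since $S^n$ is i.i.d.\ and independent of the messages, $I(S^n;Y^n|W_c,W_1)=\sum_i H(S_i)-H(S_i|W_c,W_1,Y^n,S^{i-1})\ge\sum_i H(S_i)-H(S_i|X_{2,i},Y_i)=\sum_i I(S_i;X_{2,i},Y_i)$, where $X_{2,i}$ may be inserted into the conditioning because it is a deterministic function of $(W_c,S^{i-1})$. Combining with $I(W_c,W_1,S^n;Y^n)\le\sum_i I(X_{1,i},X_{2,i},S_i;Y_i)$ and the identity $I(X_1,X_2,S;Y)-I(S;X_2,Y)=I(X_1,X_2;Y|S)-I(X_2;S|Y)$ gives the claimed bound. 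Note that no Csisz\'ar sum identity is needed here (it is used in the converse of Theorem~1, not Theorem~3), and the penalty does not arise from the ``future states'' seen by Encoder~1 but from the decoder's ability to infer $S_i$ from $(X_{2,i},Y_i)$. Your intuition that $X_{2,i}\perp S_i$ yields the single-letter form $Q_SP_{X_2}P_{X_1|X_2,S}$ is correct, and your $R_1$ bound (revealing $S^n$ to the decoder) is fine, but the sum-rate argument as proposed does not establish the theorem.
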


\textbf{Proof:} The proof of Theorem~\ref{Theorem__AlternativeOuterBoundDiscreteMemorylessChannel} appears in Appendix~\ref{appendixTheorem__AlternativeOuterBoundDiscreteMemorylessChannel}.

\begin{remark}\label{remark4}
In \cite{SBSV07a} the authors use an extension of the converse part of the proof of the standard Gel'fand-Pinsker capacity to establish a converse proof for the model with states $S^n$ known non-causally at Encoder 1 and no states at all at Encoder 2. Then, they show that their outer bound, which involves an auxiliary random variable, is itself contained in the region defined by \eqref{AlternativeOuterBoundDiscreteMemorylessChannel}. In Appendix~\ref{appendixTheorem__AlternativeOuterBoundDiscreteMemorylessChannel}, we provide a direct proof that the region defined by \eqref{AlternativeOuterBoundDiscreteMemorylessChannel} is an outer bound on the capacity region of the more general model that we study here. Our converse proof accounts also for the availability of the states at Encoder 2 in a strictly causal manner. \qed
\end{remark}

\subsection{Example}\label{secIII_subsecB}

In Section~\ref{secIII_subsecA} we have shown that the capacity region $\mc C$ of the model of Figure~\ref{ModelForMACwithAsymmetricCSI} is potentially larger than that, $\mc C'$, of the same model but with Encoder 2 being totally unaware of the states, i.e., $\mc C' \subseteq \mc C$. In this section, we show that this inclusion can be \textit{strict}, i.e., $\mc C' \subsetneq \mc C$.

We use $h(\alpha)$ to denote the entropy of a Bernoulli\:$(\alpha)$ source, i.e.,
\begin{equation}
h(\alpha) = - \alpha \log(\alpha) - (1-\alpha)\log(1-\alpha)
\end{equation}
and $p * q$ to denote the binary convolution, i.e.,
\begin{equation}
p * q = p(1-q)+q(1-p).
\end{equation}

Consider the binary memoryless MAC shown in Figure~\ref{BSCModelCounterExample}. Here, all the random variables are binary $\{0,1\}$. The channel has two output components, i.e., $Y^n=(Y^n_1,Y^n_2)$. The component $Y^n_2$ is deterministic, $Y^n_2=X^n_2$, and the component $Y^n_1=X^n_1 + S^n + Z^n_1$, where the addition is modulo $2$. Encoder 2 knows the states only strictly causally and has no message to transmit. Encoder 1 knows the states non-causally and transmits an individual message $W_1$. The state and noise vectors are independent and memoryless, with the state process $S_i$, $i \geq 1$, and the noise process $Z_{1,i}$, $i \geq 1$, assumed to be Bernoulli $(\frac{1}{2})$ and  Bernoulli $(p)$ processes, respectively. The vectors $X^n_1$ and $X^n_2$ are the channel inputs, subjected to the constraints
\begin{align}
\sum_{i=1}^{n} X_{1,i} &\leq nq_1 \quad \text{and} \quad \sum_{i=1}^{n} X_{2,i} \leq nq_2.
\label{BinaryChannel__InputsConstraints}
\end{align}

\begin{figure}[htpb]
\centering
\includegraphics[width=0.5\linewidth]{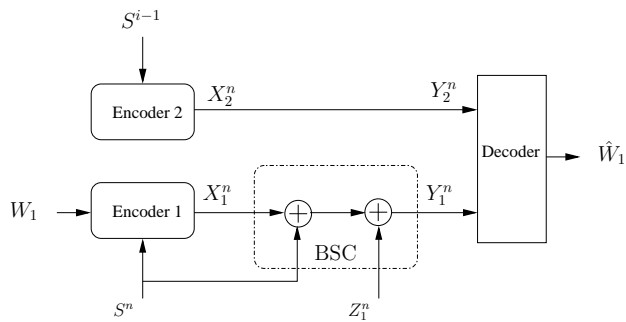}
\caption{Binary state-dependent MAC example with two output components, $Y^n=(Y^n_1,Y^n_2)$, with $Y^n_1=X^n_1 + S^n + Z^n_1$ and $Y^n_2=X^n_2$.}
\label{BSCModelCounterExample}
\end{figure}

For this example, as we will show shortly, the strictly causal knowledge of the states at Encoder 2 \textbf{does} help, and in fact Encoder 1 can transmit at rates that are larger than the standard Gel'fand-Pinsker $I(U;Y_1)-I(U;S)$ which would be the capacity had Encoder 2 been of no help.

\textit{Claim 1:} The capacity of the state-dependent binary memoryless MAC shown in Figure~\ref{BSCModelCounterExample} is given by
\begin{align}
C_B &= \max_{p(x_1|s)} \:\: I(X_1;Y_1|S).
\label{Capacity__ModelCounterExample}
\end{align}

\textit{Proof:} 1) The achievability follows from Theorem~\ref{Theorem__CapacityRegionDiscreteMemorylessChannel}, as follows. Set $R_c=0$ and $V=S$, $U=X_1$, $Y_2=X_2$ with $X_2$ independent of $(S,X_1)$ in Theorem~\ref{Theorem__CapacityRegionDiscreteMemorylessChannel}. Evaluating the first inequality, we obtain
\begin{align}
R_1 &\leq I(U;Y|V,X_2)-I(U;S|V,X_2)\\
    &= I(X_1;Y_1,X_2|S,X_2)\\
    &= I(X_1;Y_1|S,X_2)\\
    &= I(X_1,X_2;Y_1|S)-I(X_2;Y_1|S)\\
    &= I(X_1;Y_1|S)+I(X_2;Y_1|X_1,S)-I(X_2;Y_1|S)\\
\label{Constraint1__MaximalIndividualRate__ModelCounterExample__Step1}
    &= I(X_1;Y_1|S)-I(X_2;Y_1|S)\\
    &= I(X_1;Y_1|S),
 \label{Constraint1__MaximalIndividualRate__ModelCounterExample}
\end{align}
where \eqref{Constraint1__MaximalIndividualRate__ModelCounterExample__Step1} follows since $X_2=Y_2$ and $Y_2 \leftrightarrow (X_1,S) \leftrightarrow Y_1$ is a Markov chain, and  the last equality follows by the Markov relation $X_2 \leftrightarrow  S \leftrightarrow Y_1$ for this example.

Evaluating the second inequality, we obtain
\begin{align}
R_1 &\leq I(U,V,X_2;Y)-I(U,V,X_2;S)\\
    &= I(X_1,S;Y_1,X_2)+H(X_2|X_1,S)-H(S)\\
    &= I(X_1,S;Y_1)+I(X_1,S;X_2|Y_1)+H(X_2|X_1,S)-H(S)\\
    &= I(X_1,S;Y_1)+H(X_2|Y_1)-H(X_2|X_1,S,Y_1)+H(X_2|X_1,S)-H(S)\\
\label{Constraint2__MaximalIndividualRate__ModelCounterExample__Step1}
    &= I(X_1;Y_1|S)+I(S;Y)+H(X_2|Y_1)-H(S)\\
    &= I(X_1;Y_1|S)+H(X_2|Y_1)-H(S|Y_1)\\
    &= I(X_1;Y_1|S)+H(Y_1|X_2)-H(Y_1|S)+H(X_2)-H(S)\\
    &= I(X_1;Y_1|S)+H(S;Y_1)+H(X_2)-H(S)
\label{Constraint2__MaximalIndividualRate__ModelCounterExample}
\end{align}
where \eqref{Constraint2__MaximalIndividualRate__ModelCounterExample__Step1} follows since $X_2$ is independent of $(X_1,S,Y_1)$.

Now, observe that with the choice $X_2 \sim \: \text{Bernoulli}\:(\frac{1}{2})$ independent of $(S,X_1)$, we have $H(X_2)=H(S)=1$ and, so, the RHS of \eqref{Constraint2__MaximalIndividualRate__ModelCounterExample} is larger than the RHS of \eqref{Constraint1__MaximalIndividualRate__ModelCounterExample}. This shows the achievability of the rate $R_1 = I(X_1;Y_1|S)$.

2) The converse follows straightforwardly by specializing Theorem 2 (or the cut-set upper bound) to this example, 
\begin{align}
R  &\leq I(X_1;Y|X_2,S)\\
   &= I(X_1;Y_1|X_2,S)\\
   &= H(Y_1|X_2,S)-H(Y_1|X_1,X_2,S)\\
\label{ProofCutSetBoundCounterExample__Step1}
   &\leq H(Y_1|S)-H(Y_1|X_1,X_2,S)\\
   &\leq  H(Y_1|S)-H(Y_1|X_1,S)\\
\label{ProofCutSetBoundCounterExample__Step2}
   &= I(X_1;Y_1|S),
\end{align}
where \eqref{ProofCutSetBoundCounterExample__Step1} holds since conditioning reduces entropy, and \eqref{ProofCutSetBoundCounterExample__Step2} holds by the Markov relation $X_2 \leftrightarrow (X_1,S) \leftrightarrow Y_1$.

\textit{Claim 2:} The capacity of the state-dependent binary memoryless MAC shown in Figure~\ref{BSCModelCounterExample} satisfies
\begin{align}
C_B &= h(p * q_1) - h(p) > \max_{p(u,x_1|s)} \:\: I(U;Y_1)-I(U;S).
\label{ExplicitCharacterization__Capacity__ModelCounterExample}
\end{align}

\textit{Proof:} Claim 2 is a simple consequence of Claim 1 and known results on the capacity of the binary dirty paper channel (see for example \cite{PCR03} and references therein). More specifically, the capacity $C_B$ in Claim 1 is that of a point-to-point state-dependent additive binary channel with a Bernoulli\:$(\frac{1}{2})$ state known at both transmitter and receiver ends, a Bernoulli\:$(p)$ noise representing the binary symmetric channel and average input constraint $q_1$ at the transmitter. Thus, an explicit characterization of $C_B$ is given by \cite{PCR03}
\begin{align}
C_B &= h(p * q_1) - h(p).
\end{align}

Let now $R_{\text{GP}}$ be the maximum achievable rate had the strictly causal part $S^{i-1}$ of the state been of no utility, or equivalently, had Encoder 2 been of no help. $R_{\text{GP}}$ is the capacity of a binary dirty paper channel given by \cite{PCR03}
\begin{align}
R_{\text{GP}} &= \max_{p(u,x_1|s)} \:\: I(U;Y_1)-I(U;S)\nonumber\\
&= \left\{
\begin{array}{ll}
G(q_1) &\text{if}\:\:\: p^{\star} \leq q_1 \leq \frac{1}{2}\\
q_1 \log(\frac{1-p^{\star}}{p^{\star}}) &\text{if}\:\:\: 0 \leq q_1 \leq p^{\star}
\end{array} \right\}
\label{Gelfand-PinskerRate__ModelCounterExample}
\end{align}
where $p^{\star}=1-2^{-h(p)}$ and the function $G(q)$, defined for $q \in [0,1/2]$, is given by
\begin{align}
G(q) = 
\left\{
\begin{array}{ll}
h(q) - h(p) &\text{if}\:\:\: p \leq q \leq \frac{1}{2}\\
0 &\text{if}\:\:\: 0 \leq q \leq p
\end{array}
\right.
\end{align}
Observing that $h(p * q_1) > h(q_1)$ for all $0 < q_1 < 1/2$, it is easy to see that $C_B > R_{\text{GP}}$.

\begin{remark}
In this example, the encoder that knows the states only strictly causally simply conveys these states to the receiver, noiselessly. The receiver then becomes aware of the channel states fully (since the delay in learning these states at the decoder has no impact on the capacity). This explains why Encoder 1 can transmit at rates that can be strictly larger than the standard Gel'fand-Pinker rate \eqref{Gelfand-PinskerRate__ModelCounterExample}; and in fact achieves the capacity \eqref{ExplicitCharacterization__Capacity__ModelCounterExample} of a state-dependent additive binary channel with the states known at both transmitter and receiver ends. \qed
\end{remark}

\subsection{Common-message Capacity}\label{secIII_subsecC}

In this section, we study the important case in which the two encoders transmit only the common message, i.e., $R_1=0$. The following corollary characterizes the capacity in this case, to which we refer as \textit{common-message capacity}.

\begin{corollary}\label{Corollary__CommonMessageCapacity}
The common message capacity, $C$, of the multiple access channel with common message and states known non-causally at one encoder and strictly causally at the other encoder is given by
\begin{align}
C &= \max I(K,X_2;Y)-I(K,X_2;S)
\end{align}
where the maximization is over joint measures $P_{S,K,X_1,X_2,Y}$  of the form
\begin{align}
P_{S,K,X_1,X_2,Y} &= Q_SP_{X_2}P_{K,X_1|S,X_2}.
\end{align}
\end{corollary}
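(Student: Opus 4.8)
By Theorem~\ref{Theorem__CapacityRegionDiscreteMemorylessChannel} the capacity region of the model is $\mc C$, so the common-message capacity is simply the largest $R_c$ with $(R_c,0)\in\mc C$, i.e.\ $C=\max\{R_c:(R_c,0)\in\mc C\}$. The plan is therefore to show that this number equals $\max\big[I(K,X_2;Y)-I(K,X_2;S)\big]$ over the measures $P_{S,K,X_1,X_2,Y}=Q_SP_{X_2}P_{K,X_1|S,X_2}W_{Y|X_1,X_2,S}$ of the statement, the single auxiliary $K$ absorbing the pair $(U,V)$ of \eqref{CapacityRegionDiscreteMemorylessChannel}. Intuitively, when only the common message is transmitted there is no individual-message binning to perform, so the Gel'fand--Pinsker auxiliary $U$ and the state-description auxiliary $V$ can be merged into one.

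\textbf{The two inclusions.} For ``$\geq$'' I would take any admissible $(S,K,X_1,X_2,Y)$ and run the coding scheme of Theorem~\ref{Theorem__CapacityRegionDiscreteMemorylessChannel} with $R_1=0$, $V=K$ and $U=\emptyset$ (degenerate). Then $(S,\emptyset,K,X_1,X_2,Y)\in\mc P$, since in \eqref{MeasureForCapacityRegionDiscreteMemorylessChannel} one has $P_{V|S,X_2}P_{U,X_1|S,V,X_2}=P_{K|S,X_2}P_{X_1|S,K,X_2}=P_{K,X_1|S,X_2}$ and $X_2\perp S$ by hypothesis; the first bound of \eqref{CapacityRegionDiscreteMemorylessChannel} is then vacuous ($0\leq 0$) and the second reads $R_c\leq I(K,X_2;Y)-I(K,X_2;S)$. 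For ``$\leq$'', if $(R_c,0)\in\mc C$ there is $(S,U,V,X_1,X_2,Y)\in\mc P$ with $R_c\leq I(U,V,X_2;Y)-I(U,V,X_2;S)$ (one may discard the first constraint of \eqref{CapacityRegionDiscreteMemorylessChannel}, which only enlarges the region). Setting $K:=(U,V)$, the factorization in \eqref{MeasureForCapacityRegionDiscreteMemorylessChannel} gives $P_{S,K,X_1,X_2}=Q_SP_{X_2}P_{K,X_1|S,X_2}$, so the measure is admissible for the corollary and $I(U,V,X_2;Y)-I(U,V,X_2;S)=I(K,X_2;Y)-I(K,X_2;S)$. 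Combining the two inclusions yields $C=\max\big[I(K,X_2;Y)-I(K,X_2;S)\big]$; a cardinality bound such as $|\mc K|\leq|\mc S||\mc X_1||\mc X_2|+1$, if one wishes, follows from the support-lemma argument used for Proposition~\ref{Proposition__Properties__of__CapacityRegion}.

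\textbf{Where the difficulty is.} Written this way the corollary is a short specialization, and the substance sits in the converse of Theorem~\ref{Theorem__CapacityRegionDiscreteMemorylessChannel}. It should be stressed, as the text preceding the statement warns, that a \emph{self-contained} converse cannot simply invoke the Gel'fand--Pinsker converse of \cite{SBSV07a}: here $X_{2,i}=\phi_{2,i}(W_c,S^{i-1})$ depends on the past states and $X_{1,i}=\phi_1(W_c,S^n)$ on the whole state block, so single-letterizing $I(W_c;Y^n)-I(W_c;S^n)$ through the Csisz\'ar sum identity requires a carefully chosen auxiliary random variable — one retaining enough of the past state sequence that the $S_i$-term comes out with the correct sign once $X_{2,i}$ is adjoined. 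I would bypass this entirely by quoting the converse of Theorem~\ref{Theorem__CapacityRegionDiscreteMemorylessChannel}; the only genuinely new step is then the identification $K=(U,V)$ above, which is routine bookkeeping.
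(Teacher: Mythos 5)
Your proposal is correct and follows essentially the same route as the paper: specialize Theorem~\ref{Theorem__CapacityRegionDiscreteMemorylessChannel} at $R_1=0$, drop (or trivially satisfy) the first constraint, and merge the pair $(U,V)$ into the single auxiliary $K$, checking that the two families of admissible measures coincide. Your explicit handling of the achievability direction via $V=K$, $U=\emptyset$ is a slightly more careful rendering of the paper's one-line "relaxing the constraint on $R_1$," but it is the same argument.
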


\textbf{Proof:} The proof of Corollary~\ref{Corollary__CommonMessageCapacity} appears in Appendix~\ref{appendixCorollary__CommonMessageCapacity}.

\begin{remark}\label{remark5}
The common-message capacity of our model in Corollary~\ref{Corollary__CommonMessageCapacity} coincides with the common-message of the model with the state sequence $S^n$ known noncausally at Encoder 1 and not at all at Encoder 2 \cite{SBSV07a}. That is, $C$ can also be obtained by relaxing the constraint on $R_1$ in the region $\mc C'$ defined by \eqref{CapacityRegionDiscreteMemorylessChannel__NoStatesatEncoder2} and \eqref{MeasureCapacityRegionDiscreteMemorylessChannel__NoStatesatEncoder2}. This shows that the knowledge of the states at Encoder 2 only strictly causally does not increase the common-message capacity. We should, however, note that this result is not a direct consequence of that in a MAC a state that is known only strictly causally at \textit{all} encoders does not increase the capacity; and, so, the converse proof is needed here. \qed 

\end{remark}

\section{Memoryless Gaussian Case}\label{secIV}

In this section, we consider a two-user state-dependent Gaussian MAC in which the channel states and the noise are additive and Gaussian.  

\subsection{Channel Model}\label{secIV_subsecA}
As in Section \ref{secII}, we assume that Encoder 1 knows the channel states non-causally and Encoder 2 knows the channel states strictly causally. The two encoders send some common message $W_c$; and, in addition, Encoder 1 sends an individual message $W_1$. At time instant $i$, the channel output $Y_i$ is related to channel inputs $X_{1,i}$ and $X_{2,i}$ from the two encoders, the channel state $S_i$ and the noise $Z_i$ by
\begin{align}
& Y_i=X_{1,i}+X_{2,i}+S_i+Z_i,
\label{ChannelModelForGaussianMACWithAsymmetricCSI}
\end{align}
where $S_i$ and $Z_i$ are zero-mean Gaussian random variables with variance $Q$ and $N$, respectively. The random variables $S_i$ and $Z_i$ at time instant $i \in \{1,\cdots,n\}$ are mutually independent, and independent from $(S_j,Z_j)$ for $j \neq i$. Also, at time $i$, the input $X_{2,i}$ is independent from the state $S_i$.

We consider the individual power constraints on the transmitted power
\begin{equation}
\sum_{i=1}^{n}X_{1,i}^2 \leq nP_1, \:\: \sum_{i=1}^{n}X_{2,i}^2 \leq nP_2.
\label{IndividualPowerConstraintsFullDuplexRegime}
\end{equation}
The definition of a code for this channel is the same as given in Section \ref{secII}, with the additional power constraints \eqref{IndividualPowerConstraintsFullDuplexRegime}.

\subsection{Capacity Region}\label{secIV_subsecB}

\noindent The following theorem characterizes the capacity region of the studied Gaussian model.

\begin{theorem}\label{Theorem__CapacityRegionMemorylessGaussianChannel}
 The capacity region of the Gaussian model \eqref{ChannelModelForGaussianMACWithAsymmetricCSI} is given by the set of all the rate pairs $(R_c,R_1)$ satisfying
 \begin{align}
 R_1\: &\leq \:\frac{1}{2}\log\Big(1+\frac{P_1(1-\rho^2_{12}-\rho^2_{2s})}{N}\Big)\nonumber\\
 R_c+R_1 \: & \leq \frac{1}{2}\log\Big(1+\frac{(\sqrt{P_2}+\rho_{12}\sqrt{P_1})^2}{P_1(1-\rho^2_{12}-\rho^2_{1s})+(\sqrt{Q}+\rho_{1s}\sqrt{P_1})^2+N}\Big)\nonumber\\&+\frac{1}{2}\log\Big(1+\frac{P_1(1-\rho^2_{12}-\rho^2_{1s})}{N}\Big),
 \label{OuterBoundGaussianChannel}
 \end{align}
 where the maximization is over $\rho_{12} \in [0,1]$, $\rho_{1s} \in [-1,0]$ such that
 \begin{equation}
 \rho^2_{12}+\rho^2_{1s} \leq 1.
 \label{AllowableCovarianceMatrixOuterBound}
 \end{equation}
 \end{theorem}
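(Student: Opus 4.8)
The plan is to prove the theorem in two halves: a converse showing that every achievable $(R_c,R_1)$ lies in the region of \eqref{OuterBoundGaussianChannel}, and a direct part showing that every such pair is achievable. For the converse I would start from the alternative DM outer bound of Theorem~\ref{Theorem__AlternativeOuterBoundDiscreteMemorylessChannel}, i.e.\ from the constraints $R_1\le I(X_1;Y|S,X_2)$ and $R_c+R_1\le I(X_1,X_2;Y|S)-I(X_2;S|Y)$ holding for some $P_{S,X_1,X_2,Y}=Q_SP_{X_2}P_{X_1|X_2,S}W_{Y|X_1,X_2,S}$, and then single-letterize with the usual time-sharing/averaging argument adapted to the power constraints. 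The key step is to parametrize the relevant second-order statistics: since $X_2$ is independent of $S$, introduce normalized correlation coefficients $\rho_{12}=\mathbb{E}[X_1X_2]/\sqrt{P_1P_2}$ and $\rho_{1s}=\mathbb{E}[X_1S]/\sqrt{P_1Q}$ (and note a coefficient $\rho_{2s}$ appears only through the individual-rate bound when one keeps $X_2$ in the conditioning), with the positive-semidefiniteness of the covariance matrix of $(X_1,X_2,S)$ forcing $\rho_{12}^2+\rho_{1s}^2\le 1$ as in \eqref{AllowableCovarianceMatrixOuterBound}. Then I would upper-bound each mutual-information term by its value under a jointly Gaussian input of the same covariance — using the maximum-entropy property of the Gaussian and the fact that $I(X_1;Y|S,X_2)=h(Y|S,X_2)-h(Z)$ and similar decompositions — and compute the resulting conditional differential entropies explicitly; the term $I(X_2;S|Y)$ is handled by writing it as $h(X_2|Y)-h(X_2|Y,S)$ and using that the conditional-Gaussian MMSE gives the extremal value. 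Carrying the algebra through yields exactly the two right-hand sides in \eqref{OuterBoundGaussianChannel}.

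For the direct part I would show that the Gaussian evaluation of the capacity region $\mc C$ of Theorem~\ref{Theorem__CapacityRegionDiscreteMemorylessChannel}, or more conveniently of the equivalent characterization in Corollary~\ref{Corollary__EquivalentCharacterizationCapacityRegionDiscreteMemorylessChannel}, contains the claimed region. Concretely, I would split $X_1=X_1'+\rho_{12}\sqrt{P_1/P_2}\,X_2+\rho_{1s}\sqrt{P_1/Q}\,S$ into a part coherent with $X_2$, a part coherent with $S$ (dirty-paper cancellation of the state), and an innovation $X_1'$ of variance $P_1(1-\rho_{12}^2-\rho_{1s}^2)$ independent of $(X_2,S)$; choose $V$ to carry the state-compression description that Encoder~2 forwards cooperatively (a Wyner-Ziv description of $S$ correlated with $X_2$), and choose $U$ as the Gel'fand-Pinsker auxiliary built from $X_1'$ with the standard dirty-paper scaling so that the penalty $I(U;S|V,X_2)$ is absorbed. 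Evaluating $I(U;Y|V,X_2)-I(U;S|V,X_2)$ then gives the first bound $\tfrac12\log(1+P_1(1-\rho_{12}^2-\rho_{2s}^2)/N)$ and evaluating $I(U,V,X_2;Y)-I(U,V,X_2;S)$ gives the sum-rate bound; one must also verify the feasibility constraint \eqref{ConstraintOuterBound}, i.e.\ $I(V,X_2;Y)-I(V,X_2;S)\ge 0$, which holds for the chosen $V$ because the description rate of $V$ is matched to what the $(X_2\to Y)$ coherent link can carry. I would invoke a standard discretization/quantization argument to pass from the finite-alphabet statement of Theorem~\ref{Theorem__CapacityRegionDiscreteMemorylessChannel} to the Gaussian continuous-alphabet setting with power constraints.

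The main obstacle I anticipate is the converse handling of the cooperation/state-compression term $-I(X_2;S|Y)$ together with the correlation $\rho_{12}$: one must show that although $X_2$ may be correlated with $X_1$ (coherent power gain in the sum-rate term) and $S$ is unknown to Encoder~2, the combination is still dominated by a jointly Gaussian input, and that the apparent extra freedom does not enlarge the region beyond \eqref{OuterBoundGaussianChannel}. This requires a careful conditional maximum-entropy argument: for fixed covariances, $h(Y|S)$, $h(Y|X_1,X_2,S)=h(Z)$, $h(X_2|Y)$ and $h(X_2|Y,S)$ are each extremized by the Gaussian, and one has to check the signs line up so the bound is a genuine upper bound rather than merely an equality at the Gaussian point. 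A secondary subtlety is reconciling the coefficient $\rho_{2s}$ appearing in the first inequality of \eqref{OuterBoundGaussianChannel} with the independence of $X_2$ and $S$ at the channel level — here $\rho_{2s}$ is really a free optimization parameter coming from how much of the auxiliary's ``budget'' is spent describing $S$ versus transmitting $W_c$, and I would make sure the achievability construction and the converse parametrization use it consistently. Once these extremal-entropy computations are pinned down, matching the two regions is routine algebra.
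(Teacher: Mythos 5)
Your converse plan coincides with the paper's: the paper also starts from the alternative outer bound of Theorem~\ref{Theorem__AlternativeOuterBoundDiscreteMemorylessChannel}, extends it to continuous alphabets, and then disposes of the Gaussian extremal-entropy algebra by referring to the corresponding computations in \cite{SBSV07a} and \cite{ZKLV10}; your parametrization by $\rho_{12}$, $\rho_{1s}$ and the handling of $-I(X_2;S|Y)$ via conditional maximum entropy is exactly the omitted algebra. One correction on the way: $\rho_{2s}$ in the first inequality of \eqref{OuterBoundGaussianChannel} is not ``a free optimization parameter coming from the auxiliary's budget'' --- since $X_2$ is independent of $S$ at the channel level, $\mathbb{E}[X_2S]=0$ necessarily, and the symbol is simply a typo for $\rho_{1s}$ (consistent with the maximization being declared only over $\rho_{12},\rho_{1s}$). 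Trying to give it an operational meaning would lead you astray.

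The genuine gap is in your direct part. The paper's achievability is one line: ignore the strictly causally known states at Encoder~2 altogether and invoke the generalized dirty-paper scheme of \cite[Theorem 7]{SBSV07a}, i.e., evaluate $\mc C$ with $V=\emptyset$. Indeed, the entire content of the theorem (see Remark~3) is that forwarding a compressed state description is useless in the Gaussian case. Your plan instead chooses a nontrivial $V$ carrying a Wyner--Ziv description of $S$ and asserts that evaluating $I(U,V,X_2;Y)-I(U,V,X_2;S)$ ``gives the sum-rate bound.'' This is unsubstantiated and, as stated, likely false for a generic such $V$: the description cost $I(V;S|X_2)$ enters the sum rate with a negative sign, and in the Gaussian setting the decoder-side-information gain it buys does not recoup it (that is precisely why the region collapses to that of \cite{SBSV07a}); a careless nontrivial $V$ strictly reduces the achievable sum rate below the target. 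To make your route work you would have to verify that the net contribution of $V$ is exactly zero for your specific choice, which is extra work with no payoff. The fix is simply to set $V=\emptyset$ in Theorem~\ref{Theorem__CapacityRegionDiscreteMemorylessChannel} (recovering the region $\mc C'$ of \eqref{CapacityRegionDiscreteMemorylessChannel__NoStatesatEncoder2}) and evaluate with the jointly Gaussian $(U,X_1,X_2)$ of the generalized dirty-paper construction, together with the standard discretization argument you already mention.
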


 \textbf{Proof:} An outline proof of Theorem~\ref{Theorem__CapacityRegionMemorylessGaussianChannel} is given in Appendix~\ref{appendixTheorem__CapacityRegionMemorylessGaussianChannel}.

 \begin{remark}\label{remark3}
 The capacity region of our model in Theorem~\ref{Theorem__CapacityRegionMemorylessGaussianChannel} coincides with that of the model \eqref{ChannelModelForGaussianMACWithAsymmetricCSI} but with the state sequence $S^n$ known noncausally at Encoder 1 and not all at Encoder 2 \cite[Theorem 7]{SBSV07a}. Then, an implication of Theorem~\ref{Theorem__CapacityRegionMemorylessGaussianChannel} is that it is \textit{optimal} for our model to just ignore the states $S^{i-1}$ that are known at Encoder 2 and use the coding scheme of \cite{SBSV07a}. That is, the availability of the states only strictly causally at the encoder that sends only the common message in our model does not increase the capacity region any further. While one could expect some utility of the collaborative transmission of a lossy version of the state to the decoder as in the memoryless discrete setup (and also in the Gaussian setups of \cite{LS10a,LS10b} and \cite{LSY11}), a direct consequence of our converse proof is that this would be of no help, in the sense that it would not result in better transmission rates. This can be interpreted as follows. As it can be seen from the proof of Theorem~\ref{Theorem__CapacityRegionDiscreteMemorylessChannel}, the joint transmission of the state to the decoder aims at equipping it with an estimate of this state. This state estimate is then utilized as decoder side information for the decoding of the information messages. In the discrete memoryless case, this can be beneficial in general for the transmission of the private message, not the common message, as we already mentioned. In the Gaussian case, however, for the transmission of the private message, Encoder 1 knows the state non-causally and, therefore, it can cancel its effect completely using a variation of the standard dirty paper scheme \cite{C83}, with no need to diminishing its effect via the joint transmission of the compressed version of the state. \qed


 \end{remark}

 \noindent The following corollary follows straightforwardly from Theorem~\ref{Theorem__CapacityRegionMemorylessGaussianChannel}.

 \begin{corollary}\label{corollary1}
  The common message capacity, $C_{\text{G}}$, of the Gaussian model \eqref{ChannelModelForGaussianMACWithAsymmetricCSI} is given by
  \begin{align}
  C_{\text{G}}\: &= \max\: \frac{1}{2}\log\Big(1+\frac{(\sqrt{P_2}+\rho_{12}\sqrt{P_1})^2}{P_1(1-\rho^2_{12}-\rho^2_{1s})+(\sqrt{Q}+\rho_{1s}\sqrt{P_1})^2+N}\Big)+\frac{1}{2}\log\Big(1+\frac{P_1(1-\rho^2_{12}-\rho^2_{1s})}{N}\Big),
  \label{OuterBoundGaussianChannel}
  \end{align}
  where the maximization is over $\rho_{12} \in [0,1]$, $\rho_{1s} \in [-1,0]$ such that
  \begin{equation}
  \rho^2_{12}+\rho^2_{1s} \leq 1.
  \label{AllowableCovarianceMatrixCommonMessageCapacity}
  \end{equation}
  \end{corollary}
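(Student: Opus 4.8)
The plan is to obtain Corollary~\ref{corollary1} as a direct specialization of Theorem~\ref{Theorem__CapacityRegionMemorylessGaussianChannel} to the case $R_1=0$. By definition, the common-message capacity $C_{\text{G}}$ is the supremum of the rates $R_c$ for which the pair $(R_c,0)$ is achievable, and when $R_1=0$ the Gaussian model of Section~\ref{secIV_subsecA} reduces exactly to the common-message model: the individual message set becomes a singleton and the encoding map $\phi_1$ then depends only on $(W_c,S^n)$. Hence the achievability scheme and the converse argument behind Theorem~\ref{Theorem__CapacityRegionMemorylessGaussianChannel}, evaluated at $R_1=0$, apply verbatim, and therefore $C_{\text{G}}$ equals the largest $R_c$ such that the pair $(R_c,0)$ satisfies the two constraints of Theorem~\ref{Theorem__CapacityRegionMemorylessGaussianChannel}.

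Next I would set $R_1=0$ in those two rate inequalities. The first inequality becomes $0\le\frac{1}{2}\log\!\big(1+P_1(1-\rho_{12}^2-\rho_{2s}^2)/N\big)$, whose right-hand side is nonnegative for every feasible covariance choice, so this inequality is automatically satisfied and places no restriction on $R_c$. The second inequality then reduces to $R_c$ being bounded above by precisely the two-logarithm expression displayed in Corollary~\ref{corollary1}. Taking the supremum over the feasible parameters $\rho_{12}\in[0,1]$, $\rho_{1s}\in[-1,0]$ with $\rho_{12}^2+\rho_{1s}^2\le1$ (constraint~\eqref{AllowableCovarianceMatrixCommonMessageCapacity}) yields the stated formula; the supremum is in fact attained, i.e.\ a maximum, because the feasible set is compact and the objective is continuous on it (every denominator is bounded below by $N>0$, so there are no singularities even when $1-\rho_{12}^2-\rho_{1s}^2=0$).

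I do not expect a genuine obstacle here, since the statement is a pure specialization; the only points deserving a line of justification are (i) that allowing $R_1>0$ cannot enlarge the achievable $R_c$, because increasing $R_1$ only tightens the coupled constraint $R_c+R_1\le(\cdot)$ and never relaxes it, so the largest $R_c$ is obtained at $R_1=0$; and (ii) that the first, now-redundant, inequality does not inadvertently shrink the set of parameters over which one optimizes, which it does not since it holds for all $(\rho_{12},\rho_{1s})$ admitted by~\eqref{AllowableCovarianceMatrixCommonMessageCapacity}. Collecting these observations gives $C_{\text{G}}=\max\{\cdots\}$ exactly as claimed.
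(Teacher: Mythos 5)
Your proposal is correct and takes essentially the same route as the paper, which offers no explicit proof beyond noting that the corollary ``follows straightforwardly'' from Theorem~\ref{Theorem__CapacityRegionMemorylessGaussianChannel}. Setting $R_1=0$, observing that the individual-rate constraint becomes vacuous while the sum-rate constraint reduces to the displayed bound on $R_c$, and invoking compactness of the feasible set of $(\rho_{12},\rho_{1s})$ to turn the supremum into a maximum is precisely that straightforward specialization, spelled out.
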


\section{Conclusions}\label{secV}

In this paper, we consider a state-dependent multiaccess channel with the channel state available noncausally at one of the encoders and only strictly causally at the other encoder. The decoder is not aware of the channel state. Both encoders transmit a common message and, in addition, Encoder 1 --- the encoder that knows the state noncausally, transmits an individual message. We study the capacity region of this communication model. The analysis also helps understanding the utility of revealing the state only strictly causally to the encoder that sends only the common message as well as optimal compressions to perform it.

In the discrete memoryless case, we characterize the capacity region of this model with a single-letter expression. In particular, the analysis reveals optimal ways of exploiting the knowledge of the state only strictly causally at the encoder that sends only the common message.  The encoders collaborate to convey to the decoder a lossy version of the state, in addition to transmitting the information messages through a generalized Gelfand-Pinsker binning. Particularly important in this problem are the questions of 1) optimal ways of performing the state compression, and 2) whether or not the compression indices should be decoded uniquely. We develop two optimal coding schemes that perform the state compression differently.  The first coding scheme is \`a-la noisy network coding, i.e., with no binning and non-unique decoding of the compression indices. The second coding scheme employs Wyner-Ziv binning with backward decoding and non-unique decoding of the compression indices. We note that backward decoding and non-unique decoding seem to be key elements for the optimality of the Wyner-Ziv based coding scheme. Also, we point out that the combination of these two features is likely to be beneficial in other scenarios in the context of networks with Wyner-Ziv compressions. Next, by exploiting our outer bound and the involved auxiliary variables specifically, we show that, although not required in general, for our specific model the compression indices can in fact be decoded uniquely essentially without altering the capacity region but at the expense of larger alphabets sizes for the auxiliary random variables. 

The capacity region contains that of the model of \cite{SBSV07a}, and this shows that revealing the state even only strictly causally to the encoder that sends only the common message is beneficial and enlarges the capacity region in general. Furthermore, by investigating a discrete memoryless example, we show that this inclusion can be strict, thus demonstrating the utility of conveying a compressed version of the state to the decoder cooperatively by the encoders.

We also specialize our results to the case in which the two encoders send only the common message. We characterize the common-message capacity and show that knowing the states only strictly causally at one of the encoders is not beneficial in this case.

Furthermore, we also study the memoryless Gaussian setting in which the channel state and the noise are additive and Gaussian. In this case, we establish an operative outer bound on the achievable rate pairs and then show that this outer bound is achievable; thus yielding a closed-form expression of the capacity region. Unlike the discrete memoryless case, we show that the knowledge of the states only strictly causally at the encoder that sends only the common message does not increase the capacity region in this case.


\appendix
Throughout this section we denote the set of strongly jointly $\epsilon$-typical sequences \cite[Chapter 14.2]{CT91} with respect to the distribution $P_{X,Y}$ as $\mc T_{\epsilon}^n(P_{X,Y})$.

\renewcommand{\theequation}{A-\arabic{equation}}
\setcounter{equation}{0}  
\subsection{Proof of Proposition~\ref{Proposition__Properties__of__CapacityRegion}}\label{appendixProposition__Properties__of__CapacityRegion}

\textit{Part 1:} To prove the convexity of the region, we use a standard argument. We introduce a time-sharing random variable $T$ and define the joint distribution
\begin{align}
P_{T,S,U,V,X_1,X_2,Y}(t,s,u,v,x_1,x_2,y) &= P_{T,S,U,V,X_1,X_2}(t,s,u,v,x_1,x_2)W_{Y|X_1,X_2,S}(y|x_1,x_2,s)\\
\sum_{u,v,x_1,x_2} P_{T,S,U,V,X_1,X_2}(t,s,u,v,x_1,x_2) &= P_T(t)Q_S(s). 
\end{align}

Let now $(R^{T}_c,R^{T}_1)$ be the common and individual rates resulting from time sharing. Then,
\begin{align}
R^{T}_1 \: &\leq \: I(U;Y|V,X_2,T)-I(U;S|V,X_2,T) \\
           &= \: I(U;Y|\tilde{V},X_2)-I(U;S|\tilde{V},X_2)\\
R^{T}_c+ R^{T}_1 \: &\leq \: I(U,V,X_2;Y|T)-I(U,V,X_2;S|T)\\
                    &= \: I(U,V,X_2;Y|T)-I(U,V,X_2,T;S)\\
                    &\leq \: I(U,V,X_2,T;Y)-I(U,V,X_2,T;S)\\
                    &= \: I(U,\tilde{V},X_2;Y)-I(U,\tilde{V},X_2;S),
\end{align}
where $\tilde{V} := (V,T)$. That is, the time sharing random variable $T$ is incorporated into the auxiliary random variable $V$. This shows that time sharing cannot yield rate pairs that are not included in $\mc C$ and, hence, $\mc C$ is convex.

\textit{Part 2:} To prove that the region $\mc C$ is not altered if one restricts the random variables $U$ and $V$ to have their alphabets restricted as indicated in \eqref{BoundsOnCardinalityOfAuxiliaryRandonVariables__CapacityRegion__DiscreteMemorylessChannel}, we invoke the support lemma \cite[p. 310]{CK81}. Fix a distribution $\mu \in \mc P$ of $(S,U,V,X_1,X_2,Y)$ and, without loss of generality, let us denote the product set $\mc S\times\mc X_1\times\mc X_2=\{1,\hdots,m\}$, $m=|\mc S{\times}\mc X_1{\times}\mc X_2|$. 

\noindent To prove the bound \eqref{BoundsOnCardinalityOfAuxiliaryRandonVariableV__CapacityRegion__DiscreteMemorylessChannel} on $|\mc V|$, note that we have
\begin{align}
& I_{\mu}(U;Y|V,X_2)-I_{\mu}(U;S|V,X_2) \nonumber\\
& \hspace{1cm}= I_{\mu}(U,X_2;Y|V)-I_{\mu}(X_2;Y|V)-I_{\mu}(U,X_2;S|V)+I_{\mu}(X_2;S|V)\nonumber\\
& \hspace{1cm}= H_{\mu}(U,X_2,S|V)-H_{\mu}(U,X_2,Y|V)-H_{\mu}(X_2,S|V)+H_{\mu}(X_2,Y|V)
\end{align}
and
\begin{align}
& I_{\mu}(U,V,X_2;Y)-I_{\mu}(U,V,X_2;S) \nonumber\\
&\hspace{1cm}= I_{\mu}(U,X_2;Y|V)-I_{\mu}(U,X_2;S|V)+I_{\mu}(V;Y)-I_{\mu}(V;S)\nonumber \\
&\hspace{1cm}= H_{\mu}(U,X_2,S|V)-H_{\mu}(U,X_2,Y|V)+H_{\mu}(Y)-H_{\mu}(S).
\end{align}
\noindent Hence, it suffices to show that the following functionals of $\mu(S,U,V,X_1,X_2,Y)$
\begin{subequations}
\begin{align}
\label{Functional1__BoundsOnCardinalityOfAuxiliaryRandonVariableV__UpperBound__DiscreteMemorylessChannel}
r_{i}(\mu) &= \mu(s,x,x'), \quad i=1,\hdots,m-1\\
r_m(\mu) &= \int_{v}d_{\mu}(v)[H_{\mu}(U,X_2,S|v)-H_{\mu}(U,X_2,Y|v)-H_{\mu}(X_2,S|v)+H_{\mu}(X_2,Y|v)]\\
r_{m+1}(\mu) &= \int_{v}d_{\mu}(v)[H_{\mu}(U,X_2,S|v)-H_{\mu}(U,X_2,Y|v)]
\end{align}
\label{Functionals__BoundsOnCardinalityOfAuxiliaryRandonVariableV__UpperBound__DiscreteMemorylessChannel}
\end{subequations}
can be preserved with another measure $\mu' \in \mc P$. Observing that  there is a total of $\Big(|\mc S||\mc X_1||\mc X_2|+1\Big)$ functionals in \eqref{Functionals__BoundsOnCardinalityOfAuxiliaryRandonVariableV__UpperBound__DiscreteMemorylessChannel}, this is ensured by a standard application of the support lemma; and this shows that the alphabet of the auxiliary random variable $V$ can be restricted as indicated in \eqref{BoundsOnCardinalityOfAuxiliaryRandonVariableV__CapacityRegion__DiscreteMemorylessChannel} without altering the region $\mc C$.

\noindent Once the alphabet of $V$ is fixed, we apply similar arguments to bound the alphabet of $U$, where this time $(|\mc S||\mc X_1||\mc X_2|+1)|\mc S||\mc X_1||\mc X_2|-1$ functionals must be satisfied in order to preserve the joint distribution of $(S, V, X_1, X_2)$, and one more functional to preserve
\begin{subequations}
\begin{align}
& I_{\mu}(U;Y|V,X_2)-I_{\mu}(U;S|V,X_2) = H_{\mu}(Y,V,X_2)-H_{\mu}(S,V,X_2)+H_{\mu}(S,V,X_2|U)-H_{\mu}(Y,V,X_2|U)\\
& I_{\mu}(U,V,X_2;Y)-I_{\mu}(U,V,X_2;S) = H_{\mu}(Y)-H_{\mu}(S)+H_{\mu}(S,V,X_2|U)-H_{\mu}(Y,V,X_2|U).
\end{align}
\end{subequations}
This shows that the alphabet of the auxiliary random variable $U$ can be restricted as indicated in \eqref{BoundsOnCardinalityOfAuxiliaryRandonVariableU__CapacityRegion__DiscreteMemorylessChannel} without altering the region $\mc C$; and completes the proof of Proposition~\ref{Proposition__Properties__of__CapacityRegion}.

\renewcommand{\theequation}{B-\arabic{equation}}
\setcounter{equation}{0}  
\subsection{Proof of Theorem~\ref{Theorem__CapacityRegionDiscreteMemorylessChannel}}\label{appendixTheorem__CapacityRegionDiscreteMemorylessChannel}

\subsubsection{Direct Part of Theorem~\ref{Theorem__CapacityRegionDiscreteMemorylessChannel}}
To bound the probability of error, we assume without loss of generality that the compression indices are all equal to unity, i.e., $t_1=t_2=\hdots=t_B=1$.

We examine the probability of error associated with each of the encoding and decoding procedures. The events $E_1$, $E_2$ and $E_3$ correspond to encoding errors, and the events $E_4$, $E_5$, $E_6$ and $E_7$ correspond to decoding errors.

\begin{itemize}
\item Let $E_1=\cup_{i=1}^{B}E_{1i}$ where $E_{1i}$ is the event that, for the encoding in block $i$, there is no covering codeword $\dv v_{i-1}(w_c,t_{i-2},t_{i-1})$ strongly jointly typical with $\dv s[i-1]$ given $\dv x_{2,i-1}(w_c,t_{i-2})$, i.e.,
\begin{align}
E_1 &= \bigcup_{i=1}^{B} \Big\{\nexists \:\:t_{i-1} \in [1:\hat{M}] \:\: \text{s.t.:}\:\: \Big(\dv v_{i-1}(w_c,t_{i-2},t_{i-1}), \dv s[i-1], \dv x_{2,i-1}(w_c,t_{i-2})\Big) \in \mc T_{\epsilon}^n(P_{V,S,X_2})\Big\}.
\end{align}

\noindent For $i \in [1:B]$, the probability that $(\dv s[i-1], \dv x_{2,i-1}(w_c,t_{i-2}))$ is not jointly typical goes to zero as $n \rightarrow \infty$, by the asymptotic equipartition property (AEP) \cite[p. 384]{CT91}. Then, for $(\dv s[i-1], \dv x_{2,i-1}(w_c,t_{i-2}))$ jointly typical, the covering lemma \cite[Lecture Note 3]{GK10} ensures that the probability that there is no $t_{i-1} \in [1:\hat{M}]$ such that $(\dv v_{i-1}(w_c,t_{i-2},t_{i-1}),\dv s[i-1])$ is strongly jointly typical given $\dv x_{2,i-1}(w_c,t_{i-2})$ is exponentially small for large $n$ provided that the number of covering codewords $\dv v_{i-1}$ is greater than $2^{nI(V;S|X_2)}$, i.e.,
\begin{align}
\hat{R} &> I(V;S|X_2).
\label{Condition__on__CoveringRate}
\end{align}
Thus, if  \eqref{Condition__on__CoveringRate} holds, $\text{Pr}(E_{1i}) \rightarrow 0 \quad \text{as}\quad  n \rightarrow \infty$ and, so, by the union of bound over the $B$ blocks, $\text{Pr}(E_{1}) \rightarrow 0 \quad \text{as}\quad  n \rightarrow \infty$.

 \item Let $E_2=\cup_{i=1}^{B}E_{2i}$ where $E_{2i}$ is the event that, for the encoding in block $i$, Encoder 1 can find no covering codeword $\dv v_i(w_c,t_{i-1},t_i)$ strongly jointly typical with $\dv s[i]$ given $\dv x_{2,i}(w_c,t_{i-1})$. Similarly to the event $E_1$, it is easy to see that $\text{Pr}(E_2|E^c_1) \rightarrow 0 \quad \text{as}\quad  n \rightarrow \infty$ if \eqref{Condition__on__CoveringRate} is true.

\item Let $E_3=\cup_{i=1}^{B}E_{3i}$ where $E_{3i}$ is the event that, for the encoding in block $i$, there is no sequence $\dv u_i(w_c,t_{i-1},t_i,w_1,j_{i})$ jointly typical with $\dv s[i]$ given $\dv x_{2,i}(w_c,t_{i-1})$ and $\dv v_i(w_c,t_{i-1},t_i)$, i.e.,
\begin{align}
E_3 &= \bigcup_{i=1}^{B} \Big\{\nexists \:\:j_{i} \in [1:J] \:\: \text{s.t:}\:\: \Big(\dv u_i(w_c,t_{i-1},t_i,w_1,j_{i}), \dv s[i], \dv v_i(w_c,t_{i-1},t_i), \dv x_{2,i}(w_c,t_{i-1})\Big) \in \mc T_{\epsilon}^n(P_{U,S,V,X_2})\Big\}.
\end{align}
To bound the probability of the event $E_{3i}$, we use a standard argument \cite{GP80}. More specifically, conditioned on $E^c_{1,i}$ and  $E^c_{2,i}$, the complement events of $E_{1i}$ and $E_{2i}$, respectively, we have that the state $\dv s[i]$ is jointly typical with $(\dv x_{2,i}(w_c,t_{i-1}), \dv v_i(w_c,t_{i-1},t_i))$. Then, for $\dv u_i(w_c,t_{i-1},t_i,w_1,j_{i})$ generated independently of $\dv s[i]$  given $\dv x_{2,i}(w_c,t_{i-1})$ and $\dv v_i(w_c,t_{i-1},t_i)$, with i.i.d. components drawn according to $P_{U|V,X_2}$, the probability that $\dv u_i(w_c,t_{i-1},t_i,w_1,j_{i})$ is jointly typical with $\dv s[i]$ given $\dv x_{2,i}(w_c,t_{i-1})$ and $\dv v_i(w_c,t_{i-1},t_i)$ is greater than $(1-\epsilon)2^{-n(I(U;S|V,X_2)+\epsilon)}$ for sufficiently large $n$.  There is a total of $J$ such $\dv u_i$'s in each bin. Conditioned on $E^c_{1i}$ and $E^c_{2i}$, the probability of the event $E_{3i}$, the probability that there is no such $\dv u_i$, is therefore bounded as
\begin{equation}
\text{Pr}(E_{3i}|E^c_{1i},E^c_{2i}) \leq [1-(1-\epsilon)2^{-n(I(U;S|V,X_2)+\epsilon)}]^{J}.
\label{BoundingProbabilityErrorEventE2}
\end{equation}
Taking the logarithm on both sides of \eqref{BoundingProbabilityErrorEventE2} and substituting $J$, we obtain that $\ln(\text{Pr}(E_{3i}|E^c_{1i},E^c_{2i})) \leq -(1-\epsilon)2^{n(\delta-1)\epsilon}$. Thus, $\text{Pr}(E_{3i}|E^c_{1i},E^c_{2i}) \rightarrow 0 \quad \text{as}\quad  n \rightarrow \infty$ and, so, by the union bound, $\text{Pr}(E_3|E^c_1,E^c_2) \rightarrow 0 \quad \text{as}\quad  n \rightarrow \infty$.

 \item  For the decoding of the common message $w_c$ at the receiver, let $E_4=\cup_{i=1}^{B}E_{4i}$ where $E_{4i}$ is the event that $\Big(\dv x_{2,i}(w_c,t_{i-1})$, $\dv u_i(w_c,t_{i-1},t_i,w_1,j^{\star}_{i})$, $\dv v_i(w_c,t_{i-1},t_i)$, $\dv y[i]\Big)$ is not jointly typical, i.e.,
\begin{align}
   E_4 &= \bigcup_{i=1}^{B} \Big\{\Big(\dv x_{2,i}(w_c,t_{i-1}), \dv u_i(w_c,t_{i-1},t_i,w_1,j^{\star}_{i}), \dv v_i(w_c,t_{i-1},t_i), \dv y[i]\Big) \notin \mc T_{\epsilon}^n(P_{X_2,U,V,Y})\Big\}.
\end{align}

Conditioned on $E^c_{1i}$, $E^c_{2i}$ and $E^c_{3i}$, the vectors $\dv s[i]$, $\dv x_{2,i}(w_c,t_{i-1})$, $\dv v_i(w_c,t_{i-1},t_i)$ and $\dv u_i(w_c,t_{i-1},t_i,w_1,j^{\star}_{i})$ are jointly typical and with $\dv x_1[i]$. Then, conditioned on $E^c_{1i}$, $E^c_{2i}$ and $E^c_{3i}$, the vectors $\dv s[i]$, $\dv x_{2,i}(w_c,t_{i-1})$, $\dv v_i(w_c,t_{i-1},t_i)$, $\dv u_i(w_c,t_{i-1},t_i,w_1,j^{\star}_{i})$ and $\dv y[i]$ are jointly typical by the Markov lemma \cite[p. 436]{CT91}, i.e., $\text{Pr}(E_{4i}|E^c_{1i},E^c_{2i},E^c_{3i}) \rightarrow 0 \quad \text{as}\quad  n \rightarrow \infty$. Thus, by the union bound over the $B$ blocks, $\text{Pr}(E_4|E^c_1,E^c_2,E^c_3) \rightarrow 0 \quad \text{as}\quad  n \rightarrow \infty$.

\item  For the decoding of the common message $w_c$ at the receiver, let $E_{5}$ be the event that $\dv x_{2,i}(w'_c,t_{i-1})$, $\dv u_i(w'_c,t_{i-1},t_i,w_1,j_{i})$, $\dv v_i(w'_c,t_{i-1},t_i)$ and $\dv y[i]$ are jointly typical for all $i=1,\hdots,B$ and some $w'_c \in [1:M_c]$, $w_1 \in [1:M_1]$, $t^B=(t_1,\hdots,t_B) \in [1:\hat{M}]^{B}$ and $j^B=(j_1,\hdots,j_B) \in [1:J]^B$ such that $w'_c \neq w_c$, i.e.,
\begin{align}
E_5 = \bigg\{ &\: \exists \: w'_c \in [1:M_c],\: w_1 \in [1:M_1],\: t^B=(t_1,\hdots,t_B) \in [1:\hat{M}]^{B},\: j^B \in [1:J]^B\:\text{s.t.:} \: w'_c \neq w_c,\nonumber\\
& \bigcap_{i=1}^{B} \Big\{\Big(\dv x_{2,i}(w'_c,t_{i-1}), \dv u_i(w'_c,t_{i-1},t_i,w_1,j_{i}), \dv v_i(w'_c,t_{i-1},t_i), \dv y[i]\Big) \in \mc T_{\epsilon}^n(P_{X_2,U,V,Y})\Big\}\bigg\}.
\end{align}

To bound the probability of the event $E_5$, define the following event for given $w'_c \in [1:M_c]$, $w_1 \in [1:M_1]$, $(t_{i-1},t_i) \in [1:\hat{M}]^2$ and $j_i \in [1:J]$ such that $w'_c \neq w_c$,
\begin{align*}
E_{5i}(w'_c,t_{i-1},t_i,w_1,j_i) = \Big\{ &\:\Big(\dv x_{2,i}(w'_c,t_{i-1}), \dv u_i(w'_c,t_{i-1},t_i,w_1,j_i), \dv v_i(w'_c,t_{i-1},t_i), \dv y[i]\Big) \in \mc T_{\epsilon}^n(P_{X_2,U,V,Y})\Big\}.
\end{align*}

Note that for $w'_c \neq w_c$ the vectors $\dv x_{2,i}(w'_c,t_{i-1})$,  $\dv v_i(w'_c,t_{i-1},t_i)$ and $\dv u_i(w'_c,t_{i-1},t_i,w_1,j_i)$ are generated independently of $\dv y[i]$. Hence, by the joint typicality lemma \cite[Lecture Note 2]{GK10}, we get
\begin{align}
\text{Pr}\Big(E_{5i}(w'_c,t_{i-1},t_i,w_1,j_i)|E_1^c,E_2^c,E_3^c,E_4^c\Big) & \leq 2^{-n[I(U,V,X_2;Y)-\epsilon]}.
\label{BoundingProbabilityIntermediaryEventE5_Step1}
\end{align}

Then, conditioned on the events $E_{1}^c$, $E_{2}^c$, $E_{3}^c$ and $E_{4}^c$, the probability of the event $E_5$ can be bounded as
\begin{align}
\text{Pr}(E_5|E_1^c,E_2^c,E_3^c,E_4^c) &= \text{Pr}\Big( \bigcup_{w'_c \neq w_c} \bigcup_{w_1 \: \in \: [1:M_1]} \bigcup_{t^B \: \in \: [1:\hat{M}]^{B}} \bigcup_{j^B \: \in \: [1:J]^{B}} \bigcap_{i=1}^{B} E_{5i}(w'_c,t_{i-1},t_i,w_1,j_i)|E_1^c,E_2^c,E_3^c,E^c_4\Big)\nonumber\\
&\stackrel{(a)}{\leq} \sum_{w'_c \neq w_c} \sum_{w_1 \in [1:M_1]} \sum_{t^B \: \in \: [1:\hat{M}]^{B}} \sum_{j^B \: \in \: [1:J]^{B}}  \text{Pr}\Big(\bigcap_{i=1}^{B} E_{5i}(w'_c,t_{i-1},t_i,w_1,j_i)|E_1^c,E_2^c,E_3^c,E_4^c \Big)\nonumber\\
&\stackrel{(b)}{=} \sum_{w'_c \neq w_c} \sum_{w_1 \in [1:M_1]} \sum_{t^B \: \in \: [1:\hat{M}]^{B}}  \sum_{j^B \: \in \: [1:J]^{B}} \prod_{i=1}^{B} \text{Pr}\Big(E_{5i}(w'_c,t_{i-1},t_i,w_1,j_i)|E_1^c,E_2^c,E_3^c,E^c_4\Big)\nonumber\\
&\leq \sum_{w'_c \neq w_c} \sum_{w_1 \in [1:M_1]} \sum_{t^B \: \in \: [1:\hat{M}]^{B}} \sum_{j^B \: \in \: [1:J]^{B}}  \prod_{i=2}^{B} \text{Pr}\Big(E_{5i}(w'_c,t_{i-1},t_i,w_1,j_i)|E_1^c,E_2^c,E_3^c,E_4^c\Big)\nonumber\\
&\stackrel{(c)}{\leq} \sum_{w'_c \neq w_c} \sum_{w_1 \in [1:M_1]} \sum_{t^B \: \in \: [1:\hat{M}]^{B}}  \sum_{j^B \: \in \: [1:J]^{B}} \prod_{i=2}^{B} 2^{-n\big[I(U,V,X_2;Y)-\epsilon\big]}\nonumber\\
&= \sum_{w'_c \neq w_c} \sum_{w_1 \in [1:M_1]} \sum_{t_B \: \in \: [1:\hat{M}]} \sum_{j_B \: \in \: [1:J]} 2^{n(B-1)\big[\hat{R}+\hat{\eta}\epsilon\big]}2^{-n(B-1)\big[I(U,V,X_2;Y)-I(U;S|V,X_2)-(\delta+1)\epsilon\big]}\nonumber\\
&\leq M_cM_1\hat{M}J2^{-n(B-1)\big[I(U,V,X_2;Y)-I(U;S|V,X_2)-\hat{R}-(\hat{\eta}+\delta+1)\epsilon\big]}\nonumber\\
&= 2^{-nB\big[\frac{B-1}{B}\big(I(U,V,X_2;Y)-I(U;S|V,X_2)-\hat{R}\big)-(R_c+R_1)-\frac{\hat{R}}{B}-\frac{I(U;S|V,X_2)}{B}+\big(\eta_c+\eta_1-\hat{\eta}-\delta-\frac{B-1}{B}\big)\epsilon\big]}
\label{BoundingProbabilityIntermediaryEventE5_Step2}
\end{align}
where: $(a)$ follows by the union bound; $(b)$ follows since the codebook is generated independently for each block $i \in [1:B]$ and the channel is memoryless; and $(c)$ follows by \eqref{BoundingProbabilityIntermediaryEventE5_Step1}.

The right hand side (RHS) of \eqref{BoundingProbabilityIntermediaryEventE5_Step2} tends to zero as $n \rightarrow \infty$ if
\begin{align}
  R_c+R_1 \leq \frac{B-1}{B}\big(I(U,V,X_2;Y)-I(U;S|V,X_2)-\hat{R}\big)-\frac{\hat{R}}{B}-\frac{I(U;S|V,X_2)}{B}.
  \label{BoundingProbabilityIntermediaryEventE5_Step3}
\end{align}

Finally, using \eqref{Condition__on__CoveringRate} to eliminate $\hat{R}$ from \eqref{BoundingProbabilityIntermediaryEventE5_Step3} and taking $B \rightarrow \infty$, we get $\text{Pr}(E_5|E^c_1,E^c_2,E^c_3,E^c_4) \rightarrow 0$ as long as
\begin{align}
R_c+R_1 &\leq I(U,V,X_2;Y)-I(U,V;S|X_2)\nonumber\\
        &= I(U,V,X_2;Y)-I(U,V,X_2;S),
\label{Condition__on__SumRate}
\end{align}
where the last equality follows since $X_2$ and $S$ are independent.

\item For the decoding of the individual message $w_1$ at the receiver, let $E_6=\cup_{i=1}^{B} E_{6i}$ where $E_{6i}$ is the event that $\dv x_{2,i}(w_c,t_{i-1})$, $\dv v_i(w_c,t_{i-1},t_i)$, $\dv u_i(w_c,t_{i-1},t_i,w_1,j^{\star}_{i})$ and $\dv y[i]$ are not jointly typical, i.e.,
\begin{align}
E_{6i} &= \Big\{\Big(\dv x_{2,i}(w_c,t_{i-1}), \dv v_i(w_c,t_{i-1},t_i), \dv u_i(w_c,t_{i-1},t_i,w_1,j^{\star}_{i}), \dv y[i]\Big) \notin \mc T_{\epsilon}^n(P_{X_2,V,U,Y})\Big\}.
\end{align}

From our analysis of the probability of the error event $E_4$, it is easy to see that, conditioned on $E^c_1$, $E^c_2$ and $E^c_3$, the event $E_{6i}$ has exponentially small probability. Thus, by the union bound over the $B$ blocks, $\text{Pr}(E_6|E^c_1,E^c_2,E^c_3) \longrightarrow 0$ as $n \longrightarrow \infty$, where $E_6=\cup_{i=1}^{B} E_{6i}$.

\item  For the decoding of the individual message $w_1$ at the receiver, let $E_7$ be the event that $\dv x_{2,i}(w_c,t_{i-1})$, $\dv u_i(w_c,t_{i-1},t_i,w'_1,j_{i})$, $\dv v_i(w_c,t_{i-1},t_i)$ and $\dv y[i]$ are jointly typical for all $i=1,\hdots,B$ and some $w'_1 \in [1:M_1]$, $t^B=(t_1,\hdots,t_B) \in [1:\hat{M}]^{B}$ and $j^B=(j_1,\hdots,j_B) \in [1:J]^B$ such that $w'_1 \neq w_1$, i.e.,
\begin{align}
E_7 = \bigg\{ &\: \exists \: w'_1 \in [1:M_1],\: t^B=(t_1,\hdots,t_B) \in [1:\hat{M}]^{B},\: j^B \in [1:J]^B\:\text{s.t.:} \: w'_1 \neq w_1,\nonumber\\
& \bigcap_{i=1}^{B} \Big\{\Big(\dv x_{2,i}(w_c,t_{i-1}), \dv u_i(w_c,t_{i-1},t_i,w'_1,j_{i}), \dv v_i(w_c,t_{i-1},t_i), \dv y[i]\Big) \in \mc T_{\epsilon}^n(P_{X_2,U,V,Y})\Big\}\bigg\}.
\end{align}

To bound the probability of the event $E_7$, define the following event for given $w'_1 \in [1:M_1]$, $(t_{i-1},t_i) \in [1:\hat{M}]^2$ and $j_i \in [1:J]$,
\begin{align*}
E_{7i}(w'_1,t_{i-1},t_i,j_i) = \Big\{ &\:\Big(\dv x_{2,i}(w_c,t_{i-1}), \dv u_i(w_c,t_{i-1},t_i,w'_1,j_i), \dv v_i(w_c,t_{i-1},t_i), \dv y[i]\Big) \in \mc T_{\epsilon}^n(P_{X_2,U,V,Y})\Big\}.
\end{align*}

Then, we have
\begin{align}
\text{Pr}(E_7|E_1^c,E_2^c,E_3^c,E_4^c,E_5^c,E_6^c) &= \text{Pr}\Big(\bigcup_{w'_1 \neq w_1} \bigcup_{t^B \: \in \: [1:\hat{M}]^{B}} \bigcup_{j^B \: \in \: [1:J]^{B}} \bigcap_{i=1}^{B} E_{7i}(w'_1,t_{i-1},t_i,j_i)|E_1^c,E_2^c,E_3^c,E^c_4,E_5^c,E^c_6\Big)\nonumber\\
&\stackrel{(d)}{\leq} \sum_{w'_1 \neq w_1} \sum_{t^B \: \in \: [1:\hat{M}]^{B}} \sum_{j^B \: \in \: [1:J]^{B}}  \text{Pr}\Big(\bigcap_{i=1}^{B} E_{7i}(w'_1,t_{i-1},t_i,j_i)|E_1^c,E_2^c,E_3^c,E_4^c,5^c,E^c_6 \Big)\nonumber\\
&\stackrel{(e)}{=} \sum_{w'_1 \neq w_1} \sum_{t^B \: \in \: [1:\hat{M}]^{B}}  \sum_{j^B \: \in \: [1:J]^{B}} \prod_{i=1}^{B} \text{Pr}\Big(E_{7i}(w'_1,t_{i-1},t_i,j_i)|E_1^c,E_2^c,E_3^c,E^c_4,5^c,E^c_6 \Big)\nonumber\\
&\leq \sum_{w'_1 \neq w_1} \sum_{t^B \: \in \: [1:\hat{M}]^{B}}  \sum_{j^B \: \in \: [1:J]^{B}} \prod_{i=2}^{B} \text{Pr}\Big(E_{7i}(w'_1,t_{i-1},t_i,j_i)|E_1^c,E_2^c,E_3^c,E^c_4,5^c,E^c_6 \Big)
\label{BoundingProbabilityIntermediaryEventE7_Step1}
\end{align}
where: $(d)$ follows by the union bound and $(e)$ follows since the codebook is generated independently for each block $i \in [1:B]$ and the channel is memoryless.

For $w'_1 \neq w_1$, the probability of the event $E_{7i}(w'_1,t_{i-1},t_i,j_i)$ conditioned on $E_1^c,E_2^c,E_3^c,E_4^c,E_5^c,E_6^c$ can be bounded as follows, depending on the values of $t_{i-1}$ and $t_i$:
\begin{itemize}
\item[i)] if $t_{i-1} \neq 1$ then $\Big(\dv u_i(w_c,t_{i-1},t_i,w'_1,j_i), \dv x_{2,i}(w_c,t_{i-1}), \dv v_i(w_c,t_{i-1},t_i)\Big)$ is generated independently of the output vector $\dv y[i]$ irrespective to the value of $t_i$, and so, by the joint typicality lemma \cite[Lecture Note 2]{GK10}
\begin{align}
\text{Pr}\Big(E_{7i}(w'_1,t_{i-1},t_i,j_i)|E_1^c,E_2^c,E_3^c,E_4^c,E_5^c,E_6^c\Big) & \leq 2^{-n[I(U,V,X_2;Y)-\epsilon]}.
\label{BoundingProbabilityIntermediaryEventE7_Step1}
\end{align}
\item[ii)] if $t_{i-1}=1$ and $t_i \neq 1$, then $\Big(\dv u_i(w_c,t_{i-1},t_i,w'_1,j_i), \dv v_i(w_c,t_{i-1},t_i)\Big)$ is generated independently of the output vector $\dv y[i]$  conditionally on $\dv x_{2,i}(w_c,t_{i-1})$; and, hence
\begin{align}
\text{Pr}\Big(E_{7i}(w'_1,t_{i-1},t_i,j_i)|E_1^c,E_2^c,E_3^c,E_4^c,E_5^c,E_6^c\Big) & \leq 2^{-n[I(U,V;Y|X_2)-\epsilon]}.
\label{BoundingProbabilityIntermediaryEventE7_Step2}
\end{align}
\item[iii)] if $t_{i-1}=1$ and $t_i=1$, then $\dv u_i(w_c,t_{i-1},t_i,w'_1,j_i)$ is generated independently of the output vector $\dv y[i]$ conditionally on  $\dv x_{2,i}(w_c,t_{i-1})$ and $\dv v_i(w_c,t_{i-1},t_i)$; and, hence
\begin{align}
\text{Pr}\Big(E_{7i}(w'_1,t_{i-1},t_i,j_i)|E_1^c,E_2^c,E_3^c,E_4^c,E_5^c,E_6^c\Big) & \leq 2^{-n[I(U;Y|V,X_2)-\epsilon]}.
\label{BoundingProbabilityIntermediaryEventE7_Step3}
\end{align}
\end{itemize}

Now, note that since $I(U,V;Y|X_2) \geq I(U;Y|V,X_2)$, if $w'_1 \neq w_1$ and $t_{i-1}=1$ the following holds irrespective to the value of $t_i$,
\begin{align}
\text{Pr}\Big(E_{7i}(w'_1,t_{i-1},t_i,j_i)|E_1^c,E_2^c,E_3^c,E_4^c,E_5^c,E_6^c\Big) & \leq 2^{-n[I(U;Y|V,X_2)-\epsilon]}.
\end{align}

Let $I_1 := I(U;Y|V,X_2)$ and $I_2 := I(U,V,X_2;Y)$. If the sequence $(t_1,\hdots,t_{B-1})$ has $k$ ones, we have
\begin{align}
\prod_{i=2}^{B} \text{Pr}\Big(E_{7i}(w'_1,t_{i-1},t_i,j_i)E_1^c,E_2^c,E_3^c,E_4^c,E_5^c,E_6^c\Big) & \leq 2^{-n[kI_1+(B-1-k)I_2-(B-1)\epsilon]}.
\label{BoundingProbabilityIntermediaryEventE7_Step3}
\end{align}
Continuing from \eqref{BoundingProbabilityIntermediaryEventE7_Step1}, we then bound the probability of the event $E_7$ as
\begin{align}
&\text{Pr}(E_7|E_1^c,E_2^c,E_3^c,E_4^c,,E_5^c,E_6^c) \nonumber\\
&\leq \sum_{w'_1 \neq w_1} \sum_{t^B \: \in \: [1:\hat{M}]^{B}} \sum_{j^B \: \in \: [1:J]^{B}} \prod_{i=2}^{B} \text{Pr}\Big(E_{7i}(w'_1,t_{i-1},t_i,j_i)|E_1^c,E_2^c,E_3^c,E^c_4,5^c,E^c_6 \Big)\nonumber\\
&= \sum_{w'_1 \neq w_1} \sum_{t_B \: \in \: [1:\hat{M}]} \sum_{j^B \: \in \: [1:J]^{B}} \sum_{t^{B-1} \: \in \: [1:\hat{M}]^{B-1}} \prod_{i=2}^{B} \text{Pr}\Big(E_{7i}(w'_1,t_{i-1},t_i,j_i)|E_1^c,E_2^c,E_3^c,E^c_4,5^c,E^c_6 \Big)\nonumber\\
&\leq \sum_{w'_1 \neq w_1} \sum_{t_B \: \in \: [1:\hat{M}]}  \sum_{j^B \: \in \: [1:J]^{B}} \sum_{k=0}^{B-1}\binom{B-1}{k}\: 2^{n(B-1-k)\big[\hat{R}+\hat{\eta}\epsilon\big]} 2^{-n\big[kI_1+(B-1-k)I_2-(B-1)\epsilon\big]}\nonumber\\
&\leq \sum_{w'_1 \neq w_1} \sum_{t_B \: \in \: [1:\hat{M}]}  \sum_{j^B \: \in \: [1:J]^{B}} \sum_{k=0}^{B-1}\binom{B-1}{k}\: 2^{n(B-1-k)\big[\hat{R}+\hat{\eta}\epsilon\big]} 2^{-n\big[kI_1+(B-1-k)I_2-(B-1)\epsilon\big]}\nonumber\\
&= \sum_{w'_1 \neq w_1} \sum_{t_B \: \in \: [1:\hat{M}]}  \sum_{j_B \: \in \: [1:J]} \sum_{j^{B-1} \: \in \: [1:J]^{B-1}} \sum_{k=0}^{B-1}\binom{B-1}{k}\: 2^{-n\big[kI_1+(B-1-k)(I_2-\hat{R})-(B-1-k)\hat{\eta}\epsilon-(B-1)\epsilon\big]}\nonumber\\
&= \sum_{w'_1 \neq w_1} \sum_{t_B \: \in \: [1:\hat{M}]}  \sum_{j_B \: \in \: [1:J]} \sum_{k=0}^{B-1}\binom{B-1}{k}\: 2^{n(B-1)\big[I(U;S|V,X_2)+\delta\epsilon\big]} 2^{-n\big[kI_1+(B-1-k)(I_2-\hat{R})-(B-1)(\hat{\eta}+1)\epsilon\big]}\nonumber\\
&= \sum_{w'_1 \neq w_1} \sum_{t_B \: \in \: [1:\hat{M}]}  \sum_{j_B \: \in \: [1:J]} \sum_{k=0}^{B-1}\binom{B-1}{k}\: 2^{-n\big[k\big(I_1-I(U;S|V,X_2)\big)+(B-1-k)\big(I_2-\hat{R}-I(U;S|V,X_2)\big)-(B-1)(\hat{\eta}+\delta+1)\epsilon\big]}\nonumber\\
&\leq \sum_{w'_1 \neq w_1} \sum_{t_B \: \in \: [1:\hat{M}]}  \sum_{j_B \: \in \: [1:J]} \sum_{k=0}^{B-1}\binom{B-1}{k}\: 2^{-n\big[(B-1)\min\big(I_1-I(U;S|V,X_2),\: I_2-\hat{R}-I(U;S|V,X_2)\big)-(B-1)(\hat{\eta}+\delta+1)\epsilon\big]}\nonumber\\
&\leq M_1\hat{M}J 2^{B} 2^{-n\big[(B-1)\min\big(I_1-I(U;S|V,X_2),\: I_2-\hat{R}-I(U;S|V,X_2)\big)-(B-1)(\hat{\eta}+\delta+1)\epsilon\big]}\nonumber\\
&= 2^{-nB\big[\frac{B-1}{B}\min\big(I_1-I(U;S|V,X_2),\: I_2-\hat{R}-I(U;S|V,X_2)\big)-R_1-\frac{\hat{R}}{B}-\frac{I(U;S|V,X_2)}{B}-\frac{1}{n}+\big(\eta_1-\frac{\hat{\eta}}{B}-\frac{\delta}{B}-\frac{(B-1)(\hat{\eta}+\delta+1)}{B}\big)\epsilon\big]}\nonumber\\
&= 2^{-nB\big[\frac{B-1}{B}\min\big(I_1-I(U;S|V,X_2),\: I_2-\hat{R}-I(U;S|V,X_2)\big)-R_1-\frac{\hat{R}}{B}-\frac{I(U;S|V,X_2)}{B}-\frac{1}{n}+\big(\eta_1-\delta-\hat{\eta}-\frac{B-1}{B}\big)\epsilon\big]}.
\label{BoundingProbabilityIntermediaryEventE7_Step4}
\end{align}

The right hand side (RHS) of \eqref{BoundingProbabilityIntermediaryEventE5_Step2} tends to zero as $n \rightarrow \infty$ if
\begin{align}
R_1 \leq \frac{B-1}{B}\big(\min\big(I_1-I(U;S|V,X_2),\: I_2-\hat{R}-I(U;S|V,X_2)\big)-\frac{\hat{R}}{B}-\frac{I(U;S|V,X_2)}{B}.
\label{BoundingProbabilityIntermediaryEventE7_Step5}
\end{align}

Finally, using \eqref{Condition__on__CoveringRate} to eliminate $\hat{R}$ from \eqref{BoundingProbabilityIntermediaryEventE5_Step3} and taking $B \rightarrow \infty$, we get $\text{Pr}(E_7|E^c_1,E^c_2,E^c_3,E^c_4,E^c_5,E^c_6) \rightarrow 0$ as long as
\begin{align}
R_1 &\leq I_1-I(U;S|V,X_2)\nonumber\\
    &= I(U;Y|V,X_2)-I(U;S|V,X_2)
\label{Condition__on__IndividualRate}
\end{align}
and
\begin{align}
R_1 &\leq I_2-I(V;S|X_2)-I(U;S|V,X_2)\nonumber\\
    &= I(U,V,X_2;Y)-I(U,V,X_2;S).
\label{Condition__on__SumRate__Redundant}
\end{align}

\end{itemize}

Finally, noting that the condition \eqref{Condition__on__SumRate__Redundant} is redundant as $R_c \geq 0$ in \eqref{Condition__on__SumRate}, we obtain that the probability of error tends to zero as $n \rightarrow \infty$ and $B \rightarrow \infty$ if
\begin{subequations}
\begin{align}
R_1 &\leq I(U;Y|V,X_2)-I(U;S|V,X_2)\\
R_c+R_1 &\leq I(U,V,X_2;Y)-I(U,V,X_2;S).
\end{align}
\end{subequations}
This completes the proof of achievability.

\subsubsection{Converse Part of Theorem~\ref{Theorem__CapacityRegionDiscreteMemorylessChannel}}

We prove that for any $(M_c,M_1,n,\epsilon)$ code consisting of a mapping $\phi_1:\mc W_c{\times}\mc W_1{\times}\mc S^n \longrightarrow \mc X^n_1$ at Encoder 1, a sequence of mappings $\phi_{2,i}: \mc W_c{\times}\mc S^{i-1} \longrightarrow \mc X_2$, $i=1,\hdots,n$, at Encoder 2, and a mapping $\psi : \mc Y^n \longrightarrow \mc W_c{\times}\mc W_1$ at the decoder with average error probability $P_e^n \rightarrow 0$ as $n \rightarrow 0$ and rates $R_c=n^{-1}\log_2M_c$ and $R_1=n^{-1}\log_2M_1$, there exist random variables $(V,U,X_1,X_2) \in {\mc V}{\times}{\mc U}{\times}{\mc X_1}{\times}{\mc X_2}$ with $U$ and $V$ satisfying \eqref{BoundsOnCardinalityOfAuxiliaryRandonVariables__CapacityRegion__DiscreteMemorylessChannel} such that the joint distribution $P_{S,V,U,X_1,X_2}$ is of the form
\begin{align} 
P_{S,V,U,X_1,X_2}=Q_SP_{X_2}P_{V|S,X_2}P_{U,X_1|V,S,X_2},
\end{align}
the marginal distribution of $S$ is $Q_S(s)$, i.e.,
\begin{align}
         \sum_{v,u,x_1,x_2}P_{S,V,U,X_1,X_2}(s,v,u,x_1,x_2)=Q_S(s)
\end{align}
and the rate pair $(R_c,R_1)$ satisfies \eqref{CapacityRegionDiscreteMemorylessChannel}.

Define the random variables
\begin{align}
\bar{V}_i &= (W_c,S^{i-1},Y^n_{i+1})\nonumber\\
\bar{U}_i &= (W_1,\bar{V}_i).
\label{Definition__RabdomVariables__OuterBound}
\end{align}
Observe that the random variables so defined satisfy 
\begin{align}
(S_i,\bar{U}_i,\bar{V}_i, X_{1,i},X_{2,i},Y_i) \in \mc P, \quad \forall i \in\{1,\hdots,n\}.
\label{MeasureRandomVariables__OuterBound}
\end{align}

We first prove the following auxiliary result.

\begin{lemma}\label{LemmaProofOuterBound}
The following inequalities hold:
\begin{align}
I(W_1;Y^n|W_c) - I(W_1;S^n|W_c) &\leq \sum_{i=1}^{n} I(\bar{U}_i;Y_i|\bar{V}_i,X_{2,i})-I(\bar{U}_i;S_i|\bar{V}_i,X_{2,i})\\
I(W_c,W_1;Y^n)-I(W_c,W_1;S^n) &\leq \sum_{i=1}^{n} I(\bar{U}_i,\bar{V}_i,X_{2i};Y_i)-I(\bar{U}_i,\bar{V}_i;S_i|X_{2i}) 
\end{align}
\end{lemma}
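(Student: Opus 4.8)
The plan is to single-letterize both left-hand sides by repeated use of the chain rule and the Csisz\'ar sum identity, exploiting that $X_{2,i}=\phi_{2,i}(W_c,S^{i-1})$ is a deterministic function of $(W_c,S^{i-1})$ — hence measurable with respect to $\bar V_i$ and independent of $S_i$ — and that the messages are independent of the i.i.d.\ state sequence.

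For the first inequality I would first expand $I(W_1;Y^n|W_c)=\sum_{i=1}^n I(W_1;Y_i|W_c,Y_{i+1}^n)$ and $I(W_1;S^n|W_c)=\sum_{i=1}^n I(W_1;S_i|W_c,S^{i-1})$. For each $i$, writing $I(W_1,S^{i-1};Y_i|W_c,Y_{i+1}^n)$ in the two possible orders yields
\[
I(W_1;Y_i|W_c,Y_{i+1}^n)=I(S^{i-1};Y_i|W_c,Y_{i+1}^n)+I(W_1;Y_i|W_c,S^{i-1},Y_{i+1}^n)-I(S^{i-1};Y_i|W_c,W_1,Y_{i+1}^n),
\]
and symmetrically expanding $I(W_1;S_i|W_c,S^{i-1})$ around $Y_{i+1}^n$. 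Summing and regrouping, $I(W_1;Y^n|W_c)-I(W_1;S^n|W_c)$ becomes the sum of three sums: two ``cross'' sums that vanish by the Csisz\'ar sum identity — one applied under conditioning on $W_c$, the other under conditioning on $(W_c,W_1)$ — plus the residual $\sum_i[I(W_1;Y_i|W_c,S^{i-1},Y_{i+1}^n)-I(W_1;S_i|W_c,S^{i-1},Y_{i+1}^n)]$. Since $\bar V_i=(W_c,S^{i-1},Y_{i+1}^n)$ and $\bar U_i=(W_1,\bar V_i)$, and $X_{2,i}$ is a function of $\bar V_i$, additionally conditioning on $X_{2,i}$ is harmless and $I(W_1;\cdot\,|\,\bar V_i,X_{2,i})=I(\bar U_i;\cdot\,|\,\bar V_i,X_{2,i})$; this identifies the residual with the claimed bound (in fact with equality).

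For the sum-rate inequality I would run the same splitting with $W:=(W_c,W_1)$ and no outer conditioning; the two Csisz\'ar cancellations then reduce $I(W;Y^n)-I(W;S^n)$ to $\sum_i[I(W;Y_i|S^{i-1},Y_{i+1}^n)-I(W;S_i|S^{i-1},Y_{i+1}^n)]$. With $\bar U_i:=(W,S^{i-1},Y_{i+1}^n)$ the chain rule gives $I(W;Y_i|S^{i-1},Y_{i+1}^n)=I(\bar U_i;Y_i)-I(S^{i-1},Y_{i+1}^n;Y_i)$ and likewise for $S_i$; moreover, because $(W,S^{i-1})\perp S_i$ and $X_{2,i}$ is a function of $(W_c,S^{i-1})\subseteq\bar U_i$, one has $I(\bar U_i;S_i)=I(Y_{i+1}^n;S_i|W,S^{i-1})=I(\bar U_i;S_i|X_{2,i})$. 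Substituting, $I(W;Y^n)-I(W;S^n)$ equals $\sum_i[I(\bar U_i;Y_i)-I(\bar U_i;S_i|X_{2,i})]+\sum_i[I(S^{i-1},Y_{i+1}^n;S_i)-I(S^{i-1},Y_{i+1}^n;Y_i)]$, and it remains to show the second sum is non-positive. This follows since $S^{i-1}\perp S_i$ gives $I(S^{i-1},Y_{i+1}^n;S_i)=I(Y_{i+1}^n;S_i|S^{i-1})$, one more application of the Csisz\'ar identity gives $\sum_i I(Y_{i+1}^n;S_i|S^{i-1})=\sum_i I(S^{i-1};Y_i|Y_{i+1}^n)$, and $I(S^{i-1};Y_i|Y_{i+1}^n)\le I(S^{i-1},Y_{i+1}^n;Y_i)$ term by term. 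Finally $\bar V_i\subseteq\bar U_i$ and $X_{2,i}$ is measurable w.r.t.\ $\bar U_i$, so $(\bar U_i,\bar V_i,X_{2,i})$ and $(\bar U_i,\bar V_i)$ carry the same information as $\bar U_i$ and the expression is exactly the stated right-hand side.

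I expect the main obstacle to be the second inequality: the Csisz\'ar cancellations alone do not produce the target term by term, so one must rewrite the surviving single-letter mutual informations with the chain rule, reconcile $I(\bar U_i;S_i)$ with the conditioned form $I(\bar U_i;S_i|X_{2,i})$ via the independence of $X_{2,i}$ from $S_i$, and then dispose of a leftover sum through a further Csisz\'ar identity and a term-by-term comparison. Bookkeeping of which conditioning variable accompanies each Csisz\'ar application, together with consistent use of the ``$X_{2,i}$ depends only on $(W_c,S^{i-1})$'' structure, is the part requiring care; everything else is routine.
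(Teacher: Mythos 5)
Your proposal is correct and follows essentially the same route as the paper: both inequalities are single-letterized with the Csisz\'ar sum identity, the residual terms are identified with $\bar U_i=(W_1,\bar V_i)$, $\bar V_i=(W_c,S^{i-1},Y^n_{i+1})$, and the harmless insertion of $X_{2,i}$ uses exactly the facts that $X_{2,i}$ is a function of $(W_c,S^{i-1})$ and is independent of $S_i$. The only difference is organizational, in the sum-rate bound: where the paper applies the Csisz\'ar identity once and then invokes $H(S_i|S^{i-1})=H(S_i)$ before enlarging $I(W_c,W_1,S^{i-1};Y_i|Y^n_{i+1})$ to $I(W_c,W_1,S^{i-1},Y^n_{i+1};Y_i)$, you run the symmetric double splitting and dispose of the leftover sum with a further Csisz\'ar application --- both versions discard exactly the same slack $\sum_{i=1}^{n} I(Y^n_{i+1};Y_i)$.
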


\begin{proof} 
i) We show the first inequality in the lemma as follows.
{\allowdisplaybreaks
\begin{align}
I(W_1;&Y^n|W_c)-I(W_1;S^n|W_c) \\
       &= \sum_{i=1}^{n} I(W_1;Y_i|W_c,Y^n_{i+1})-I(W_1;S_i|W_c,S^{i-1})\\
       &= \sum_{i=1}^{n} I(W_1,S^{i-1};Y_i|W_c,Y^n_{i+1})-I(S^{i-1};Y_i|W_c,W_1,Y^n_{i+1})-I(W_1;S_i|W_c,S^{i-1})\\
       &= \sum_{i=1}^{n} I(W_1,S^{i-1};Y_i|W_c,Y^n_{i+1})-I(W_1;S_i|W_c,S^{i-1}) - \sum_{i=1}^{n} I(S^{i-1};Y_i|W_c,W_1,Y^n_{i+1})\\
       &\stackrel{(a)}{=} \sum_{i=1}^{n} I(W_1,S^{i-1};Y_i|W_c,Y^n_{i+1})-I(W_1;S_i|W_c,S^{i-1}) - \sum_{i=1}^{n} I(S_i;Y^n_{i+1}|W_c,W_1,S^{i-1})\\
       &= \sum_{i=1}^{n} I(W_1,S^{i-1};Y_i|W_c,Y^n_{i+1})-I(S_i;W_1,Y^n_{i+1}|W_c,S^{i-1})\\
       &= \sum_{i=1}^{n} I(W_1;Y_i|W_c,S^{i-1},Y^n_{i+1})+I(S^{i-1};Y_i|W_c,Y^n_{i+1})-I(S_i;Y^n_{i+1}|W_c,S^{i-1})-I(S_i;W_1|W_c,S^{i-1},Y^n_{i+1})\\
       &= \sum_{i=1}^{n} I(W_1;Y_i|W_c,S^{i-1},Y^n_{i+1})-I(S_i;W_1|W_c,S^{i-1},Y^n_{i+1})+\sum_{i=1}^{n} I(S^{i-1};Y_i|W_c,Y^n_{i+1})-\sum_{i=1}^{n} I(S_i;Y^n_{i+1}|W_c,S^{i-1})\\
       &\stackrel{(b)}{=} \sum_{i=1}^{n} I(W_1;Y_i|W_c,S^{i-1},Y^n_{i+1})-I(S_i;W_1|W_c,S^{i-1},Y^n_{i+1})\\
       &\stackrel{(c)}{=} \sum_{i=1}^{n} I(W_1;Y_i|W_c,S^{i-1},Y^n_{i+1},X_{2,i})-I(S_i;W_1|W_c,S^{i-1},Y^n_{i+1},X_{2,i})\\
       &\stackrel{(d)}{=} \sum_{i=1}^{n} I(\bar{U}_i;Y_i|\bar{V}_i,X_{2,i})-I(\bar{U}_i;S_i|\bar{V}_i,X_{2,i})
\label{MultiLetter__UpperBound__IndividualRate}
\end{align}}
where $(a)$ and $(b)$ follow from  Csisz\'ar and K\"orner's Sum Identities\cite{CK78}
\begin{align}
\label{CsiszarKorner__FirstTerm}
\sum_{i=1}^{n} I(S^{i-1};Y_i|W_c,W_1,Y^n_{i+1}) &= \sum_{i=1}^{n} I(S_i;Y^n_{i+1}|W_c,W_1,S^{i-1})\\
\sum_{i=1}^{n} I(S^{i-1};Y_i|W_c,Y^n_{i+1}) &= \sum_{i=1}^{n} I(S_i;Y^n_{i+1}|W_c,S^{i-1})
\label{CsiszarKorner__SecondTerm}
\end{align}
$(c)$ follows from the fact that $X_{2i}$ is a deterministic function of $(W_c,S^{i-1})$, and $(d)$ follows by the definition of the random variables $\bar{U}_i$ and $\bar{V}_i$ in \eqref{Definition__RabdomVariables__OuterBound}.

ii) Similarly, we show the second inequality in the lemma as follows.
{\allowdisplaybreaks
\begin{align}
I(W_c,W_1;&Y^n)-I(W_c,W_1;S^n) \\
       &= \sum_{i=1}^{n} I(W_c,W_1;Y_i|Y^n_{i+1})-I(W_c,W_1;S_i|S^{i-1})\\
       &= \sum_{i=1}^{n} I(W_c,W_1,S^{i-1};Y_i|Y^n_{i+1})-I(S^{i-1};Y_i|W_c,W_1,Y^n_{i+1})-I(W_c,W_1;S_i|S^{i-1})\\
       &= \sum_{i=1}^{n} I(W_c,W_1,S^{i-1};Y_i|Y^n_{i+1})-I(W_c,W_1;S_i|S^{i-1})-\sum_{i=1}^{n} I(S^{i-1};Y_i|W_c,W_1,Y^n_{i+1})\\
       &\stackrel{(e)}{=} \sum_{i=1}^{n} I(W_c,W_1,S^{i-1};Y_i|Y^n_{i+1})-I(W_c,W_1;S_i|S^{i-1})-\sum_{i=1}^{n} I(Y^n_{i+1};S_i|W_c,W_1,S^{i-1})\\
       &= \sum_{i=1}^{n} I(W_c,W_1,S^{i-1};Y_i|Y^n_{i+1})-I(W_c,W_1,Y^n_{i+1};S_i|S^{i-1})\\
       &=\sum_{i=1}^{n} I(W_c,W_1,S^{i-1};Y_i|Y^n_{i+1})-H(S_i|S^{i-1})+H(S_i|W_c,W_1,S^{i-1},Y^n_{i+1})\\
       &\stackrel{(f)}{=}\sum_{i=1}^{n} I(W_c,W_1,S^{i-1};Y_i|Y^n_{i+1})-H(S_i)+H(S_i|W_c,W_1,S^{i-1},Y^n_{i+1})\\
       &=\sum_{i=1}^{n} I(W_c,W_1,S^{i-1};Y_i|Y^n_{i+1})-I(W_c,W_1,S^{i-1},Y^n_{i+1};S_i)\\
       &\leq \sum_{i=1}^{n} I(W_c,W_1,S^{i-1},Y^n_{i+1};Y_i)-I(W_c,W_1,S^{i-1},Y^n_{i+1};S_i)\\
       &\stackrel{(g)}{=} \sum_{i=1}^{n} I(W_c,W_1,S^{i-1},Y^n_{i+1},X_{2i};Y_i)-I(W_c,W_1,S^{i-1},Y^n_{i+1},X_{2i};S_i)\\
       &\stackrel{(h)}{=} \sum_{i=1}^{n} I(\bar{U}_i,\bar{V}_i,X_{2i};Y_i)-I(\bar{U}_i,\bar{V}_i,X_{2i};S_i)
\label{MultiLetter__UpperBound__SumRate}
\end{align}}
where $(e)$ follows from Csisz\'ar and K\"orner's Sum Identity \eqref{CsiszarKorner__FirstTerm}; $(f)$ follows from the fact that the state $S^n$ is i.i.d.; $(g)$ follows from the fact that $X_{2i}$ is a deterministic function of $(W_c,S^{i-1})$, and $(h)$ follows by the definition of the random variables $\bar{U}_i$ and $\bar{V}_i$ in \eqref{Definition__RabdomVariables__OuterBound}.
\end{proof} 

We continue the proof of the converse. The decoder map $\psi$ recovers $(W_c,W_1)$ from $Y^n$ with vanishing average error probability $P^n_e$.  By Fano's inequality, we have
\begin{align}
        H(W_c,W_1|Y^n) \leq n\epsilon_n,
\label{FanoInequality}
\end{align}
where $\epsilon_n \rightarrow 0$ as $P_e^n \rightarrow 0$.

\noindent We can bound the individual rate as 
\begin{align}
 nR_1  &\leq H(W_1|W_c) \\
        &= I(W_1;Y^n|W_c)+H(W_1|Y^n,W_c)\\
	&\stackrel{(i)}{\leq} I(W_1;Y^n|W_c)+n\epsilon_n\\
	&\stackrel{(j)}{=} I(W_1;Y^n|W_c)-I(W_1;S^n|W_c)+n\epsilon_n\\
	&\stackrel{(k)}{=} \sum_{i=1}^{n} I(\bar{U}_i;Y_i|\bar{V}_i,X_{2,i})-I(\bar{U}_i;S_i|\bar{V}_i,X_{2,i})+n\epsilon_n
\end{align}
where $(i)$ follows by using \eqref{FanoInequality} and the fact that $H(W_1|W_c,Y^n) \leq H(W_c,W_1|Y^n)$; $(j)$ follows from the fact that the messages are independent of each other and of the state sequence; and $(k)$ follows by Lemma~\ref{LemmaProofOuterBound}.

\noindent Similarly, we can bound the sum rate as
\begin{align}
n(R_c+R_1)  &\leq H(W_c,W_1) \\
       &= I(W_c,W_1;Y^n)+H(W_c,W_1|Y^n)\\
       &\stackrel{(l)}{\leq} I(W_c,W_1;Y^n)+n\epsilon_n\\
       &\stackrel{(m)}{=} I(W_c,W_1;Y^n)-I(W_c,W_1;S^n)+n\epsilon_n\\
       &\stackrel{(n)}{=} \sum_{i=1}^{n} I(\bar{U}_i,\bar{V}_i,X_{2i};Y_i)-I(\bar{U}_i,\bar{V}_i,X_{2i};S_i),
\end{align}
where $(l)$ follows by \eqref{FanoInequality}; $(m)$ follows from the fact that the messages are independent of the state sequence; and $(n)$ follows by Lemma~\ref{LemmaProofOuterBound}.

From the above, we get that
\begin{align}
R_1 &\leq \frac{1}{n} \sum_{i=1}^{n} I(\bar{U}_i;Y_i|\bar{V}_i,X_{2,i})-I(\bar{U}_i;S_i|\bar{V}_i,X_{2,i})+\epsilon_n\nonumber\\
R_c+R_1 &\leq \frac{1}{n} \sum_{i=1}^{n} I(\bar{U}_i,\bar{V}_i,X_{2,i};Y_i)-I(\bar{U}_i,\bar{V}_i,X_{2,i};S_i)+\epsilon_n.
\label{MultiLetter__UpperBound__DiscreteChannel}
\end{align}
The statement of the converse follows now by applying to \eqref{MultiLetter__UpperBound__DiscreteChannel} the standard time-sharing argument and taking the limits of large $n$. This is shown briefly here. We introduce a random variable $T$ which is independent of $S$, and uniformly  distributed over $\{1,\cdots,n\}$. Set $S=S_T$, $\bar{U}=\bar{U}_T$, $\bar{V}=\bar{V}_T$, $X_1=X_{1,T}$, $X_2=X_{2,T}$, and $Y=Y_T$. Then, considering the first bound in \eqref{MultiLetter__UpperBound__DiscreteChannel}, we obtain
\begin{align}
\frac{1}{n} &\sum_{i=1}^{n} I(\bar{U}_i;Y_i|\bar{V}_i,X_{2,i})-I(\bar{U}_i;S_i|\bar{V}_i,X_{2,i})\nonumber\\
&= I(\bar{U};Y|\bar{V},X_2,T)-I(\bar{U};S|\bar{V},X_2,T)\nonumber\\
&= I(\bar{U},T;Y|\bar{V},X_2,T)-I(\bar{U},T;S|\bar{V},X_2,T).
\label{FirstTermUpperBound__WithTimeSharingVariable__DiscreteChannel}
\end{align}

\noindent Similarly, considering the second bound in \eqref{MultiLetter__UpperBound__DiscreteChannel}, we obtain
\begin{align}
\frac{1}{n} &\sum_{i=1}^{n} I(\bar{U}_i,\bar{V}_i,X_{2,i};Y_i)-I(\bar{U}_i,\bar{V}_i,X_{2,i};S_i)\nonumber\\
            &= I(\bar{U},\bar{V},X_2;Y|T)-I(\bar{U},\bar{V},X_2;S|T)\nonumber\\
	    &= I(T,\bar{U},\bar{V},X_2;Y)-I(T;Y)-I(T,\bar{U},\bar{V},X_2;S)+I(T;S)\nonumber\\
	    &\leq I(T,\bar{U},\bar{V},X_2;Y)-I(T,\bar{U},\bar{V},X_2;S).
\label{SecondTermUpperBound__WithTimeSharingVariable__DiscreteChannel}
\end{align}
The distribution on $(T,S,\bar{U},\bar{V},X_1,X_2,Y)$ from the given code is of the form
\begin{align}
P_{T,S,\bar{U},\bar{V},X_1,X_2,Y} &= Q_SP_TP_{X_2|T}P_{\bar{V}|X_2,S,T}P_{\bar{U},X_1|\bar{V},S,X_2,T}W_{Y|X_1,X_2,S}.
\label{MeasureOuterBound__with__TimeSharingRandomVariable}
\end{align}

\noindent Let us now define $U=(\bar{U},T)$ and $V=(\bar{V},T)$. Using \eqref{MultiLetter__UpperBound__DiscreteChannel}, \eqref{FirstTermUpperBound__WithTimeSharingVariable__DiscreteChannel} and \eqref{SecondTermUpperBound__WithTimeSharingVariable__DiscreteChannel}, we then get
\begin{align}
R_1 &\leq I(U;Y|V,X_2)-I(U;S|V,X_2)+\epsilon_n\nonumber\\
R_c+R_1 &\leq I(U,V,X_2;Y)-I(U,V,X_2;S)+\epsilon_n,
\end{align}
where the distribution on $(S,U,V,X_1,X_2,Y)$, obtained by marginalizing \eqref{MeasureOuterBound__with__TimeSharingRandomVariable} over the time sharing random variable $T$, satisfies $(S,U,V,X_1,X_2,Y) \in \mc P$.

So far we have shown that, for a given sequence of $(\epsilon_n,n,R_c,R_1)-$codes with $\epsilon_n$ going to zero as $n$ goes to infinity, there exist random variables $(S,U,V,X_1,X_2,Y) \in \mc P$ such that the rate pair $(R_c,R_1)$ essentially satisfies the inequalities in \eqref{CapacityRegionDiscreteMemorylessChannel}, i.e., $(R_c,R_1) \in \mc C$.

\noindent This completes the proof of the converse part and of Theorem~\ref{Theorem__CapacityRegionDiscreteMemorylessChannel}.

\renewcommand{\theequation}{C-\arabic{equation}}
\setcounter{equation}{0}  
\subsection{Proof of Theorem~\ref{Theorem__WynerZivBinningOptimality}}\label{appendixTheorem__WynerZivBinningOptimality}

The transmission takes place in $B$ blocks. The common message $W_c$ is divided into $B$ blocks $w_{c,1},\hdots,w_{c,B}$ of $nR_c$ bits each, and the individual messages $W_1$ is divided into $B$ blocks $w_{1,1},\hdots,w_{1,B}$ of $nR_1$ bits each. For convenience, we let $w_{c,B}=w_{1,B}=1$ (a default value). We thus have $B_{W_c}=n(B-1){R_c}$, $B_{W_1}=n(B-1){R_1}$, $N=nB$, $R_{W_c}=B_{W_c}/N=R_c{\cdot}(B-1)/B$ and $R_{W_1}=B_{W_1}/N=R_1{\cdot}(B-1)/B$, where $B_{W_c}$ is the number of common message bits, $B_{W_1}$ is the number of individual message bits, $N$ is the number of channel uses and $R_{W_c}$ and $R_{W_1}$ are the overall rates of the common and individual messages, respectively. For fixed $n$, the average rate pair $(R_{W_c}, R_{W_1})$ over $B$ blocks can be made as close to $(R_c,R_1)$ as desired by making $B$ large.

\noindent \textbf{Codebook Generation:} Fix a measure $P_{S,U,V,X_1,X_2,Y} \in \mc P$. Fix $\epsilon > 0$ and denote $M_c = 2^{n[R_c-\eta_c\epsilon]}$, $M_1 = 2^{n[R_1-\eta_1\epsilon]}$, $M_0 = 2^{n[R_0+\eta_0\epsilon]}$, $\hat{M} = 2^{n[\hat{R}+\hat{\eta}\epsilon]}$, $J=2^{n[I(U;S|V,X_2)+\delta_U\epsilon]}$.

\begin{itemize}
\item[1)] We generate $M_cM_0$ independent and identically distributed (i.i.d.) codewords $\dv x_2(w_c,s)$ indexed by $w_c=1,\hdots,R_c$, $s=1,\hdots,M_0$, each with i.i.d. components drawn according to $P_{X_2}$.
\item[2)] For each codeword $\dv x_2(w_c,s)$,  we generate $\hat{M}$ independent and identically distributed (i.i.d.) codewords $\dv v(w_c,s,z)$ indexed by $z=1,\hdots,\hat{M}$, each with i.i.d. components drawn according to $P_{V|X_2}$.
\item[3)] For each codeword $\dv x_2(w_c,s)$, for each codeword $\dv v(w_c,s,z)$, we generate a collection of $JM_1$ i.i.d. codewords $\{\dv u(w_c,s,z,w_1,j)\}$ indexed by $w_1=1,\hdots,M_1$, $j=1,\hdots,J$, each with i.i.d. components draw according to $P_{U|V,X_2}$.
\item[4)] Randomly partition the set $\{1,\hdots,\hat{M}\}$ into $M_0$ cells $\mc C_s$, $s \in [1,M_0]$.
\end{itemize}

\textbf{Encoding:} Suppose that a common message $W_c=w_c$ and an individual message $W_1=w_1$ are to be transmitted. As we mentioned previously, message $w_c$ is divided into $B$ blocks $w_{c,1},\hdots,w_{c,B}$ and message $w_1$ is divided into $B$ blocks $w_{1,1},\hdots,w_{1,B}$, with $(w_{c,i},w_{1,i})$ the pair messages sent in block $i$. We denote by $\dv s[i]$ the channel state in block $i$, $i=1,\hdots,B$. For convenience, we let $\dv s[0]=\phi$ and $z_0=1$ (a default value), and $s_0$ the index of the cell containing $z_0$, i.e., $z_0 \in \mc C_{s_0}$ . The encoding at the beginning of the block $i$, $i=1,\hdots,B$, is as follows.

\noindent Encoder $2$, which has learned the state sequence $\dv s[i-1]$, knows $s_{i-2}$ and looks for a compression index $z_{i-1} \in [1,\hat{M}]$ such that $\dv v(w_{c,i-1},s_{i-2},z_{i-1})$ is strongly jointly typical with $\dv s[i-1]$ and $\dv x_2(w_{c,i-1},s_{i-2})$. If there is no such index or the observed state $\dv s[i-1]$ is not typical, $z_{i-1}$ is set to $1$ and an error is declared. If there is more than one such index $z_{i-1}$, choose the smallest. One can show that the probability of error of this event is arbitrarily small provided that $n$ is large and
\begin{align}
\hat{R} &> I(V;S|X_2).
\end{align}

Encoder 2 then transmits the vector $\dv x_2(w_{c,i},s_{i-1})$, where $s_{i-1}$ is such that $z_{i-1} \in \mc C_{s_{i-1}}$.

\noindent Encoder 1 obtains $\dv x_2(w_{c,i},s_{i-1})$ similarly. It then finds the smallest compression index $z_i \in [1,\hat{M}]$ such that $\dv v(w_{c,i},s_{i-1},z_i)$ is strongly jointly typical with $\dv s[i]$ and $\dv x_2(w_{c,i},s_{i-1})$. Again, if there is no such index or the observed state $\dv s[i]$ is not typical, $z_i$ is set to $1$ and an error is declared. Let $s_i \in [1,M_0]$ such that $z_i \in \mc C_{s_i}$. Next, Encoder 1 looks for the smallest $j_{i}$ such that $\dv u(w_{c,i},s_{i-1},z_i,w_{1,i},j_{i})$ is jointly typical with $\dv s[i]$, $\dv x_2(w_{c,i},s_{i-1})$ and $\dv v(w_{c,i},s_{i-1},z_i)$. Denote this $j_{i}$ by $j^{\star}_{i}=j(\dv s[i],w_{c,i},s_{i-1},z_i,w_{1,i})$. If such $j^{\star}_{i}$ is not found, an error is declared and $j(\dv s[i],w_{c,i},s_{i-1},z_i,w_{1,i})$ is set to $j_{i}=J$. Encoder 1 then transmits a vector $\dv x_1[i]$ which is drawn i.i.d. conditionally given $\dv s[i]$,  $\dv u(w_{c,i},s_{i-1},z_i,w_{1,i},j^{\star}_{i})$, $\dv v(w_{c,i},s_{i-1},z_i)$ and $\dv x_2(w_{c,i},s_{i-1})$ (using the conditional measure $P_{X_1|S,U,V,X_2}$ induced by  $P_{S,U,V,X_1,X_2,Y} \in \mc P$).

\textbf{Decoding:} Let $\dv y[i]$ denote the information received at the receiver at block $i$, $i=1,\hdots,B$. The receiver collects these information until the last block of transmission is completed. The decoder then performs Willem's backward decoding \cite{W82}, by first decoding the pair $(w_{c,B-1},w_{1,B-1})$ from $\dv y[B-1]$.

\textit{1)Decoding in Block $B-1$:}

The decoding of the pair $(w_{c,B-1},w_{1,B-1})$ is performed in four steps, as follows.

\noindent \underline{\textit{Step (a):}} The decoder knows $w_{c,B}=1$ and looks for the unique cell index $\hat{s}_{B-1}$ such that the vector $\dv x_2(w_{c,B},\hat{s}_{B-1})$ is jointly typical with $\dv y[B]$.  The decoding operation in this step incurs small probability of error as long as $n$ is sufficiently large and
\begin{align}
R_0 &< I(X_2;Y).
\label{DecodingOfCellIndex}
\end{align}

\noindent \underline{\textit{Step (b):}} The decoder now knows $\hat{s}_{B-1}$ (i.e., the index of the cell in which the compression index $z_{B-1}$ lies). It then decodes message $w_{c,B-1}$ by looking for the unique $\hat{w}_{c,B-1}$ such that $\dv x_2(\hat{w}_{c,B-1},s_{B-2})$,  $\dv v(\hat{w}_{c,B-1},s_{B-2},z_{B-1})$, $\dv u(\hat{w}_{c,B-1},s_{B-2},z_{B-1},w_{1,B-1},j_{B-1})$ and $\dv y[B-1]$ are jointly typical for some $s_{B-2} \in [1,M_0]$, $w_{1,B-1} \in [1,M_1]$, $j_{B-1} \in [1,J]$ and $z_{B-1} \in \mc C_{\hat{s}_{B-1}}$. One can show that the decoder obtains the correct $w_{c,B-1}$ as long as $n$ and $B$ are large and
\begin{align}
R_0+(\hat{R}-R_0)+ R_c + R_1 &\leq I(U,V,X_2;Y)-I(U;S|V,X_2).
\label{Constraint__On__SumRate}
\end{align}

\noindent \underline{\textit{Step (c):}} The decoder knows $\hat{w}_{c,B-1}$ and can again obtain the correct $s_{B-2}$ if $n$ is large and \eqref{DecodingOfCellIndex} is true. This is accomplished by looking for the unique $\hat{s}_{B-2}$ such that the vector $\dv x_2(\hat{w}_{c,B-1},\hat{s}_{B-2})$ is jointly typical with $\dv y[B-1]$.

\noindent \underline{\textit{Step (d):}} Finally, the decoder, which now knows message  $\hat{w}_{c,B-1}$ and the cell index $\hat{s}_{B-2}$ (but not the exact compression index $z_{B-1}$), estimates $w_{1,B-1}$ using $\dv y[B-1]$. It declares that $\hat{w}_{1,B-1}$ was sent if there exists a unique $\hat{w}_{1,B-1}$ such that $\dv x_2(\hat{w}_{c,B-1},\hat{s}_{B-2})$, $\dv v(\hat{w}_{c,B-1},\hat{s}_{B-2},z'_{B-1})$, $\dv u(\hat{w}_{c,B-1},\hat{s}_{B-2},z'_{B-1},\hat{w}_{1,B-1},j_{B-1})$ and $\dv y[B-1]$ are jointly typical for some $z'_{B-1} \in \mc C_{\hat{s}_{B-1}}$ and $j_{B-1} \in [1,J]$.
\begin{itemize}
\item If $z'_{B-1} = z_{B-1}$, the decoder finds the correct $w_{1,b-1}$ for sufficiently large $n$ if
\begin{align}
R_1 &\leq I(U;Y|V,X_2)-I(U;S|V,X_2).
\label{Constraint1__On__IndividualRate}
\end{align}
\item If $z'_{B-1} \neq z_{B-1}$, the decoder finds the correct $w_{1,b-1}$ for sufficiently large $n$ if
\begin{align}
(\hat{R}-R_0)+R_1 &\leq I(U,V;Y|X_2)-I(U;S|V,X_2).
\label{Constraint2__On__IndividualRate}
\end{align}
\end{itemize}

\textit{2) Decoding in Block $b$, $b=B-1,B-2,\hdots,2$:}

Next, for $b$ ranging from $B-1$ to $2$, the decoding of the pair $(w_{c,b-1},w_{1,b-1})$ is performed similarly, in five steps, by using the information $\dv y[b]$ received in block $b$ and the information $\dv y[b-1]$ received in block $b-1$. More specifically, this is done as follows.

\noindent \underline{\textit{Step (a):}} The decoder knows $w_{c,b}$ and looks for the unique cell index $\hat{s}_{b-1}$ such that the vector $\dv x_2(w_{c,b},\hat{s}_{b-1})$ is jointly typical with $\dv y[b]$. The decoding error in this step is small for sufficiently large $n$ if \eqref{DecodingOfCellIndex} is true.

\noindent \underline{\textit{Step (b):}} The decoder knows $\hat{s}_{b-1}$ and decodes message $w_{c,b-1}$ from $\dv y[b]$. It looks for the unique $\hat{w}_{c,b-1}$ such that $\dv x_2(\hat{w}_{c,b-1},s_{b-2})$,  $\dv v(\hat{w}_{c,b-1},s_{b-2},z_{b-1})$, $\dv u(\hat{w}_{c,b-1},s_{b-2},z_{b-1},w_{1,b-1},j_{b-1})$ and $\dv y[b-1]$ are jointly typical for some $s_{b-2} \in [1,M_0]$, $w_{1,b-1} \in [1,M_1]$, $j_{b-1} \in [1,J]$ and $z_{b-1} \in \mc C_{\hat{s}_{b-1}}$. One can show that the decoding error in this step is small for sufficiently large $n$ if \eqref{Constraint__On__SumRate} is true.

\noindent \underline{\textit{Step (c):}} The decoder knows $\hat{w}_{c,b-1}$ and obtains $\hat{s}_{b-2}$ by looking for the unique $\hat{s}_{b-2}$ such that the vector $\dv x_2(\hat{w}_{c,b-1},\hat{s}_{b-2})$ is jointly typical with $\dv y[b-1]$. For sufficiently large $n$, the decoder obtains the correct $s_{b-2}$ with high probability if \eqref{DecodingOfCellIndex} is true.

\noindent \underline{\textit{Step (d):}} Finally, the decoder, which now knows message  $\hat{w}_{c,b-1}$ and the cell index $\hat{s}_{b-2}$ (but not the exact compression index $z_{b-1}$), estimates message $w_{1,b-1}$ using $\dv y[b-1]$. It declares that $\hat{w}_{1,b-1}$ was sent if there exists a unique $\hat{w}_{1,b-1}$ such that $\dv x_2(\hat{w}_{c,b-1},\hat{s}_{b-2})$, $\dv v(\hat{w}_{c,b-1},\hat{s}_{b-2},z'_{b-1})$, $\dv u(\hat{w}_{c,b-1},\hat{s}_{b-2},z'_{b-1},\hat{w}_{1,b-1},j_{b-1})$ and $\dv y[b-1]$ are jointly typical for some $z'_{b-1} \in \mc C_{\hat{s}_{b-1}}$ and $j_{b-1} \in [1,J]$.
\begin{itemize}
\item If $z'_{b-1} = z_{b-1}$, the decoder finds the correct $w_{1,b-1}$ for sufficiently large $n$ if \eqref{Constraint1__On__IndividualRate} is true.
\item If $z'_{b-1} \neq z_{b-1}$, the decoder finds the correct $w_{1,b-1}$ for sufficiently large $n$ if \eqref{Constraint2__On__IndividualRate} is true.
\end{itemize}

\textbf{Fourier-Motzkin Elimination:} From the above, we get that the error probability is small provided that $n$ is large and
\begin{subequations}
\begin{align}
R_0 &< I(X_2;Y)\\
\hat{R} &> I(V;S|X_2)\\
R_1 &\leq I(U;Y|V,X_2)-I(U;S|V,X_2)\\
(\hat{R}-R_0)+R_1 &\leq I(U,V;Y|X_2)-I(U;S|V,X_2)\\
R_c + R_1 + \hat{R} &\leq I(U,V,X_2;Y)-I(U;S|V,X_2).
\end{align}
\label{Theorem2RateRegion__Step1}
\end{subequations}
We now apply Fourier-Motzkin Elimination (FME) to project out  $R_0$ and $\hat{R}$ from \eqref{Theorem2RateRegion__Step1}. Projecting out $R_0$ from \eqref{Theorem2RateRegion__Step1}, we get
\begin{subequations}
\begin{align}
\hat{R} &> I(V;S|X_2)\\
R_1 &\leq I(U;Y|V,X_2)-I(U;S|V,X_2)\\
\label{Theorem2RateRegion__Step2__Ineq3}
\hat{R}+R_1 &\leq I(U,V,X_2;Y)-I(U;S|V,X_2)\\
R_c + R_1 + \hat{R} &\leq I(U,V,X_2;Y)-I(U;S|V,X_2).
\label{Theorem2RateRegion__Step2__Ineq4}
\end{align}
\label{Theorem2RateRegion__Step2}
\end{subequations}
Note that the inequality \eqref{Theorem2RateRegion__Step2__Ineq3} can be implied by \eqref{Theorem2RateRegion__Step2__Ineq4} since $R_c \geq 0$; and, so, is redundant in \eqref{Theorem2RateRegion__Step2}. Finally, projecting out $\hat{R}$ from the remaining system, we obtain
\begin{align}
R_1 &\leq I(U;Y|V,X_2)-I(U;S|V,X_2)\\
R_c + R_1 &\leq I(U,V,X_2;Y)-I(U,V,X_2;S).
\label{Theorem2RateRegion__Step4}
\end{align}

This completes the proof of Theorem~\ref{Theorem__WynerZivBinningOptimality}.

\renewcommand{\theequation}{D-\arabic{equation}}
\setcounter{equation}{0}  
\subsection{Proof of Corollary~\ref{Corollary__EquivalentCharacterizationCapacityRegionDiscreteMemorylessChannel}}\label{appendixCorollary__EquivalentCharacterizationCapacityRegionDiscreteMemorylessChannel}

\subsubsection{Converse Part}

Investigating the proof of Theorem~\ref{Theorem__CapacityRegionDiscreteMemorylessChannel} in Appendix~\ref{appendixTheorem__CapacityRegionDiscreteMemorylessChannel}, it can be seen that the auxiliary random variables $U$ and $V$ satisfy tacitly the condition
\begin{align}
I(V,X_2;Y)-I(V,X_2;S) &\geq 0.
\label{ConverseProof__ConstraintOuterBound__CompressionIndexDecoding}
\end{align}
This can be seen by noticing that (with the notation of Appendix~\ref{appendixTheorem__CapacityRegionDiscreteMemorylessChannel})
\begin{align}
I(W_1;Y^n|W_c) &= \sum_{i=1}^{n} I(\bar{U}_i;Y_i|\bar{V}_i,X_{2,i})-I(\bar{U}_i;S_i|\bar{V}_i,X_{2,i})\\
I(W_c,W_1;Y^n) &\leq \sum_{i=1}^{n} I(\bar{U}_i,\bar{V}_i,X_{2i};Y_i)-I(\bar{U}_i,\bar{V}_i,X_{2i};S_i).
\end{align}
and then observing that $I(W_1;Y^n|W_c) \leq I(W_c,W_1;Y^n)$, which together yield
\begin{align}
\sum_{i=1}^{n} I(\bar{V}_i,X_{2i};Y_i)-I(\bar{V}_i,X_{2i};S_i) &\geq 0;
\end{align}
and, so, after standard single-letterization, the condition \eqref{ConverseProof__ConstraintOuterBound__CompressionIndexDecoding}.

\subsubsection{Direct Part}

The codebook generation and the encoding process remain exactly as in the proof of Theorem~\ref{Theorem__WynerZivBinningOptimality} in Appendix~\ref{appendixTheorem__WynerZivBinningOptimality}. The decoding at the receiver is modified in a way to get the compression indices decoded uniquely, as follows (with the notation of Appendix~\ref{appendixTheorem__WynerZivBinningOptimality}). 

\textbf{Decoding:} Let $\dv y[i]$ denote the information received at the receiver at block $i$, $i=1,\hdots,B$. The receiver collects these information until the last block of transmission is completed. The decoder then performs Willem's backward decoding \cite{W82}, by first decoding the pair $(w_{c,B-1},w_{1,B-1})$ from $\dv y[B-1]$.

\textit{1) Decoding in Block $B-1$:}

The decoding of the pair $(w_{c,B-1},w_{1,B-1})$ is performed in five steps, as follows.

\noindent \underline{\textit{Step (a):}} The decoder knows $w_{c,B}=1$ and looks for the unique cell index $\hat{s}_{B-1}$ such that the vector $\dv x_2(w_{c,B},\hat{s}_{B-1})$ is jointly typical with $\dv y[B]$.  This decoding operation incurs small probability of error as long as $n$ is sufficiently large and
\begin{align}
R_0 &< I(X_2;Y).
\label{DecodingOfCellIndex__ProofTheorem2}
\end{align}

\noindent \underline{\textit{Step (b):}} The decoder now knows $\hat{s}_{B-1}$ (i.e., the index of the cell in which the compression index $z_{B-1}$ lies). It then decodes message $w_{c,B-1}$ by looking for the unique $\hat{w}_{c,B-1}$ such that $\dv x_2(\hat{w}_{c,B-1},s_{B-2})$,  $\dv v(\hat{w}_{c,B-1},s_{B-2},z_{B-1})$, $\dv u(\hat{w}_{c,B-1},s_{B-2},z_{B-1},w_{1,B-1},j_{B-1})$ and $\dv y[B-1]$ are jointly typical for some $s_{B-2} \in [1,M_0]$, $w_{1,B-1} \in [1,M_1]$, $j_{B-1} \in [1,J]$ and $z_{B-1} \in \mc C_{\hat{s}_{B-1}}$. One can show that the decoder obtains the correct $w_{c,B-1}$ as long as $n$ and $B$ are large and
\begin{align}
R_0+(\hat{R}-R_0)+ R_c + R_1 &\leq I(U,V,X_2;Y)-I(U;S|V,X_2).
\label{Constraint__On__SumRate}
\end{align}

\noindent \underline{\textit{Step (c):}} The decoder knows $\hat{w}_{c,B-1}$ and can again obtain the correct $s_{B-2}$ if $n$ is large and \eqref{DecodingOfCellIndex__ProofTheorem2} is true. This is accomplished by looking for the unique $\hat{s}_{B-2}$ such that the vector $\dv x_2(\hat{w}_{c,B-1},\hat{s}_{B-2})$ is jointly typical with $\dv y[B-1]$.

\noindent \underline{\textit{Step (d):}} The decoder calculates a set $\mc L(\dv y[B-1])$ of $z_{B-1}$ such that $z_{B-1} \in \mc L(\dv y[B-1])$ if $\dv v(\hat{w}_{c,B-1},\hat{s}_{B-2},z_{B-1})$, $\dv x_2(\hat{w}_{c,B-1},\hat{s}_{B-2})$, $\dv y[B-1]$ are jointly typical. It then declares that $z_{B-1}$ was sent in block $B-1$ if
\begin{align}
\hat{z}_{B-1} \in \mc C_{\hat{s}_{B-1}} \cap \mc L(\dv y[B-1]).
\end{align}
One can show that $\hat{z}_{B-1}=z_{B-1}$ with arbitrarily high probability provided that $n$ is sufficiently large and
\begin{align}
\hat{R} &< I(V;Y|X_2)+R_0.
\label{Constraint__Of__NonNegativity}
\end{align}
\noindent \underline{\textit{Step (e):}} Finally, the decoder, which now knows message  $\hat{w}_{c,B-1}$, the cell index $\hat{s}_{B-2}$ and the compression index $z_{B-1} \in \mc C_{\hat{s}_{B-1}}$, estimates $w_{1,B-1}$ using $\dv y[B-1]$. It declares that $\hat{w}_{1,B-1}$ was sent if there exists a unique $\hat{w}_{1,B-1}$ such that $\dv x_2(\hat{w}_{c,B-1},\hat{s}_{B-2})$, $\dv v(\hat{w}_{c,B-1},\hat{s}_{B-2},\hat{z}_{B-1})$,
$\dv u(\hat{w}_{c,B-1},\hat{s}_{B-2},\hat{z}_{B-1},\hat{w}_{1,B-1},j_{B-1})$ and $\dv y[B-1]$ are jointly typical for some $j_{B-1} \in [1,J]$. One can show that the decoder obtains the correct $w_{1,B-1}$ as long as $n$ is large and
\begin{align}
R_1 &\leq I(U;Y|V,X_2)-I(U;S|V,X_2).
\label{Constraint__On__IndividualRate}
\end{align}

\textit{2) Decoding in Block $b$, $b=B-1,B-2,\hdots,2$:}

Next, for $b$ ranging from $B-1$ to $2$, the decoding of the pair $(w_{c,b-1},w_{1,b-1})$ is performed similarly, in five steps, by using the information $\dv y[b]$ received in block $b$ and the information $\dv y[b-1]$ received in block $b-1$. More specifically, this is done as follows.

\noindent \underline{\textit{Step (a):}} The decoder knows $w_{c,b}$ and looks for the unique cell index $\hat{s}_{b-1}$ such that the vector $\dv x_2(w_{c,b},\hat{s}_{b-1})$ is jointly typical with $\dv y[b]$. The decoding error in this step is small for sufficiently large $n$ if \eqref{DecodingOfCellIndex__ProofTheorem2} is true.

\noindent \underline{\textit{Step (b):}} The decoder knows $\hat{s}_{b-1}$ and decodes message $w_{c,b-1}$ from $\dv y[b]$. It looks for the unique $\hat{w}_{c,b-1}$ such that $\dv x_2(\hat{w}_{c,b-1},s_{b-2})$,  $\dv v(\hat{w}_{c,b-1},s_{b-2},z_{b-1})$, $\dv u(\hat{w}_{c,b-1},s_{b-2},z_{b-1},w_{1,b-1},j_{b-1})$ and $\dv y[b-1]$ are jointly typical for some $s_{b-2} \in [1,M_0]$, $w_{1,b-1} \in [1,M_1]$, $j_{b-1} \in [1,J]$ and $z_{b-1} \in \mc C_{\hat{s}_{b-1}}$. One can show that the decoding error in this step is small for sufficiently large $n$ if \eqref{Constraint__On__SumRate} is true.

\noindent \underline{\textit{Step (c):}} The decoder knows $\hat{w}_{c,b-1}$ and obtains $\hat{s}_{b-2}$ by looking for the unique $\hat{s}_{b-2}$ such that the vector $\dv x_2(\hat{w}_{c,b-1},\hat{s}_{b-2})$ is jointly typical with $\dv y[b-1]$. For sufficiently large $n$, the decoder obtains the correct $s_{b-2}$ with high probability if \eqref{DecodingOfCellIndex__ProofTheorem2} is true.

\noindent \underline{\textit{Step (d):}} The decoder calculates a set $\mc L(\dv y[b-1])$ of $z_{b-1}$ such that $z_{b-1} \in \mc L(\dv y[b-1])$ if $\dv v(\hat{w}_{c,b-1},\hat{s}_{b-2},z_{b-1})$, $\dv x_2(\hat{w}_{c,b-1},\hat{s}_{b-2})$, $\dv y[b-1]$ are jointly typical. It then declares that $z_{b-1}$ was sent in block $b-1$ if
\begin{align}
\hat{z}_{b-1} \in \mc C_{\hat{s}_{b-1}} \cap \mc L(\dv y[b-1]).
\end{align}
One can show that, for large $n$, $\hat{z}_{b-1}=z_{b-1}$ with arbitrarily high probability provided that \eqref{Constraint__Of__NonNegativity} is true.

\noindent \underline{\textit{Step (e):}} Finally, the decoder knows message  $\hat{w}_{c,b-1}$, the cell index $\hat{s}_{b-2}$ and the compression index $z_{b-1} \in \mc C_{\hat{s}_{b-1}}$, and estimates $w_{1,b-1}$ using $\dv y[b-1]$. It declares that $\hat{w}_{1,b-1}$ was sent if there exists a unique $\hat{w}_{1,b-1}$ such that $\dv x_2(\hat{w}_{c,b-1},\hat{s}_{b-2})$, $\dv v(\hat{w}_{c,b-1},\hat{s}_{b-2},\hat{z}_{b-1})$, $\dv u(\hat{w}_{c,b-1},\hat{s}_{b-2},\hat{z}_{b-1},\hat{w}_{1,b-1},j_{b-1})$ and $\dv y[b-1]$ are jointly typical for some $j_{b-1} \in [1,J]$. One can show that the decoding error in this step is small for sufficiently large $n$ if \eqref{Constraint__On__IndividualRate} is true.

\textbf{Fourier-Motzkin Elimination:} From the above, we get that the error probability is small provided that $n$ is large and
\begin{subequations}
\begin{align}
R_0 &< I(X_2;Y)\\
\hat{R} &< I(V;Y|X_2)+R_0\\
\hat{R} &> I(V;S|X_2)\\
R_1 &\leq I(U;Y|V,X_2)-I(U;S|V,X_2)\\
R_c + R_1 + \hat{R} &\leq I(U,V,X_2;Y)-I(U;S|V,X_2).
\end{align}
\label{RateRegion__Step1}
\end{subequations}
Applying Fourier-Motzkin Elimination (FME) to project out  $\hat{R}$ and $R_0$ from \eqref{RateRegion__Step1}, we get
\begin{subequations}
\begin{align}
0 &\leq I(V,X_2;Y)-I(V,X_2;S)\\
R_1 &\leq I(U;Y|V,X_2)-I(U;S|V,X_2)\\
R_c + R_1 &\leq I(U,V,X_2;Y)-I(U,V,X_2;S).
\end{align}
\label{RateRegion__Step2}
\end{subequations}

\subsubsection{Bounds on $|\mc V|$ and $|\mc U|$}

It remains to show that the rate pair \eqref{EquivalentCharacterizationCapacityRegionDiscreteMemorylessChannel} is not altered if one restricts the random variables $V$ and $U$ to have their alphabet sizes limited as indicated in \eqref{BoundsOnCardinalityOfAuxiliaryRandonVariables__EquivalentCharacterizationCapacityRegion__DiscreteMemorylessChannel}. This is done by a standard application of the support lemma \cite[p. 310]{CK81}, essentially by following the lines in the proof of Theorem~\ref{Theorem__CapacityRegionDiscreteMemorylessChannel} in Appendix~\ref{appendixTheorem__CapacityRegionDiscreteMemorylessChannel} and noticing that, this time, because of the additional nonnegativity constraint, one more functional needs to be preserved in bounding the cardinality of $V$,
\begin{align}
I_{\mu}(V,X_2;Y)-I_{\mu}(V,X_2;S) = H_{\mu}(Y)-H_{\mu}(S)+H_{\mu}(X_2,S|V)-H_{\mu}(X_2,Y|V).\end{align}

This concludes the proof of Corollary~\ref{Corollary__EquivalentCharacterizationCapacityRegionDiscreteMemorylessChannel}.
\renewcommand{\theequation}{E-\arabic{equation}}
\setcounter{equation}{0}  
\subsection{Proof of Theorem~\ref{Theorem__AlternativeOuterBoundDiscreteMemorylessChannel}}\label{appendixTheorem__AlternativeOuterBoundDiscreteMemorylessChannel}

We prove that for any $(M_c,M_1,n,\epsilon)$ code consisting of a mapping $\phi_1:\mc W_c{\times}\mc W_1{\times}\mc S^n \longrightarrow \mc X^n_1$ at Encoder 1, a sequence of mappings $\phi_{2,i}: \mc W_c{\times}\mc S^{i-1} \longrightarrow \mc X_2$, $i=1,\hdots,n$, at Encoder 2, and a mapping $\psi : \mc Y^n \longrightarrow \mc W_c{\times}\mc W_1$ at the decoder with average error probability $P_e^n \rightarrow 0$ as $n \rightarrow 0$ and rates $R_c=n^{-1}\log_2M_c$ and $R_1=n^{-1}\log_2M_1$, the rate pair $(R_c,R_1)$ must satisfy \eqref{AlternativeOuterBoundDiscreteMemorylessChannel}.

 \noindent Fix $n$ and consider a given code of block length $n$. The joint probability mass function on $\mc W_c{\times}\mc W_1{\times}\mc S^n{\times}\mc X^n_1{\times}\mc X^n_2{\times}\mc Y^n$ is given by
 \begin{align}
 P(w_c,w_1,s^n,x^n_1,x^n_2,y^n)=P(w_c,w_1)\prod_{i=1}^nP(s_i)P(x_{1i}|w_c,w_1,s^n)P(x_{2i}|w_c,s^{i-1})P(y_i|x_{1i},x_{2i},s_i),
 \end{align}
 where, $P(x_{1i}|w_c,w_1,s^n)$ is equal $1$ if $x_{1i}=f_1(w_c,w_1,s^n)$ and $0$ otherwise; and $P(x_{2i}|w_c,s^{i-1})$ is equal $1$ if $x_{2i}=f_2(w_c,s^{i-1})$ and $0$ otherwise.

\noindent The proof of the bound on $R_1$ follows trivially by revealing the state $S^n$ to the decoder.

\noindent The proof of the bound on the sum rate $R_c+R_1$ is as follows. The decoder map $\psi$ recovers $(W_c,W_1)$ from $Y^n$ with vanishing average error probability. By Fano's inequality, we have
\begin{align}
H(W_c,W_1|Y^n) \leq n\epsilon_n,
\end{align}
where $\epsilon_n \rightarrow 0$ as $P_e^n \rightarrow 0$.

{\allowdisplaybreaks
\begin{align}
n(R_c+R_1) &= H(W_c,W_1)\nonumber\\
        & = I(W_c,W_1;Y^n)+H(W_c,W_1|Y^n)\nonumber\\
	& \stackrel{(a)}{\leq}I(W_c,W_1;Y^n)+n\epsilon_n\nonumber\\
	& = I(W_c,W_1,S^n;Y^n)-I(S^n;Y^n|W_c,W_1)+n\epsilon_n\nonumber\\
	& = \Big(\sum_{i=1}^{n}I(W_c,W_1,S^n;Y_i|Y^{i-1})\Big)-H(S^n|W_c,W_1)+H(S^n|W_c,W_1,Y^n)+n\epsilon_n\nonumber\\
	& \stackrel{(b)}{=} \sum_{i=1}^{n} H(Y_i|Y^{i-1})-H(Y_i|W_c,W_1,S^n,Y^{i-1})-H(S_i)+H(S_i|W_c,W_1,Y^n,S^{i-1})+n\epsilon_n\nonumber\\
	& \stackrel{(c)}{\leq} \sum_{i=1}^{n} H(Y_i)-H(Y_i|X_{1,i},X_{2,i},S_i)- H(S_i)+H(S_i|W_c,W_1,Y^n,S^{i-1},X_{2,i})+n\epsilon_n\nonumber\\
	& \stackrel{(d)}{\leq} \sum_{i=1}^{n} I(X_{1,i},X_{2,i},S_i;Y_i)-H(S_i)+H(S_i|X_{2,i},Y_i)+n\epsilon_n\nonumber\\
	& = \sum_{i=1}^{n} I(X_{1,i},X_{2,i},S_i;Y_i)-I(S_i;X_{2,i},Y_i) +n\epsilon_n\nonumber\\
	& = \sum_{i=1}^{n} I(X_{1,i},X_{2,i};Y_i|S_i)-I(S_i;X_{2,i}|Y_i)+n\epsilon_n,
\label{BoundOnSumRate}
\end{align}}
where: $(a)$ follows from Fano's inequality; $(b)$ follows from the fact that the state $S^n$ is i.i.d. and is independent of the messages; $(c)$ follows from $(W_c,W_1,S^n,Y^{i-1}) \leftrightarrow (X_{1,i},X_{2,i},S_i) \leftrightarrow Y_i$, and the fact that $X_{2,i}$ is a deterministic function of $(W_c,S^{i-1})$; and $(d)$ follows from the fact that conditioning reduces entropy.

Finally, we obtain the desired bound from \eqref{BoundOnSumRate} by standard single-letterization \cite{CK81}.

\renewcommand{\theequation}{F-\arabic{equation}}
\setcounter{equation}{0}  
\subsection{Proof of Corollary~\ref{Corollary__CommonMessageCapacity}}\label{appendixCorollary__CommonMessageCapacity}

Relaxing the constraint on $R_1$ in Theorem~\ref{Theorem__CapacityRegionDiscreteMemorylessChannel}, we obtain
\begin{align}
C &= \max I(U,V,X_2;Y)-I(U,V,X_2;S)
\end{align}
where the maximization is over joint measures $P_{S,U,V,X_1,X_2,Y}$  of the form
\begin{align}
P_{S,U,V,X_1,X_2,Y} &= Q_SP_{X_2}P_{V|S,X_2}P_{U,X_1|S,V,X_2}.
\end{align}

\noindent The corollary then follows by substituting $K=(U,V)$, and noticing that the distribution on $(S,K,X_1,X_2,Y)$ is given by
\begin{align}
P_{S,K,X_1,X_2,Y} &= P_{S,U,V,X_1,X_2,Y}\\
                  &= Q_SP_{X_2}P_{V|S,X_2}P_{U,X_1|S,V,X_2}\\
                  &= Q_SP_{X_2}P_{U,V|S,X_2}P_{X_1|S,U,V,X_2}\\
                  &= Q_SP_{X_2}P_{K|S,X_2}P_{X_1|S,K,X_2}.
\end{align}

\renewcommand{\theequation}{G-\arabic{equation}}
\setcounter{equation}{0}  
\subsection{Proof of Theorem~\ref{Theorem__CapacityRegionMemorylessGaussianChannel}}\label{appendixTheorem__CapacityRegionMemorylessGaussianChannel}

\subsubsection{Direct Part} The achievability follows by ignoring the strictly causal part of the state at Encoder 2, and using the generalized dirty paper coding scheme of \cite[Theorem 7]{SBSV07a}.

\subsubsection{Converse Part} For the converse part, we use the outer bound of Theorem~\ref{Theorem__AlternativeOuterBoundDiscreteMemorylessChannel} for the discrete MAC which can be readily extended to memoryless channels with discrete time and continuous alphabets using standard techniques \cite{G68}. Then, we obtain an outer bound on the capacity region of the Gaussian MAC in terms of the closure of the convex hull of the set of rate pairs $(R_c,R_1)$ satisfying
\begin{align}
R_1 \: &\leq \: I(X_1;Y|S,X_2), \nonumber\\
R_c+ R_1 \: &\leq \: I(X_1,X_2;Y|S)-I(X_2;S|Y),
\label{OuterBoundGaussianChannel}
\end{align}
for some probability distribution of the form $P_{S,X_1,X_2,Y}=Q_SP_{X_2}P_{X_1|X_2,S}W_{Y|X_1,X_2,S}$ such that $\mathbb{E}[X^2_1] \leq P_1$ and $\mathbb{E}[X^2_2] \leq P_2$. The rest of the converse proof follows by reasoning and using algebra similar to in the proofs of \cite[Theorem 7]{SBSV07a} and \cite[Theorem 4]{ZKLV10}, and is omitted for brevity.

\bibliographystyle{IEEEtran}
\bibliography{Draft__InitialSubmission__OneColumn}

\begin{thebibliography}{10}
\providecommand{\url}[1]{#1}
\csname url@rmstyle\endcsname
\providecommand{\newblock}{\relax}
\providecommand{\bibinfo}[2]{#2}
\providecommand\BIBentrySTDinterwordspacing{\spaceskip=0pt\relax}
\providecommand\BIBentryALTinterwordstretchfactor{4}
\providecommand\BIBentryALTinterwordspacing{\spaceskip=\fontdimen2\font plus
\BIBentryALTinterwordstretchfactor\fontdimen3\font minus
  \fontdimen4\font\relax}
\providecommand\BIBforeignlanguage[2]{{%
\expandafter\ifx\csname l@#1\endcsname\relax
\typeout{** WARNING: IEEEtran.bst: No hyphenation pattern has been}%
\typeout{** loaded for the language `#1'. Using the pattern for}%
\typeout{** the default language instead.}%
\else
\language=\csname l@#1\endcsname
\fi
#2}}

\bibitem{BPS98}
E.~Biglieri, J.~Proakis, and S.~{Shamai (Shitz)}, ``Fading channels:
  Information-theoretic and communication aspects,'' \emph{IEEE Trans. Inf.
  Theory}, vol.~44, pp. 2619--2692, Oct. 1998.

\bibitem{Sh58}
C.~E. Shannon, ``Channels with side information at the transmitter,'' \emph{IBM
  journal of Research and Development}, vol.~2, pp. 289--293, Oct. 1958.

\bibitem{GP80}
S.~I. Gel'fand and M.~S. Pinsker, ``Coding for channel with random
  parameters,'' \emph{Problems of Control and Information Theory}, vol.~9, pp.
  19--31, 1980.

\bibitem{KSM08}
G.~Keshet, Y.~Steinberg, and N.~Merhav, ``Channel coding in the presence of
  side information: subject review,'' \emph{Foundations and Trends in
  Communications and Information Theory}, 2008.

\bibitem{SBSV07a}
A.~Somekh-Baruch, S.~{Shamai (Shitz)}, and S.~Verd\`u, ``Cooperative multiple
  access encoding with states available at one transmitter,'' \emph{IEEE Trans.
  Inf. Theory}, vol.~54, pp. 4448--4469, Oct. 2008.

\bibitem{KL07a}
S.~Kotagiri and J.~N. Laneman, ``Multiaccess channels with state known to some
  encoders and independent messages,'' \emph{EURASIP Journal on Wireless
  Commnunications and Networking}, vol. Article ID 450680.
  doi:10.1155/2008/450680, 2008.

\bibitem{ZKLV09a}
A.~Zaidi, S.~Kotagiri, J.~N. Laneman, and L.~Vandendorpe, ``Multiaccess
  channels with state known to one encoder: Another case of degraded message
  sets,'' in \emph{Proc. {IEEE} Int. Symp. Information Theory}, Seoul, Korea,
  Jun.-Jul. 2009, pp. 2376--2380.

\bibitem{KELW07}
A.~Khisti, U.~Erez, A.~Lapidoth, and G.~Wornell, ``Carbon copying onto dirty
  paper,'' \emph{{IEEE} Trans. Inf. Theory}, vol.~53, pp. 1814--1827, May 2007.

\bibitem{PKEZ07}
T.~Philosoph, A.~Khisti, U.~Erez, and R.~Zamir, ``Lattice strategies for the
  dirty multiple access channel,'' in \emph{Proc. {IEEE} Int. Symp. Information
  Theory}, Nice, France, Jun. 2007, pp. 386--390.

\bibitem{ZKLV08a}
A.~Zaidi, S.~Kotagiri, J.~N. Laneman, and L.~Vandendorpe, ``Cooperative
  relaying with state at the relay,'' in \emph{Proc. {IEEE} Information Theory
  Workshop}, Porto, Portugal, May 2008, pp. 139--143.

\bibitem{ZKLV10}
------, ``Cooperative relaying with state available non-causally at the
  relay,'' \emph{IEEE Trans. Inf. Theory}, vol.~56, pp. 2272--2298, May 2010.

\bibitem{ZV09b}
A.~Zaidi and L.~Vandendorpe, ``Lower bounds on the capacity of the relay
  channel with states at the source,'' \emph{EURASIP Journal on Wireless
  Commnunications and Networking}, vol. Article ID 634296.
  doi:10.1155/2009/634296, 2009.

\bibitem{CS05}
Y.~Cemal and Y.~Steinberg, ``The multiple-access channel with partial state
  information at the encoders,'' \emph{IEEE Trans. Inf. Theory}, vol. IT-51,
  pp. 3992--4003, Nov. 2005.

\bibitem{S05}
Y.~Steinberg, ``Coding for the degraded broadcast channel with random
  parameters, with causal and noncausal side information,'' \emph{{IEEE} Trans.
  Inf. Theory}, vol. IT-51, pp. 2867--2877, Aug. 2005.

\bibitem{LS10a}
A.~Lapidoth and Y.~Steinberg, ``The multiple access channel with causal and
  strictly causal side information at the encoders,'' in \emph{Proc. Int.
  Zurich Seminar on Communications (IZS)}, Zurich, Switzerland, Mar. 2010, pp.
  13--16.

\bibitem{LS10b}
------, ``The multiple access channel with two independent states each known
  causally at one encoder,'' in \emph{Proc. {IEEE} Int. Symp. Information
  Theory}, Austin, TX, USA, Jun. 2010, pp. 480--484.

\bibitem{PSS11}
H.~Permuter, S.~{Shamai (Shitz)}, and A.~Somekh-Baruch, ``Message and state
  cooperation in multiple access channels,'' \emph{IEEE Trans. Inf. Theory},
  vol.~57, pp. 6379--6396, Oct. 2011.

\bibitem{LSY10}
M.~Li, O.~Simeone, and A.~Yener, ``Multiple access channels with states
  causally known at transmitters,'' \emph{Submitted for publication in IEEE
  Trans. Inf. Theory. Available at \url{http://arxiv.org/abs/1011.6639}}, 2010.

\bibitem{LSY11}
------, ``Message and state cooperation in a relay channel when only the relay
  knows the state,'' \emph{Submitted for publication in IEEE Trans. Inf.
  Theory. Available at \url{http://arxiv.org/abs/1102.0768}}, 2011.

\bibitem{AMA09}
B.~Akhbari, M.~Mirmohseni, and M.~R. Aref, ``Compress-and-forward strategy for
  the relay channel with non-causal state information,'' in \emph{Proc. {IEEE}
  Int. Symp. Information Theory}, Seoul, Korea, Jun.-Jul. 2009, pp. 1169--1173.

\bibitem{KS09a}
M.~N. Khormuji and M.~Skoglund, ``On cooperative downlink transmission with
  frequency reuse,'' in \emph{Proc. {IEEE} Int. Symp. Information Theory},
  Seoul, Korea, Jun.-Jul. 2009, pp. 849--853.

\bibitem{CY11a}
G.~Como and S.~Y{\"u}ksel, ``On the capacity of memoryless finite state
  multiple-access channels with asymmetric state information at the encoders,''
  \emph{Accepted for publication in IEEE Trans. Inf. Theory. Available at
  \url{http://arxiv.org/abs/1011.1012.1912}}, 2011.

\bibitem{SCYA11a}
N.~\c{S}en, G.~Como, S.~Y{\"u}ksel, and F.~Alajaji, ``On the capacity of
  memoryless finite-state multiple access channels with asymmetric noisy state
  information at the encoders,'' in \emph{Proc. {IEEE} Int. Symp. Information
  Theory 2011, submitted for publication. Available at
  \url{http://arxiv.org/abs/1103.3054}}, 2011.

\bibitem{K-FM11a}
R.~Khosravi-Farsani and F.~Marvasti, ``Capacity bounds for multiuser channels
  with non-causal channel state information at the transmitters,'' in
  \emph{Proc. {IEEE} Int. Symp. Information Theory 2011, submitted for
  publication. Available at \url{http://arxiv.org/abs/1102.3410}}, 2011.

\bibitem{BL10}
S.~I. Bross and A.~Lapidoth, ``The state-dependent multiple-access channel with
  states available at a cribbing encoder,'' in \emph{Proc. of IEEE 26-th
  Convention of Electrical and Electronics Engineers in Israel}, Israel, 2010.

\bibitem{J06}
S.~Jafar, ``Capacity with causal and noncausal side-information: A unified
  view,'' \emph{IEEE Trans. Inf. Theory}, vol.~52, pp. 5468--5474, Dec. 2006.

\bibitem{SK05}
S.~Sigurjonsson and Y.~H. Kim, ``On multiple user channels with state
  information at the transmitters,'' in \emph{Proc. {IEEE} Int. Symp.
  Information Theory}, Sep. 2005.

\bibitem{Sh56}
C.~E. Shannon, ``The zero error capacity of a noisy channel,'' \emph{IRE Trans.
  on Inf. Theory}, vol.~2, pp. 8--19, 1956.

\bibitem{D80}
G.~Dueck, ``Partial feedback for two-way and broadcast channels,'' \emph{Inf.
  Contr.}, vol.~46, pp. 1--15, 1980.

\bibitem{H-LKGC11}
S.~H. Lim, Y.-H. Kim, A.~E. Gamal, and S.-Y. Chung, ``Noisy network coding,''
  \emph{IEEE Trans. Inf. Theory}, vol.~57, pp. 3132--3152, May 2011.

\bibitem{WZ76}
A.~D. Wyner and J.~Ziv, ``The rate-distortion function for source coding with
  side information at the decoder,'' \emph{{IEEE} Trans. Inf. Theory}, vol.~22,
  pp. 1--10, Jan. 1976.

\bibitem{LS11a}
A.~Lapidoth and Y.~Steinberg, ``A note on multiple access channels with
  strictly causal state information,'' in \emph{available at
  \url{http://arxiv.org/abs/1106.0380}}, Jun. 2011.

\bibitem{ZPS11a}
A.~Zaidi, P.~Piantanida, and S.~{Shamai (Shitz)}, ``Multiple access channel
  with states known noncausally at one encoder and only strictly causally at
  the other encoder,'' in \emph{Proc. {IEEE} Int. Symp. Information Theory,
  submitted for publication}, 2011.

\bibitem{WX11a}
X.~Wu and L.-L. Xie, ``On the optimal compressions in the compress-and-forward
  relay schemes,'' \emph{IEEE Trans. Inf. Theory, submitted for publication.
  Available \url{http://arxiv.org/abs/1009.5959}}, Feb. 2011.

\bibitem{ZHV11a}
P.~Zhong, A.~A. Haija, and M.~Vu, ``On compress-and-forward without wyner-ziv
  binning for relay networks,'' \emph{IEEE Trans. Inf. Theory, submitted for
  publication. Available \url{http://arxiv.org/abs/1111.2837}}, Nov. 2011.

\bibitem{KH11a}
G.~Kramer and J.~Hou, ``On message lengths for noisy network coding,'' in
  \emph{Proc. {IEEE} Information Theory Workshop}, Praty, Brasil, Oct. 2011.

\bibitem{CG79}
T.~M. Cover and A.~{El Gamal}, ``Capacity theorems for the relay channel,''
  \emph{IEEE Trans. Inf. Theory}, vol. IT-25, pp. 572--584, Sep. 1979.

\bibitem{KS09b}
M.~Katz and S.~{Shamai (Shitz)}, ``Cooperative schemes for a source and an
  occasional nearby relay in wireless networks,'' \emph{IEEE Trans. Inf.
  Theory}, vol.~55, pp. 5139--5160, Nov. 2009.

\bibitem{PCR03}
S.~S. Pradhan, J.~Chou, and K.~Ramchandran, ``Duality between source coding and
  channel coding and its extension to the side information case,'' \emph{{IEEE}
  Trans. Inf. Theory}, vol. IT-49, pp. 1181--1203, May 2003.

\bibitem{C83}
M.~H.~M. Costa, ``Writing on dirty paper,'' \emph{{IEEE} Trans. Inf. Theory},
  vol.~29, pp. 439--441, May 1983.

\bibitem{CT91}
T.~M. Cover and J.~A. Thomas, \emph{Elements of Information Theory}.\hskip 1em
  plus 0.5em minus 0.4em\relax New York: John Willey \& Sons INC., 1991.

\bibitem{CK81}
{I. Csisz\'ar} and {J. K\"orner}, \emph{Information Theory: Coding Theorems for
  Discrete Memoryless Systems}.\hskip 1em plus 0.5em minus 0.4em\relax London,
  U. K.: Academic Press, 1981.

\bibitem{GK10}
A.~E. Gamal and Y.-H. Kim, \emph{Lecture Notes on Network Information
  Theory}.\hskip 1em plus 0.5em minus 0.4em\relax Available at
  \url{http://arxiv.org/abs/1001.3404}, 2010 [on line].

\bibitem{CK78}
{I. Csisz\'ar} and {J. K\"orner}, ``Broadcast channels with confidential
  messages,'' \emph{IEEE Trans. Inf. Theory}, vol.~24, pp. 339--348, 1978.

\bibitem{W82}
F.~M.~J. Willems, \emph{Informationtheoretical Results for the Discrete
  Memoryless Multiple Access Channel}.\hskip 1em plus 0.5em minus 0.4em\relax
  Leuven, Belgium: Doctor in de Wetenschappen Proefschrift dissertation, Oct.
  1982.

\bibitem{G68}
R.~G. Gallager, \emph{Information Theory and Reliable Communication}.\hskip 1em
  plus 0.5em minus 0.4em\relax New York: John Willey, 1968.

\end{thebibliography}
\end{document}